\theoremstyle{plain}
\newtheorem{exmp}{Example}[section]
\newtheorem{rem}{Remark}
\newtheorem{thm}{Theorem}[section]
\newtheorem{lem}[thm]{Lemma}
\newtheorem{prop}[thm]{Proposition}
\newtheorem{cor}{Corollary}
\newtheorem{hypo}{Assumption}
\theoremstyle{definition}
\newtheorem*{defn}{Definition}
\newcommand{\I}{\mathrm{\mathbf{I}}}
\newcommand{\0}{\mathbf{0}}
\newcommand{\card}[1]{|#1|}
\DeclareMathOperator*{\argmin}{arg\,min}
\renewcommand{\ker}[1]{\mathrm{Ker}\left\{#1\right\}}
\newcommand{\im}[1]{\mathrm{Im}\left\{#1\right\}}
\newcommand{\real}{\mathbb{R}}
\newcommand{\smallmat}[1]{\left[ \begin{smallmatrix}#1\end{smallmatrix} \right]} 
\renewcommand\H{\mathrm{H}}
\newcommand\N{\0} 
\newcommand{\Span}[1]{\mathrm{span}\left(#1\right)}
\newcommand{\fs}[1]{\mathrm{\bf #1}} 
\newcommand{\es}[1]{\mathcal{#1}} 
\newcommand{\locglobUnc}{Prop.~2.1}
\newcommand{\characIR}{Th.~3.2}
\newcommand{\matview}{Subsec.~4.1}
\newcommand{\secdeg}{Subsec.~4.2}
\newcommand{\sing}{IR}
\newcommand{\Sing}{IR}
\newcommand{\singy}{IR}
\newcommand{\Singy}{IR}
\newif\ifext
\begin{document}
	\makeatletter
	\let\@mkboth\@gobbletwo
	\let\@oddhead\thepage
	\let\@evenhead\thepage
	\makeatother


    \title{Input Redundancy under Input and State Constraints \ifext{\newline(Extended version of \cite{TREGOUET2024})}\fi}


    \author{Jean-François Trégouët}
    \address{Univ Lyon, INSA Lyon, Universit\'{e} Claude Bernard Lyon 1, Ecole Centrale de Lyon, CNRS, Amp\`{e}re UMR5005, F-69621 Villeurbanne, France}
    \email{jean-francois.tregouet@insa-lyon.fr}
 
    \author{Jérémie Kreiss}
 	\address{Universit\'e de Lorraine, CNRS, CRAN, F-54000 Nancy, France.}
 	\email{jeremie.kreiss@univ-lorraine.fr}
 	\date{\today}
 	\keywords{Input redundant systems; Control allocation; Over actuated systems; Control of constrained systems; Measurement and actuation}
    




\maketitle

	{\tt
	
	This document is an extended version of the following paper accepted for publication in Automatica:
	
	\begin{center}\it
		
		Jean-François Trégouët and Jérémie Kreiss. Input redundancy under input and state constraints. \begin{it}
				Automatica
			\end{it}, vol. 159, p. 111344, 2024, ISSN 0005-1098, https://doi.org/10.1016/j.automatica.2023.111344.

	\end{center}
}

\textcolor{blue}{Additional material are printed in blue}.

   \begin{abstract}
	For a given unconstrained dynamical system, input redundancy has been recently redefined as the existence of distinct inputs producing identical output for the same initial state. By directly referring to signals, this definition readily applies to any input-to-output mapping. As an illustration of this potentiality, this paper tackles the case where input and state constraints are imposed on the system. This context is indeed of foremost importance since input redundancy has been historically regarded as a way to deal with input saturations.
	An example illustrating how constraints can challenge redundancy is offered right at the outset. A more complex phenomenology is highlighted. This motivates the enrichment of the existing framework on redundancy. Then, a sufficient condition for redundancy to be preserved when imposing constraints is offered in the most general context of arbitrary constraints. It is shown that redundancy can be destroyed only when input and state trajectories lie on the border of the set of constraints almost all the time. Finally, those results are specialized and expanded under the assumption that input and state constraints are linear.
\end{abstract}

\section{Introduction}

This paper is a follow-up of \cite{Kreiss2020}, where new definitions, taxonomy and characterizations are proposed around the notion of input redundancy (IR).\footnote{Throughout this paper, IR stands either for ``input redundancy'' or for ``input redundant'' depending on the context.} This paper expands on this topic, by tackling the case where the system is affected by input and state constraints. \ifext\else{An extended version of this paper, including additional examples, pictures, comments and results, is available \cite{Tregouet2023}.}\fi

Many systems are equipped by more actuators than strictly needed to meet the control objectives. Such a design has many technological advantages: Examples include state-of-health and/or thermal management, resilience to failure, enhanced control capabilities, see \cite{Tregoueet2019,Bouarfa2017,Tregouet2015,Johansen2013,Huang2007} to cite a few. Those systems are over-actuated. They belong to the class of IR systems.

Since the early nineties, literature devoted to IR systems have kept growing, see \cite{Durham1993,Enns1998,Harkegard2005,Zaccarian2009,Johansen2013,Cocetti2018} and references therein. Many control designs have been proposed in various fields, including aerospace and aeronautics \cite{Tregouet2015,Oppenheimer2010}, marine vessels \cite{Fossen2006} and, more recently, power electronics \cite{Kreiss2019,Tregoueet2019,Bouarfa2017,Kreiss2021}. In addition to that, significant contributions have been obtained on the methodological side \cite{Galeani2015,Serrani2012,Zaccarian2009,Johansen2004,Kreiss2020}. The goal is to enrich control theory by generalizing and formalizing ad hoc achievements.

Along this last line, the efforts toward a clear definition of IR, together with tractable characterizations, are of foremost importance. This topic is indeed crucial, as it impacts not only how dynamical systems are classified but also how the related control problem is tackled \cite{Kreiss2020}.
This paper contributes in this direction. It is concerned with the linear, proper and time-invariant dynamical system $\Sigma$ governed by the following equations
\begin{subequations} \label{eq:sys}
    \begin{align}\label{eq:sysAB}
        \dot{x}(t) & = Ax(t)+Bu(t), \; x(0)\eqqcolon x_0, \\ \label{eq:sysCD}
        y(t)       & = Cx(t)+Du(t),
    \end{align}
\end{subequations}
for some quadruple $(A,B,C,D)$ of appropriate dimensions. Here, vectors $u(t)\in\real^m$, $x(t)\in\real^n$ and $y(t)\in\real^p$ are the input, the state and the output at time $t$, respectively.

For the time being, let us assume that $\Sigma$ is not affected by any constraints. For this class of system, there exist various definitions of IR. A brief overview of them is now offered, starting from the observation that the literature is divided into two branches. Each one has its own definition of IR.
\begin{enumerate}[(i)]
    \item \emph{Input-to-state methods} focus on \eqref{eq:sysAB} and apply for non injective matrix $B$. The first step is to decompose $B u$ as $B_\tau \tau$ with $\tau = B_u u$ where $B_\tau$ is injective, i.e. $B$ is factorized as $B_\tau B_u$ \cite{Harkegard2005}. Signal $\tau$ usually refers to the overall contribution of the inputs. Then, a two-layers control scheme is designed: The high level controller delivering $\tau$ feeds the low-level allocator which select the optimal input $u$ among the ones that satisfy $\tau = B_u u$. The allocator is as follows:
          \begin{equation}\label{eq:optim}
              \tau \mapsto u(\tau)\in\argmin_{\hat u} J(\hat u) \;\text{s.t.}\; B_u \hat u = \tau,
          \end{equation}
          where $J$ is a real-valued cost function. Usually, no closed-form expression of $\tau\mapsto u(\tau)$ exists, so that computation of $u$ is performed online, either by solving \eqref{eq:optim} parametrized by $\tau$ at each instant time \cite{Johansen2013}, or by implementing a controller converging to the optimal solution \cite{Johansen2004}.
          In this framework, a system is IR if this strategy can be implemented, that is, if $B$ is non-injective.
    \item \emph{Input-to-output methods} enlarge the scope of the analysis by dealing with the input-to-output mapping, i.e. \eqref{eq:sysCD} is now taken into account \cite{Galeani2015,Serrani2012,Zaccarian2009,Kreiss2020}. From the control point of view, an important achievement of this line of research is \cite{Galeani2015}, where the output regulation problem is revisited by adding an optimizer to the classical control structure. This optimizer selects the best steady-state among those achieving exact output regulation. Various definitions of IR underpin those contributions. Ultimately, it is proposed in \cite{Kreiss2020} to define IR as the negation of left-invertibility, i.e. IR means that the input-to-output mapping is \emph{not} injective for some $x_0$.\footnote{Interested reader is referred to \cite[Sec.~5]{Kreiss2020} for a comprehensive comparison of the main definitions of IR coexisting in this line of research.}
\end{enumerate}
To sum up, IR is defined as non injectivity of either the input-to-state or the input-to-output mappings. For strictly proper system, note that the latter is more general than the former: If the input-to-state mapping is non injective, then the input-to-output mapping also enjoys the same property, a fortiori. To put it in another way, the utmost merit of the input-to-output methods is to consider the case of distinct inputs leading to the same output for the same initial state, even if the induced state trajectories are distinct.

So far, it is assumed that $\Sigma$ is free from input or state constraints. However, from the very beginning, IR has been considered as a way to deal with input constraints by redistributing overall control effort among actuators to avoid saturation. Earliest input-to-state methods add the following condition to the set of constraints of \eqref{eq:optim}, see \cite{Durham1993,Bordignon1995,Enns1998} and the survey \cite{Johansen2013}:
\begin{subequations}\label{eq:impstatconst}
    \begin{equation} \label{eq:impconst}
        u(t)\in\es{U},
    \end{equation}
    where $\es U \subseteq \real^m$ is a given set.
    By doing so, IR is implicitly characterized by the fact that the feasible set $\mathcal F(t)\coloneqq\{\hat u:B\hat u=\tau(t)\}\cap\mathcal U$ does not reduce to an one-point set, at least for some $t$. Subsequently, crucial questions are the followings: How to predict a priori that this condition holds? If one can prove that it does not, can this condition be achieved by redesigning the high level controller delivering $\tau(t)$? Or, by considering either a different reference signal to be tracked and/or a different initial condition? Answers to those questions are rather involved. Firstly, because the high level controller is usually assumed to be given and is therefore out of the scope of the analysis. Secondly, because of the interplay between the two levels of the controller.

    Some of the input-to-output methods also treat the input constraints. In \cite{Zaccarian2009}, the so-called allocator block injects an additional input signal which is invisible from the output and such that the resulting input vector remains within the saturation limits. Authors of \cite{Valmorbida2013,Galeani2013} offer methods for the computation of the zero-error steady-state solutions of the regulation problem, under input constraints. For the same problem, an online optimizer scheme is proposed in \cite{Galeani2015}, with the goal of promoting steady-state inputs with the smallest infinity norm. In the discrete-time case, a model predictive control scheme is offered in \cite{Zhou2016a}, to ensure that the internal dynamics comply with state and input constraints.
    All those contributions are valuable attempts toward control design handling input and state constraints. They share the following common pattern, though: First, IR is defined by referring to the \emph{unconstrained} context and, second, the control methodology is exposed by referring to input constraints. As a result, it cannot be ensured that the proposed methods apply to the class of IR systems characterized beforehand.

    This is in stark contrast with \cite{Kreiss2020} where IR is redefined as non injectivity of the input-to-output mapping, i.e. an output $y$ together with an initial state $x_0$ does not uniquely determine corresponding input $u$. Unlike the other definitions, this new formulation refers to signals with the aim of facilitating extensions out of the class of unconstrained linear time-invariant systems. This paper illustrates this potentiality, by considering not only input but also state constraints, i.e. both \eqref{eq:impconst} and the following condition hold for all $t\in\real_{\geq 0}$:
    \begin{equation}
        x(t)\in\es{X},
    \end{equation}
\end{subequations}
where $\es{X}\subseteq\real^n$ is a given set. In the sequel, definitions proposed in \cite{Kreiss2020} are applied \emph{verbatim} to the constrained context.

Then, an immediate question is to ask whether this non uniqueness of the input remains valid when considering other pairs $(x_0,y)$?
In the unconstrained context where $\mathcal{U}=\real^m$ and $\mathcal{X}=\real^n$ hold, the answer is positive, regardless of $(x_0,y)$ (see \cite[\locglobUnc]{Kreiss2020}). In the constrained context, the following example explicitly shows that the answer can be negative, so that constraints can challenge redundancy.

\begin{exmp}\label{ex:XUbound} 
    Consider the following system
    \[
        \dot x= - x + \begin{bmatrix} 1 & 1 \end{bmatrix} u,\quad y = x + \begin{bmatrix} 1 & 0 \end{bmatrix} u,
    \]
    for which each input is enforced to be non-negative, whereas state trajectory is unconstrained, i.e. $u(t)\in\mathcal U=\real_{\geq 0}^2$ and $x(t)\in\mathcal X=\real$
    for all $t\in\real_{\geq 0}$. Observe that input-to-output mapping is fully captured via the following equation on which $y$ acts as a parameter and $u\eqqcolon[u_a,u_b]^\intercal$ are the unknown functions:
    \begin{equation} \label{eq:XUbound}
        \dot y + y - u_b = \dot u_a + 2 u_a,
    \end{equation}
    with $u_a(0)=y(0)-x(0)$. Consider the following case study:
    \begin{enumerate}[(i)]
        \item Define $x_{0,1}\coloneqq -1$ and $y_1:t\mapsto -e^{-t}$. In this case, $u_1\eqqcolon[u_{1,a},u_{1,b}]^\intercal=\0$ is the unique input satisfying $u(t)\in\mathcal U$ leading to $y_1$ when $x(0)=x_{0,1}$. Indeed, \eqref{eq:XUbound} reduces to $- u_b = \dot u_a + 2 u_a$ with $u_a(0)=0$, so that strictly positive value of $u_b(t)$ induces violation of constraint $u_a(t)\geq 0$. Hence, $u_{1}(t)$ must equal $\0$ at all time.
        \item Keep $x(0)=x_{0,1}$ and choose $y_2=\0$. Clearly, either $u_2:t\mapsto [e^{-2t},0]^\intercal$ or $u_3:t\mapsto [e^{-3t},e^{-3t}]^\intercal$ produce $y_2$ and are compatible with the constraints.
        \item Define $x_{0,2}=0$ and let $y_2=\0$ be unchanged. For the same reason as in (i), output $y_2$ can only be produced by $u_1=\0$ for $x(0)=x_{0,2}$.
    \end{enumerate}
    To sum up, when $x(0)=x_{0,1}$, output $y_2$ can be {produced} by multiple inputs. On the contrary, substituting initial state by $x_{0,2}$ or output by $y_1$ makes the input unique. Therefore, the ability of designing different inputs giving rise to the same output depends on both output trajectory $y$ and initial condition~$x_0$.
\end{exmp}
This example shows that constraints give rise to unseen phenomena when instantaneous value of input and state vectors are free, as in \cite{Kreiss2020}.

From the above discussion, IR and input constraints are intrinsically related research fields. If the former is still at its infancy by many aspects, the latter benefits from solid results on classical notions of control theory. Together with stabilizability, the concept of controllability have probably monopolized most of the attention of the control community working on input constrained dynamical systems, see \cite{Brammer1972} and any standard textbook on this topic like e.g. \cite{Tarbouriech2011,Hu2001,Saberi2002}. Roughly speaking, this notion is related to the existence of a suitable input trajectory. On the contrary, IR deals with uniqueness of this input. The key point is that the question of existence received much more attention than the one of uniqueness, by far.

The bottom line of the previous discussion is that, to our best knowledge, IR has been neither defined nor characterized for systems subject to input or state constraints. In this paper, this challenge is tackled head-on. By doing so, it brings closer the literature dedicated to IR and the one devoted to input and state constraints.

Main contributions of this paper are now exposed.
1)~Framework introduced in \cite{Kreiss2020} is first enriched by new definitions to handle the more complex phenomenology on IR in the constrained context, as partially highlighted by Ex.~\ref{ex:XUbound}.
2)~Then, a sufficient condition for $(x_0,y)$ to be compatible with distinct inputs is derived, see Th.~\ref{th:singNL}. This can be considered as the main achievement of this paper. To arrive at this result, a rigorous incremental point of view is adopted. Specifically, the inputs $u_1$ and $u_2$ leading to the same output are described as $u_1$ and $u_1+\tilde u$ with $\tilde u = u_2-u_1$, respectively. Let us emphasize that this analysis is conducted for \emph{arbitrary} input and state constraint sets $\mathcal{U}$ and $\mathcal{X}$ (e.g. they can be neither convex nor connected).
3)~The case where $\mathcal{U}$ and $\mathcal{X}$ are linear is then treated as a byproduct of this analysis. In this case, a comprehensive characterization of IR as well as its taxonomy is derived. The concept of degree of IR, as defined in \cite{Kreiss2020}, is also generalized.
In a nutshell, the overall contribution of this paper is an extensive discussion on how $\mathcal{U}$ and $\mathcal{X}$ impact properties of IR associated with the corresponding unconstrained system.

Section~II sets the stage of this study. Section~III enriches conceptual framework on IR introduced in \cite{Kreiss2020}.
Main results are offered in Section~IV, in the general context where $\mathcal{U}$ and $\mathcal{X}$ are arbitrary.
Section~V focuses on the particular case where input and state constraints are linear.

\paragraph*{Notations}
Symbols $\wedge$, $\vee$ and $\lnot$ stand for logical operators ``and'', ``or'' and ``not'', respectively. Symbol $\0$ stands for anything that is not a real number and is zero (a vector, matrix, map, or subspace), according to context. 
Identity matrix is denoted by $\I$.
Set $\sigma (A)$ is the spectrum of square matrix $A$.
Cardinality of a set $S$, denoted by $\card{S}$, is said to be greater than or equal to $\alpha\in\mathbb N$ if $S$ admits at least $\alpha$ distinct members. 
In particular, if $\fs{S}\subseteq\{u:\real\rightarrow\real^n\}$ is a function space, then $\card{\fs{S}}>\alpha$ means that there exist $u_1,\ldots,u_\alpha\in\fs{S}$ such that every $u_i,u_j$ differ on a strictly positive measure set, i.e. \mbox{$\textstyle\int{\left\Vert u_i(\tau)-u_j(\tau) \right\Vert d\tau}>0$} for all $i\neq j\in\{1,\ldots,\alpha\}$. Signals $u_1, u_2$ are equal if $\card{\{u_1,u_2\}}=1$ so that $u_1(t)= u_2(t)$ holds for almost all $t$.
Subsets (not necessarily vector/linear spaces) of {Euclidean} space are denoted by script symbols, e.g. $\mathcal U$ or $\mathcal X$. {$\text{int}(\es{X})$} refers to the interior of $\es{X}$. Apart from $\I$, most of the time capital bold letters refer to function space like e.g. $\fs{U}=\{u:\real_{\geq 0}\rightarrow \mathcal U\}$. 
Denote by $\fs{C}(I,\mathcal C)$ (resp. $\fs{PC}(I,\mathcal C)$) 
the set of continuous functions (resp. piecewise continuous functions) 
from $I\subseteq\real$ to $\mathcal C$. 
Set $S-a$ reads $\{s-a:s\in S\}$, so that $a\in S$ is equivalent to $\0\in S-a$. Given set $\mathcal A\subseteq\real^n$ and matrix $B$ (not necessarily invertible or square) with $n$ lines, set $B^{-1}\mathcal A$ reads $\{u:Bu\in\mathcal A\}$.
With slight abuse of notation, relationship $x\in \mathcal X$, where $x(\cdot)$ is a function of time, means $x(t)\in \mathcal X$ for all $t$ in the domain of $x(\cdot)$. Laplace transform of $u$ is denoted by $s\mapsto \mathfrak L[u](s)$.

\section{Context of the study}

This section inherits and enriches the notations and the framework of \cite{Kreiss2020}. We refer the reader to \cite{Kreiss2020} for details, bearing in mind that sets $\fs U$, $\fs Q$ and $\fs W$ of \cite{Kreiss2020} are here renamed as $\fs U_\Sigma$, $\fs Q_\Sigma$ and $\fs W_\Sigma$, respectively.

\subsection{Linear system with constraints}

From \eqref{eq:sysAB}, the input-to-state relationship is concisely captured via $\H_x[x_0;\cdot]$ which maps an input trajectory $u(\cdot)$ to the state trajectory $x(\cdot)$ produced by the system when excited by $u(\cdot)$ with an initial condition $x(0)=x_0\in \real^n$. The input-to-output mapping $\H[x_0;\cdot]:u\mapsto C\H_x[x_0;u]+Du$ derives from \eqref{eq:sysCD}.

Throughout this paper, inputs $u$ are assumed to belong to $\fs{U}_\Sigma$, the set of causal, piecewise continuous and exponentially bounded signals.\footnote{By causal signal, we mean a signal which is zero for all strictly negative time instant.} This ensures that the Laplace transform of $u$, as well as that of corresponding $x=\H_x[x_0;u]$ and $y=\H[x_0;u]$, exist, whatever is $x_0$. Since $x$ and $y$ are continuous and piecewise continuous, respectively, one can define $\fs{X}_\Sigma\coloneqq\fs{C}(\real_{\geq 0},\real^n)$ and $\fs{Y}\coloneqq \fs{PC}(\real_{\geq 0},\real^p)$ as the codomain of $\H_x[x_0;\cdot]$ and $\H[x_0;\cdot]$, respectively.

Let $x_0\in\real^n$ be a given initial condition. The set of all triples $(u,x,y)$ (resp. pairs $(u,y)$) \emph{compatible for $x_0$} is denoted by $\fs{Q}_\Sigma(x_0)$ (resp. $\fs{W}_\Sigma (x_0)$), i.e.
\begin{align*}
    \fs{Q}_\Sigma (x_0) & \coloneqq  \{(u,x, y)\mid\H_x[x_0;u]= x,\; \H[x_0;u]=y\},            \\
    \fs{W}_\Sigma (x_0) & \coloneqq  \{(u,y)\mid \exists x : (u,x, y)\in\fs{Q}_\Sigma (x_0)\}.
\end{align*}

\ifext{\color{blue}

    Let us already introduce Fig.~\ref{fig:graph1} supporting forthcoming definitions and discussions. This picture should be regarded as a ``running'' informal sketch the reader can refer to, to understand concepts introduced throughout the rest of this section. For didactical purpose, assume that every triple compatible with $\Sigma$ is depicted by Fig.~\ref{fig:graph1}, so that $\fs{Q}_\Sigma(x_0)$ has finite cardinality. The reader is also referred to Tab.~\ref{tab:sets}, where main notations are collected together. Note that some of them are introduced in the sequel.

    \begin{figure}
        \centering
        \includegraphics[width=.8\columnwidth]{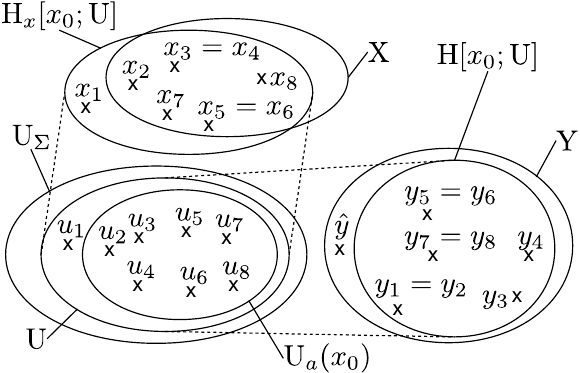}
        \caption{Graphical illustration of input/state/output triples associated with $\Sigma$ and $\mathfrak S$ for some fixed $x_0\in\es{X}$. It holds $\fs{Q}_\Sigma(x_0)=\{(u_i,x_i,y_i)_{i\in\{1,\ldots,8\}}\}$.} \label{fig:graph1}
    \end{figure}

}\fi

In contrast with \cite{Kreiss2020}, this paper deals with dynamical system $\mathfrak S$ deriving from $\Sigma$ by imposing input and state constraints. For this reason, input and state signals of $\mathfrak S$ belong to sets $\fs{U}$ and $\fs{X}$ defined as follows:
\begin{align*}
    \fs{U} & \coloneqq  \{u:\real_{\geq 0}\rightarrow \es{U}\} \cap \fs{U}_\Sigma, \\
    \fs{X} & \coloneqq  \{x:\real_{\geq 0}\rightarrow \es{X}\}\cap \fs{X}_\Sigma.  
\end{align*}
As a result, the set $\fs{Q}(x_0)$ of triple $(u,x,y)$ compatible with $\mathfrak S$ derives from $\fs{Q}_\Sigma(x_0)$ by excluding trajectories which violate constraints, i.e.
\begin{equation} \label{eq:Qdef}
    \fs{Q}(x_0) \coloneqq  \fs{Q}_\Sigma(x_0) \cap (\fs{U} \times \fs{X} \times \fs{Y}).
\end{equation}
Thus, $(u,x,y)\in\fs{Q}(x_0)$ implies that $(u(t),x(t))\in\es{U}\times\es{X}$ holds for all non-negative time~$t$. Similarly, denote by $\fs{W}(x_0)$ the set of input-to-output pair compatible for $x_0$ with $\mathfrak S$, i.e. $(u,y)\in\fs{W}(x_0)$ if there exists $x$ such that $(u,x,y)\in\fs{Q}(x_0)$.

\ifext{\color{blue}
    \begin{table}
        \centering
        \caption{\label{tab:sets} Nomenclature}
        \begin{tabular}{|c|c|c|}
            \hline
                                         & $\Sigma$               & $\mathfrak{S}$              \\ \hline
            Input and state constraints  & $\real^m\times\real^n$ & $\es{U}\times\es{X}$        \\
            Input-output pairs           & $\fs{W}_\Sigma(x_0)$   & $\fs{W}(x_0)$               \\
            Input-state-output triples   & $\fs{Q}_\Sigma(x_0)$   & $\fs{Q}(x_0)$               \\
            Admissible inputs            & $\fs{U}_\Sigma$        & $\fs{U}_a(x_0)$             \\
            Admissible initial condition & $\real^n$              & $\es{X}_c$                  \\
            Admissible pairs $(x_0,y)$   & $\es{X}\times\fs{Y}$   & $\fs{A}$                    \\
            Input-to-output mapping      & $\H[x_0;\cdot]$        & $\mathfrak{H}[x_0;\cdot]$   \\
            Input-to-state mapping       & $\H_x[x_0;\cdot]$      & $\mathfrak{H}_x[x_0;\cdot]$ \\\hline
        \end{tabular}
    \end{table}
}\fi

\subsection{Admissible signals}

In general, constraints \eqref{eq:impstatconst} induce specific difficulties, like the one highlighted by the following example.
\begin{exmp}\label{ex:Xc}
    Consider the dynamical system $\dot x = x+u$ with $\mathcal X=\mathcal U=[0,1]$. Unique solution reads $x:t\mapsto x_0 e^t + \int^t_0 e^{t-\tau}u(\tau) \mathrm{d}\tau$. Therefore, if $0<x_0\leq 1$ holds, then $x(t)$ will inevitably escape from $\mathcal X$, whatever is $u\in\fs{U}$. Hence, there is no $u\in\fs{U}$ such that $x\in\fs{X}$ holds. On the contrary, selecting $x_0=0$ and $u=\0$ yields $x=\0\in\fs{X}$.
\end{exmp}
As illustrated by the previous example, depending on initial state $x_0$ and set $\es{U}\times\es{X}$, it might happen that no state trajectory lying identically in $\es{X}$ exists, whatever is $u\in\fs{U}$. For this reason, denote by $\mathcal X_c$ the set of all initial conditions $x_0$ for which there exists at least one compatible input-to-output pair: 
\begin{equation*}
    \mathcal X_c \coloneqq  \{x_0\in\real^n: \card{\fs{W} (x_0) }\geq 1\} \subseteq \mathcal X.
\end{equation*}
For Ex.~\ref{ex:Xc}, $\mathcal X_c$ equals $\{\0\}$, a proper subset of $\mathcal X$.
\ifext{\color{blue}
    Note that $\mathcal X_c\subseteq \mathcal X$ necessarily holds since $x_0\notin \mathcal X$ will violate inclusion $x(t)\in \mathcal X$ for $t=0$.
}\fi

An input signal $u$ (resp. output signal $y$) is said \emph{admissible for $x_0$} with $\mathfrak S$ if there exists an output $y$ (resp. an input $u$) for which $(u,y)$ is compatible for $x_0$ with $\mathfrak S$.
Let $\fs{U}_a(x_0)\subseteq \fs{U}_\Sigma$ be the set of admissible input trajectories for $x_0$ with $\mathfrak S$, i.e. those which comply with input \emph{and} state constraints:
\begin{equation*}
    \fs{U}_a(x_0) \coloneqq  \{u\in\fs{U} \mid \exists y:(u,y)\in\fs{W}(x_0)\}.
\end{equation*}
Clearly, $\fs{U}_a(x_0)$ is non empty if and only if (iff) $x_0$ belongs to $\mathcal X_c$. For Ex.~\ref{ex:Xc}, $\fs{U}_a(x_0)$ equals $\varnothing$ (resp. $\{\0\}$) for $x_0\in]0,1]$ (resp. $x_0=0$, since every input distinct from $\0$ will drive $x(t)$ out of $\es{X}$).

Let $\fs{A}$ be the set of \emph{admissible pair} $(x_0,y)\in\es X \times \fs Y$ for which $y$ is admissible for $x_0$, i.e.
\[
    \fs A \coloneqq  \{ (x_0,y) \in \es{X}\times\fs{Y} \mid \exists u: (u,y)\in\fs W(x_0)\}.
\]
Unless otherwise specified, compatibility and admissibility shall be understood ``\emph{with} $\mathfrak S$''.

Finally, input-to-state mapping $\mathfrak{H}_x[x_0;\cdot]$ (resp. input-to-output mapping $\mathfrak{H}[x_0;\cdot]$) associated with $\mathfrak S$ and parametrized by $x_0$, is naturally defined as the restriction of $\H_x[x_0;\cdot]$ (resp. $\H[x_0;\cdot]$) to $\fs{U}_a(x_0)$, i.e.
\begin{align*}
    \mathfrak{H}_x[x_0;\cdot] & \coloneqq  \H_x[x_0;\cdot] \mid \fs{U}_a(x_0), \\
    \mathfrak{H}[x_0;\cdot]   & \coloneqq  \H[x_0;\cdot] \mid \fs{U}_a(x_0).
\end{align*}

\ifext{\color{blue}

    \subsection{Graphical illustration of input/state/output mappings}

    First comments about Fig.~\ref{fig:graph1} are as follows. (i) $\fs{U}$ (resp. $\fs{U}_a(x_0)$) derives from $\fs{U}_\Sigma$ by ruling out signals which do not comply with input constraints (resp. input and state constraints), so that $\fs{U}_a(x_0)\subseteq \fs{U}\subseteq \fs{U}_\Sigma$ holds regardless of $x_0\in\es{X}$. (ii) It holds $(u_i,x_i,y_i)\in\fs{Q}(x_0)$ for all $i$ but $i=1$ since $x_1\not\in \fs{X}$ which means that state constraints are violated when exerting input $u_1$. (iii) For the same reason, the image of $\fs{U}$ by $\H_x[x_0;\cdot]$ is not contained in $\fs{X}$ in general. (iv) $\H[x_0;\fs{U}]$ is, in general, a proper subset of $\fs{Y}$, as illustrated by the existence of $\hat y\in\fs{Y}\setminus \H[x_0;\fs{U}]$, i.e. there is no $u$ such that $(u,\hat y)\in\fs{W}(x_0)$. Existence of $\hat y$ is possible since $\Sigma$ is not assumed to be right-invertible and input space $\fs{U}_\Sigma$ does not contain distributions (see e.g. \cite[Chap. 8]{Trentelman2001}). (v) For analogous reasons, $\H_x[x_0;\fs{U}]$ does not contain $\fs{X}$ in general. (vi) Inequality $y_3\neq y_4$ holds even if $x_3$ equals $x_4$. This situation can be encountered, whenever $\Sigma$ is proper but not strictly proper.

}\fi

\section{Enriched conceptual framework on IR}

In this section, we enrich definitions and taxonomy associated to IR proposed in \cite{Kreiss2020}, in order to cope with the more delicate context of constrained dynamics where redundancy depends on both the output $y$ and the initial state $x_0$, see Ex.~\ref{ex:XUbound}. 

\subsection{\Sing{} pairs}

\begin{defn}[\Sing{} pair]
    A pair $(x_0,y)\in \mathcal X \times \fs{Y}$ is \emph{\sing{}} if output $y$ can be produced by (at least) two distinct inputs for initial condition $x_0$, i.e.
    \begin{equation}\label{eq:singPair}
        \begin{array}{r}
            \exists (u_1,y_1),(u_2,y_2)\in\fs{W}(x_0) :\hspace*{6em} \\
            u_1\neq u_2,\; y_1= y_2=y.
        \end{array}
    \end{equation}
\end{defn}

Let $\fs S\subseteq \es{X}\times\fs{Y}$ be the set of \sing{} pairs, i.e.
\begin{equation*}
    \normalfont{\bf S} \coloneqq  \{ (x_0,y) \in \es{X}\times\fs{Y}  : \eqref{eq:singPair}\}.
\end{equation*}
The following inclusion chain holds:
\begin{equation*}
    \fs S \; \subseteq \; \fs A \; \subseteq \; \es{X}\times\fs{Y}.
\end{equation*}
It suggests (i) that $\es{X}\times\fs{Y}$ might contain non admissible pairs $(x_0,y)$ (in this case, $\fs A$ is a proper subset of $\es{X}\times\fs{Y}$) and (ii) that an admissible pair $(x_0,y)\in\fs A$ might be non \sing{} (in this case, $\fs S$ is a proper subset of $\fs{A}$).  Finally, inclusion $\fs S \subseteq \fs A$ means that every \sing{} pair $(x_0,y)$ is admissible.

Let us now introduce state trajectories into \eqref{eq:singPair}: Given a pair $(x_0,y)\in \mathcal X \times \fs{Y}$, define the following relationships
\begin{align} \label{eq:kind1}
    \begin{array}{r}
        \exists (u_1,x_1,y_1),(u_2,x_2,y_2)\in\fs{Q}(x_0) :\hspace*{4em} \\
        u_1\neq u_2,x_1= x_2, y_1= y_2=y,
    \end{array} \\\label{eq:kind2}
    \begin{array}{r}
        \exists (u_1,x_1,y_1),(u_2,x_2,y_2)\in\fs{Q}(x_0) :\hspace*{4em} \\
        u_1\neq u_2,x_1\neq x_2, y_1= y_2=y.
    \end{array}
\end{align}
Observe that if $(x_0,y)$ is \sing{} then at least one of the above relationships holds.

Making use of \eqref{eq:kind1} and \eqref{eq:kind2}, instead of \eqref{eq:singPair}, allows for investigating origin of \singy{}. This gives rise to a taxonomy distinguishing \sing{} pairs.
\begin{defn}[\Sing{} pair of the $k$-th kind] \label{prop:pairkind}
    Let $(x_0,y)\in \mathcal X \times \fs{Y}$ be an \sing{} pair. It is of:
    \begin{itemize}
        \item The \emph{1st kind} if \eqref{eq:kind1} holds but \eqref{eq:kind2} does not, i.e. if the following implication
              \begin{equation} \label{eq:kind1prop}
                  \left.\begin{array}{r}
                      u_1\neq u_2 \\
                      y = y_1 = y_2
                  \end{array}\right\}
                  \Rightarrow
                  x_1= x_2
              \end{equation}
              holds for all $(u_1,x_1,y_1),(u_2,x_2,y_2)\in\fs{Q}(x_0)$;
        \item The \emph{2nd kind} if \eqref{eq:kind1} does not hold but \eqref{eq:kind2} does, i.e. the following implication
              \begin{equation} \label{eq:kind2prop}
                  \left.\begin{array}{r}
                      u_1\neq u_2 \\
                      y = y_1 = y_2
                  \end{array}\right\}
                  \Rightarrow
                  x_1 \neq x_2
              \end{equation}
              holds for all $(u_1,x_1,y_1),(u_2,x_2,y_2)\in\fs{Q}(x_0)$;
        \item The \emph{3rd kind} if both \eqref{eq:kind1} and \eqref{eq:kind2} hold, i.e. neither \eqref{eq:kind1prop} nor \eqref{eq:kind2prop} is valid for all $(u_1,x_1,y_1),(u_2,x_2,y_2)\in\fs{Q}(x_0)$.
    \end{itemize}
\end{defn}

Previous definition induces that different kinds are mutually exclusive: No pair $(x_0,y)$ can be simultaneously of different kinds. This motivates partitioning of $\fs S$ as follows:
\begin{equation*}
    \fs{S} = \fs S_1 \cup \fs S_2 \cup \fs S_3,
\end{equation*}
where $\fs S_k$ gathers \sing{} pairs of the $k$-th kind and satisfies $\fs S_k\cap \fs S_j=\varnothing$ for all $k\neq j$.

\ifext{\color{blue}
    Let us provide a didactical illustration of those concepts via Fig.~\ref{fig:graph1}. Observe that the set $\fs{S}$ of all \sing{} pairs of $\mathfrak S$ equals $\{(x_0,y_5),(x_0,y_7)\}$. It also comes out $\fs{S}_1=\{(x_0,y_5)\}$ and $\fs{S}_2=\{(x_0,y_7)\}$. Note that pair $(x_0,y_1)$ is not \sing{}, even if $y_1=y_2$ and $u_1\neq u_2$ hold. Indeed, $x_1\not\in\fs{X}$ holds so that $(u_1,y_1)$ does not belong to $\fs{W}(x_0)$.

    \begin{rem}[Alternative formulation of \eqref{eq:singPair}]
        Let $\mathfrak{H}^{-1}[x_0;y]$ refers to preimage of $y\in \fs{Y}$ by $\mathfrak{H}[x_0;\cdot]$, i.e.
        \begin{multline}\label{eq:Hinv}
            \mathfrak{H}^{-1}[x_0;y] \coloneqq\{u\in\fs{U} : y=\mathfrak{H}[x_0;u]\} \\
            = \{u\in\fs{U}:(u,y)\in\fs{W}(x_0)\}.
        \end{multline}
        The following relationship
        \begin{equation}
            \card{\mathfrak{H}^{-1}[x_0;y]} \geq 2
        \end{equation}
        gives an alternative way to define \singy{} of $(x_0,y)\in \mathcal X \times \fs{Y}$.
        This reformulation highlights the fact that existence of \singy{} is equivalent to \emph{non-injectivity} of $\mathfrak{H}[x_0;\cdot]$ for some $x_0$.
    \end{rem}
}\fi

\subsection{IR systems as defined in \cite{Kreiss2020}}

Definitions and taxonomy introduced in \cite[Sec.~2]{Kreiss2020} apply \emph{verbatim} for $\mathfrak S$.

\begin{defn}[IR]
    System $\mathfrak{S}$ is \emph{IR} if there exists an \sing{} pair $(x_0,y)\in \mathcal X \times \fs{Y}$, i.e. $\fs S\neq\varnothing$.
\end{defn}

Whenever every \sing{} pair enjoys the same qualifying term, system $\mathfrak{S}$ itself shall inherit from the proposed taxonomy, previously coins for \sing{} pairs.

\begin{defn}[IR of the $k$-th kind]
    System $\mathfrak{S}$ is \emph{IR of the $k$-th kind} if (i) it is IR and (ii) every \sing{} pair $(x_0,y)\in \mathcal X\times \fs{Y}$ 
    is of the $k$-th kind, i.e. \mbox{$\fs S_k=\fs S\neq \varnothing$}.
\end{defn}

\subsection{Uniform IR}

To prove that $\mathfrak{S}$ is IR, it suffices to find a single pair $(x_0,y)$ which is \sing{}. Yet, for some system, this \singy{} might occur \emph{for all} admissible pair $(x_0,y)\in(\es X\times\fs Y)$. The adverb ``uniformly'' is used to qualify this situation.

\begin{defn}[Uniform IR]
    System $\mathfrak{S}$ is \emph{uniformly} IR if (i) it is IR and (ii) every admissible pair $(x_0,y)\in(\es X\times\fs Y)$ is \sing{}, i.e. $\fs A = \fs S\neq \varnothing$ holds.
\end{defn}

This property of uniformity can be combined with the proposed taxonomy of system $\mathfrak{S}$.

\begin{defn}[Uniform IR of the $k$-th kind]
    System $\mathfrak{S}$ is \emph{uniformly IR of the $k$-th kind} if (i) it is uniformly IR and (ii) every \sing{} pair $(x_0,y)\in \mathcal X\times \fs{Y}$ 
    is of the $k$-th kind, i.e. $\fs A=\fs S=\fs S_k\neq \varnothing$.
\end{defn}

\ifext{\color{blue}
    In view of Fig.~\ref{fig:graph1},  $\mathfrak S$ is neither IR of any kind (both $\fs{S}_1$ and $\fs{S}_2$ are non empty) nor uniformly IR ($\fs{A}\ni(x_0,y_2)\not\in\fs{S}$).

    Tab.~\ref{tab:def} can be referred to as a summary of the introduced definitions.

    \begin{table}
        \centering
        \caption{\label{tab:def} Definitions associated to IR}
        \begin{tabular}{|c|c|}
            \hline
            IR                            & $\fs S\neq\varnothing$                \\
            IR of the $k$-th kind         & $\fs S_k=\fs S\neq \varnothing$       \\
            Uniform IR                    & $\fs A = \fs S\neq \varnothing$       \\
            Uniform IR of the $k$-th kind & $\fs A=\fs S=\fs S_k\neq \varnothing$ \\ \hline
        \end{tabular}
    \end{table}
}\fi

\section{General results \label{sec:NL}}

In addition of being part of the description of $\mathfrak{S}$, system $\Sigma$ can be regarded as the unconstrained version of $\mathfrak{S}$. Therefore, comparison between those two systems allows to evaluate how constraints impact IR. This is the goal of this section.

Note that most of the proofs of this section and the next one are postponed in the appendices. Among them, Appendix~\ref{ap:incAn} proposes a key analysis of input-to-output mapping associated with $\mathfrak{S}$ by adopting an incremental view point. This analysis is instrumental in achieving tractable characterization of definitions associated with IR. 

\subsection{Sufficient condition for \singy{} to be preserved} 

Via Ex.~\ref{ex:XUbound}, it has been observed that constraints can destroy \singy{} of a pair $(x_0,y)$, in the sense that $(x_0,y)$ can be \sing{} for $\Sigma$ but not for $\mathfrak{S}$. Following theorem, proved in Appendix~\ref{ap:incAn}, offers a sufficient condition to prevent this situation to occur, i.e. for \singy{} to be preserved in the constrained context.

\begin{thm} \label{th:singNL}
    Define the following integer:
    \begin{equation*}
        \rho \coloneqq \dim (\ker{B}\cap\ker{D}),
    \end{equation*}
    as well as the following relationships:
    \begin{subequations}
        \begin{align} \label{eq:fenetreU}
            u(t) & \in \mathrm{int}(\mathcal U), \\ \label{eq:fenetreX}
            x(t) & \in \mathrm{int}(\mathcal X),
        \end{align}
    \end{subequations}
    parametrized by $t\in\real_{\geq 0}$, $u\in\fs{U}$ and $x\in\fs{X}$. Assume that the unconstrained system $\Sigma$ is IR. Then, admissible pair $(x_0,y)\in\fs{A}$ is IR (for $\mathfrak{S}$) if there exist $(u,x)$ and $0\leq t_0<t_f$ such that $(u,x,y)\in\fs{Q}(x_0)$ holds and:
    \begin{itemize}
        \item if $\rho>0$, \eqref{eq:fenetreU} holds for all $t\in]t_0,t_f[$;
        \item if $\rho=0$, \eqref{eq:fenetreU} and \eqref{eq:fenetreX} hold for all $t\in]t_0,t_f[$.
    \end{itemize}
\end{thm}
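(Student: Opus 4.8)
The plan is to use the incremental description of inputs advertised in the introduction: since $\Sigma$ is IR, there exist two distinct inputs producing the same output from some initial state, and by linearity the difference $\tilde u \coloneqq u_2 - u_1$ is an input that is \emph{invisible} from the output, i.e. $\H[0;\tilde u] = 0$, while $\H_x[0;\tilde u]$ need not vanish. First I would make precise the space of such output-nulling inputs for $\Sigma$: these are exactly the $\tilde u \in \fs{U}_\Sigma$ with $C\H_x[0;\tilde u] + D\tilde u = 0$. The role of $\rho = \dim(\ker B \cap \ker D)$ is that any constant (or more generally piecewise continuous exponentially bounded) signal $\tilde u$ taking values in $\ker B \cap \ker D$ satisfies $B\tilde u = 0$ and $D\tilde u = 0$, hence $\H_x[0;\tilde u] = 0$ and $\H[0;\tilde u] = 0$ — such a $\tilde u$ is invisible \emph{and} leaves the state untouched. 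So when $\rho > 0$ we have a supply of output-nulling increments that do not perturb the state at all, and we only need room in the input constraint set; when $\rho = 0$ the output-nulling increments necessarily move the state, so we also need room in the state constraint set.

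The core of the argument is a scaling/localization step. Given the admissible triple $(u,x,y) \in \fs{Q}(x_0)$ and the window $]t_0,t_f[$ on which $u(t) \in \mathrm{int}(\mathcal U)$ (and $x(t) \in \mathrm{int}(\mathcal X)$ when $\rho = 0$), I would take a nonzero output-nulling increment $\tilde u_0$ for $\Sigma$ (guaranteed by IR of $\Sigma$), multiply it by a smooth bump function supported in $]t_0,t_f[$, and scale it by a small $\varepsilon > 0$, obtaining $\tilde u \coloneqq \varepsilon \cdot \text{(bump)} \cdot \tilde u_0$. Two things must be checked. (a) The perturbed input $u + \tilde u$ still produces output $y$: this requires that the truncated, bumped increment still be output-nulling. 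This is the delicate point — merely multiplying an invisible signal by a cutoff does \emph{not} preserve invisibility in general, because the state excited during the window persists afterwards through the $C\H_x$ term. This is where $\rho$ enters decisively: if $\rho > 0$ one chooses $\tilde u_0$ valued in $\ker B \cap \ker D$, so $\tilde u$ is pointwise annihilated by both $B$ and $D$ and is trivially output-nulling regardless of the cutoff. If $\rho = 0$, one must instead work with an increment that is invisible in the input-to-output sense but does move the state, and arrange, using the structure of output-nulling inputs of $\Sigma$ (e.g. via the zero dynamics / an appropriate state-feedback-like construction), that the state returns to its nominal trajectory — or, more simply, accept a state perturbation and absorb it into the interior margin of $\mathcal X$, which is precisely why \eqref{eq:fenetreX} is assumed in that case. (b) Smallness: for $\varepsilon$ small enough, $u(t) + \tilde u(t)$ stays in $\mathcal U$ (using $u(t)\in\mathrm{int}(\mathcal U)$ on the support and $\tilde u(t) = 0$ off the support, together with compactness of the support and continuity of the distance-to-complement), and similarly $x(t) + \delta x(t) \in \mathcal X$ when $\rho = 0$, where $\delta x = \H_x[x_0; u+\tilde u] - x$ is the (uniformly small, by continuous dependence on $\varepsilon$ in $\fs{U}_\Sigma$) state perturbation.

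Having produced $u + \tilde u \in \fs{U}_a(x_0)$ with $\H[x_0; u+\tilde u] = y$ and $u + \tilde u \neq u$ (since $\tilde u$ is nonzero on a positive-measure set), we conclude $\card{\mathfrak{H}^{-1}[x_0;y]} \geq 2$, i.e. $(x_0,y)$ is an IR pair for $\mathfrak{S}$, which is the claim. I would organize the write-up by first isolating the two cases via $\rho$, then doing the $\rho > 0$ case (clean, using $\ker B \cap \ker D$), then the $\rho = 0$ case, which carries the real content. The main obstacle, as flagged in (a), is guaranteeing that a \emph{temporally localized} increment remains output-nulling when $\rho = 0$: this needs the detailed incremental analysis of $\mathfrak{H}$ promised for Appendix~\ref{ap:incAn}, presumably invoking the existence of output-nulling inputs of $\Sigma$ supported on a finite window (which exist because $\Sigma$ being IR means its normrank is deficient, so there are free "directions" one can excite and then steer back), together with the interior assumptions to keep everything admissible. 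Everything else — linearity, superposition, the bump-function scaling, continuous dependence on the perturbation size — is routine.
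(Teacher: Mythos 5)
Your plan is essentially the paper's own argument: perturb the given admissible triple $(u,x,y)$ by a time-localized, scaled output-nulling increment of the unconstrained system, splitting on $\rho$. For $\rho>0$ the paper, like you, takes the increment valued in $\ker{B}\cap\ker{D}$ (it moves neither state nor output, so only the input margin is needed); for $\rho=0$ it uses the fact that IR of $\Sigma$ with $\rho=0$ forces $\dim\es{R}(\Sigma)>0$ (the characterization of \cite{Kreiss2020}), which yields precisely the object you appeal to informally: an output-nulling input supported in the window whose state excursion starts and ends at $\0$, built by steering inside $\es{R}(\Sigma)$ and back. The scaling step (small $\alpha$ so that $u+\alpha\hat u\in\fs U$ and, when needed, $x+\alpha\hat x\in\fs X$) is also the paper's.

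Two caveats. First, your ``more simply'' fallback for $\rho=0$ --- accept a persistent state perturbation and absorb it into the interior margin of $\mathcal X$ --- would fail: if the increment's state contribution does not return to $\0$ by the end of the window, then for $t>t_f$ (where the increment is zero) the residual state evolves freely and generically perturbs the output through $C$, destroying $y$-equality, and the nominal trajectory has no guaranteed interior margin outside $]t_0,t_f[$, so admissibility can be lost as well. Condition \eqref{eq:fenetreX} is only there to accommodate the \emph{within-window} excursion of an increment that is steered back to zero; it cannot absorb a lasting deviation, so the ``state returns to the nominal trajectory'' route (which you also propose, and which the paper takes) is not optional but necessary. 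Second, a minor technicality: since $u$ is merely piecewise continuous, the distance from $u(t)$ to the complement of $\mathcal U$ need not be bounded away from zero on a compact subinterval of $]t_0,t_f[$ (it may vanish in the limit at a jump even though \eqref{eq:fenetreU} holds pointwise); the paper first restricts to a subinterval on which $u$ is continuous and then invokes the extreme value theorem, and your bump support should likewise be placed inside such a subinterval.
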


When $\es{U}$ is open, \eqref{eq:fenetreU} (resp. \eqref{eq:fenetreX}) is valid for all $u\in\fs{U}$ (resp. for all $x\in\fs{X}$) and for all $t\in\real_{\geq 0}$. 
This leads to the following corollary.

\begin{cor}\label{cor:open}
    Assume that the unconstrained system $\Sigma$ is IR. If one of the following conditions is valid:
    \begin{itemize}
        \item $\rho>0$ and $\es{U}$ is open,
        \item $\rho=0$ and both $\es{U}$ and $\es{X}$ are open, 
    \end{itemize}
    then, $\mathfrak S$ is uniformly IR, i.e. \mbox{$\fs{S}=\fs{A}=\varnothing$}.
\end{cor}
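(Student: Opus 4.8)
The plan is to obtain the corollary as a direct consequence of Theorem~\ref{th:singNL}, the only new ingredient being that an \emph{open} constraint set coincides with its own interior. Since the inclusion chain $\fs S \subseteq \fs A \subseteq \es X \times \fs Y$ has already been recorded, the equality $\fs S=\fs A$ reduces to the single reverse inclusion $\fs A \subseteq \fs S$; establishing this is the heart of the argument, and the residual question is then only whether the common set $\fs S=\fs A$ is empty or not.

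For the reverse inclusion I would fix an arbitrary admissible pair $(x_0,y)\in\fs A$. By definition of $\fs A$ there is an admissible input $u$ with $(u,y)\in\fs W(x_0)$, hence a compatible triple $(u,x,y)\in\fs Q(x_0)$ with $x=\H_x[x_0;u]$; membership in $\fs Q(x_0)$ forces $u(t)\in\es U$ and $x(t)\in\es X$ for all $t\geq 0$. If $\es U$ is open then $\es U=\mathrm{int}(\es U)$, so \eqref{eq:fenetreU} holds along this $u$ at every instant; if in addition $\es X$ is open, then \eqref{eq:fenetreX} holds along $x$ at every instant. Consequently, under either bullet of the corollary (the case $\rho>0$ needing only \eqref{eq:fenetreU}, the case $\rho=0$ needing both), the hypothesis of Theorem~\ref{th:singNL} is met on any window $]t_0,t_f[$ with $0\leq t_0<t_f$. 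As $\Sigma$ is assumed IR, Theorem~\ref{th:singNL} yields $(x_0,y)\in\fs S$. Since $(x_0,y)\in\fs A$ was arbitrary, $\fs A\subseteq\fs S$, whence $\fs S=\fs A$.

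I expect the genuine difficulty to lie not in this inclusion but in the emptiness bookkeeping hidden in the displayed conclusion. The equality $\fs S=\fs A$ is robust, yet by itself it does not decide emptiness: one checks that $\fs A=\varnothing$ precisely when $\es X_c=\varnothing$, i.e. when no initial condition in $\es X$ admits a trajectory staying in $\es X$ under some admissible input. In that degenerate situation $\fs S=\fs A=\varnothing$ holds exactly as written, but then $\mathfrak S$ is vacuously \emph{not} IR; conversely, whenever $\es X_c\neq\varnothing$ one gets $\fs A\neq\varnothing$ and $\fs S=\fs A\neq\varnothing$, which is uniform IR in the sense of its definition. Thus the substantive content — that openness removes the boundary-trajectory exception of Theorem~\ref{th:singNL} and forces every admissible pair to be IR — is what makes $\mathfrak S$ uniformly IR, the reading of the displayed equality consistent with that conclusion being $\fs S=\fs A$, with $\fs S=\fs A\neq\varnothing$ as soon as $\es X_c\neq\varnothing$.
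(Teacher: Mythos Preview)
Your argument is correct and mirrors the paper's own reasoning: the corollary is stated immediately after the remark that when $\es U$ (resp.\ $\es X$) is open, \eqref{eq:fenetreU} (resp.\ \eqref{eq:fenetreX}) holds for every admissible $u$ (resp.\ $x$) and every $t\geq 0$, so Theorem~\ref{th:singNL} applies to every admissible pair. Your additional bookkeeping on emptiness is also well taken: the displayed equality in the statement should read $\fs S=\fs A\neq\varnothing$ (the definition of uniform IR), not $\fs S=\fs A=\varnothing$; this is a typo in the paper, and your analysis of the degenerate case $\es X_c=\varnothing$ is accurate.
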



Contraposition of Th.~\ref{th:singNL} is also of major importance. In the context of Th.~\ref{th:singNL}, assume that $\rho>0$ (resp. $\rho=0$) holds and that $(x_0,y)$ is admissible but \emph{not} \sing{}, so that there is a unique pair $(u,x)$ satisfying $(u,x,y)\in\fs{Q}(x_0)$. Then for all non empty $]t_0,t_f[\subseteq\real_{\geq 0}$, condition \eqref{eq:fenetreU} (resp. \eqref{eq:fenetreU} and \eqref{eq:fenetreX}) is violated for some $t=t_i\in]t_0,t_f[$, i.e. $u(t_i)\in\es{U}\setminus\text{int}(\es{U})$ holds (resp. either $u(t_i)\in\es{U}\setminus\text{int}(\es{U})$ or $x(t_i)\in \es{X}\setminus\text{int}(\es{X})$ hold). As formally proved in Appendix~\ref{ap:notFenetre}, this observation together with piecewise continuity of $u$ and continuity of $x$ lead to the following corollary.

\begin{cor}\label{cor:notFenetre}
    Define the following relationships:
    \begin{subequations}
        \begin{align} \label{eq:notfenetreU}
            u(t) & \in \es U\setminus \mathrm{int}(\es U),      \\ \label{eq:notfenetreX}
            x(t) & \in \es X\setminus \mathrm{int}(\mathcal X),
        \end{align}
    \end{subequations}
    parametrized by $t\in\real_{\geq 0}$, $u\in\fs{U}$ and $x\in\fs{X}$. Assume that the unconstrained system $\Sigma$ is IR.
    Then, admissible pair $(x_0,y)\in\fs{A}$ \emph{is not IR (for $\mathfrak{S}$) only if} the unique $(u,x)$ satisfying $(u,x,y)\in\fs{Q}(x_0)$ is such that:
    \begin{itemize}
        \item if $\rho>0$, then \eqref{eq:notfenetreU} holds for all $t\in\es{C}$;
        \item if $\rho=0$, then \eqref{eq:notfenetreU} or \eqref{eq:notfenetreX} holds for all $t\in\es{C}$.
    \end{itemize}
    Here, $\es{C}$ is the largest open subset of $\real_{\geq 0}$ where $u\in\es{U}$ is continuous.
\end{cor}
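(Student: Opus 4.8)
The strategy is to obtain Corollary \ref{cor:notFenetre} as the contrapositive of Theorem \ref{th:singNL}, combined with a topological/measure-theoretic argument that upgrades the pointwise failure of the interior conditions to failure on the \emph{entire} continuity set $\es{C}$. First I would fix an admissible pair $(x_0,y)\in\fs{A}$ that is \emph{not} IR for $\mathfrak{S}$. By definition of admissibility there is at least one $(u,x)$ with $(u,x,y)\in\fs{Q}(x_0)$, and since $(x_0,y)$ is not \sing{} (not IR), this pair is unique; call it $(u,x)$. The hypothesis that $\Sigma$ is IR is in force, so Theorem \ref{th:singNL} applies and tells us that, for this particular $(u,x,y)$, there is \emph{no} pair $0\leq t_0<t_f$ such that the relevant interior condition holds throughout $]t_0,t_f[$: if $\rho>0$, no open interval on which \eqref{eq:fenetreU} holds identically; if $\rho=0$, no open interval on which both \eqref{eq:fenetreU} and \eqref{eq:fenetreX} hold identically.

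Next I would translate ``no open subinterval works'' into ``the bad set is dense in $\es{C}$, hence (by continuity) all of $\es{C}$.'' Treat the case $\rho>0$ first. Let $\es{C}$ be the largest open subset of $\real_{\geq 0}$ on which $u$ is continuous — this is well defined because $u$ is piecewise continuous, so its discontinuity set is locally finite and $\es{C}$ is open and dense. Consider the set $\es{B}\coloneqq\{t\in\es{C}: u(t)\in\es{U}\setminus\mathrm{int}(\es{U})\}$, i.e.\ the $t$ where \eqref{eq:notfenetreU} holds. Its complement in $\es{C}$ is $\{t\in\es{C}: u(t)\in\mathrm{int}(\es{U})\}$, which is \emph{open} in $\es{C}$ because $u$ is continuous on $\es{C}$ and $\mathrm{int}(\es{U})$ is open; by Theorem \ref{th:singNL} this open set contains no nonempty open interval, hence it is empty. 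Therefore $\es{B}=\es{C}$, which is exactly \eqref{eq:notfenetreU} holding for all $t\in\es{C}$. For $\rho=0$, the argument is the same with $\{t\in\es{C}:u(t)\in\mathrm{int}(\es{U})\}\cap\{t\in\es{C}:x(t)\in\mathrm{int}(\es{X})\}$ in place of the single open set: this intersection is open (using continuity of $x$ on all of $\real_{\geq 0}\supseteq\es{C}$, and continuity of $u$ on $\es{C}$), contains no nonempty open interval by Theorem \ref{th:singNL}, hence is empty, so for every $t\in\es{C}$ either \eqref{eq:notfenetreX} or \eqref{eq:notfenetreU} holds.

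The only genuinely delicate point is the logical bookkeeping around \emph{uniqueness} of $(u,x)$ and the fact that Theorem \ref{th:singNL}'s conclusion must be applied to exactly that unique trajectory: Theorem \ref{th:singNL} says ``$(x_0,y)$ is IR \emph{if} there exists $(u,x)$ with the window property,'' so its contrapositive is ``$(x_0,y)$ not IR implies \emph{no} $(u,x,y)\in\fs{Q}(x_0)$ enjoys the window property''; when $(x_0,y)$ is not \sing{} there is only one such triple, and the statement of Corollary \ref{cor:notFenetre} refers to that one. I should also make explicit why $\es{C}$ is the right domain: on a discontinuity point of $u$ the value $u(t)$ need not be controlled by nearby values, so openness of $\{t:u(t)\in\mathrm{int}(\es{U})\}$ is only guaranteed relative to $\es{C}$; off $\es{C}$ (a measure-zero, locally finite set) nothing is claimed, consistent with ``for all $t\in\es{C}$.'' Everything else — openness of preimages of open sets under continuous maps, density of $\es{C}$, and the observation that a nonempty open subset of $\real$ contains a nonempty open interval — is routine and can be invoked without detailed calculation.
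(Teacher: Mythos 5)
Your proposal is correct and follows essentially the same route as the paper: contrapose Th.~\ref{th:singNL} applied to the unique triple $(u,x,y)\in\fs{Q}(x_0)$, then use continuity of $u$ on $\es{C}$ (and of $x$) to show the set where the interior conditions hold is open relative to $\es{C}$, hence empty since it contains no nonempty open interval. The paper packages the openness step as Lemma~\ref{lem:Czint} and phrases the conclusion as ``the boundary set is dense and closed in $\es{C}$, hence equals $\es{C}$,'' which is logically the same as your ``the open good set is empty'' argument.
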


If $\es{U}$ and $\es{X}$ are closed and $\Sigma$ is IR, then Cor.~\ref{cor:notFenetre} proves that admissible pairs $(x_0,y)$ which are not \sing{} are necessarily associated with input and state trajectories that lie alternatively on the boundary of $\es{U}$ and $\es{X}$ for all $t$ where $u\in\es{U}$ is continuous.

\begin{exmp}\label{ex:bucks}
    Consider the parallel interconnection of two buck converters feeding a single resistive load of magnitude $R$ and fed by a single voltage source of magnitude $V$.
    Interested reader is referred to \cite{Tregoueet2019} for details about this application.

    This system can be controlled via a pulse width modulation strategy, so that the duty cycles of each converter are the components of the input vectors $u(t)\in\es{U}=[0,1]^2$. Resulting averaged dynamics can be modeled via \eqref{eq:sys} with
    \begin{equation*}
        A = \begin{bmatrix}
            0           & 0           & -\frac{1}{L}  \\
            0           & 0           & -\frac{1}{L}  \\
            \frac{1}{C} & \frac{1}{C} & -\frac{1}{RC}
        \end{bmatrix},
        B = \frac{V}{L} \begin{bmatrix}
            1 & 0 \\
            0 & 1 \\
            0 & 0
        \end{bmatrix},
        C = \begin{bmatrix} 0 \\ 0 \\ 1 \end{bmatrix}^\intercal,
        D = \0,
    \end{equation*}
    where $L$ and $C$ denote the inductances and the output capacitor magnitudes, respectively.
    Components of the state vector $x(t)\in\es{X}=\real^3$ are the two currents flowing through the inductances, followed by the voltage at the load. This voltage is the output signal.

    For this example, $\rho$ equals $0$ and one can prove that $\Sigma$ is IR, see \cite{Kreiss2018} where it is shown that $\es{R}(\Sigma)$ equals $\Span{\smallmat{1 & -1 & 0 }^\intercal}$.
    To demonstrate that $\mathfrak{S}$ is also IR, first pick any $x_0\in\es{X}=\real^3$. Then, define input $u_1$ as follows:
    \begin{equation*}
        u_1:t\mapsto\left\{\begin{array}{ll}
            \begin{bmatrix} 1-t & t \end{bmatrix}^\intercal,   & (t\in [0,1]), \\
            \begin{bmatrix} 0 & 1 \end{bmatrix}^\intercal  , & (t>1).
        \end{array}\right.
    \end{equation*}
    Let output $y_1$ satisfy $(u_1,y_1)\in\fs W(x_0)$.
    Since $u_1(t)$ belongs to $\mathrm{int}(\es U)$ for all $t\in]0,1[$ and $\mathrm{int}(\es X)=\es X$ so that \eqref{eq:fenetreX} holds for all $t\geq 0$, Th.~\ref{th:singNL} ensures that $(x_0,y_1)$ is an IR pair for $\mathfrak{S}$. Indeed, input $u_2:t\mapsto \smallmat{1\\0}$ leads to output $y_2=\H[x_0;u_2]$ which is equal to $y_1$. This fact can be highlighted via the following equation capturing the input-output mapping:
    \begin{equation}\label{eq:buckIO}
        C \ddot y + \frac{1}{R} \dot y + \frac{2}{L} y = \frac{E}{L} \begin{bmatrix} 1 & 1 \end{bmatrix} u.
    \end{equation}
    Indeed, observe that exerting either $u_1$ or $u_2$ lead to identical right-hand side of \eqref{eq:buckIO}, so that resulting outputs $y_{1,2}$ are the same. More generally, observe that a given output trajectory $y$ imposes the expression of $\smallmat{1 & 1} u$ via \eqref{eq:buckIO}, so that any signal of the form $\tilde u=\smallmat{1\\-1}w$ with $w:\real_{\geq 0}\rightarrow\real$ can be added to the input $u$ leading to $y$ without affecting this output, i.e. $\H[x_0;u+\tilde u]=y$.

    Consider now the input $u_3=\0$ leading to the output $y_3=\0$ for zero initial state $x_0=\0$. 
    The pair $(x_0,y_3)$ is \emph{not} IR. To see this, observe that \eqref{eq:buckIO} together with $(x_0,y)=(\0,\0)$ imply $\smallmat{1 & 1} u=0$. This last equation admits a unique solution lying identically in $\es U$, that is $u_3=\0$. Note that the fact that $(x_0,y_3)$ is not IR agree with the statement of Th.~\ref{th:singNL}: \eqref{eq:fenetreU} is never valid, whatever is the selection of the interval $]t_0,t_f[\subseteq\real_{\geq 0}$.
\end{exmp}

\begin{exmp}[Ex.~\ref{ex:XUbound} continued]
    Case (ii) proves that $\mathfrak S$ and, in turn, $\Sigma$ are IR. For cases (i) and (iii), unique input $u_1=\0$ making pairs $(x_{0,1},y_1),(x_{0,2},y_2)$ admissible satisfies $u_1(t)\in\es U\setminus \mathrm{int}(\es U)=(\{0\}\oplus\real_{\geq 0}) \cup (\real_{\geq 0}\oplus\{0\})$ for all $\es{C}=\real_{\geq 0}$, as predicted by Cor.~\ref{cor:notFenetre}.
\end{exmp}

\begin{rem}
    Assume that $\Sigma$ is IR. Clearly, Cor.~\ref{cor:open} ensures that $\mathfrak S$ is uniformly IR if both $\es{U}$ and $\es{X}$ are open, \emph{regardless of $\rho$}. In the same spirit, Cor.~\ref{cor:notFenetre} proves that $(x_0,y)\in\fs{A}\setminus\fs{S}$ holds only if the unique $(u,x)$ satisfying $(u,x,y)\in\fs{Q}(x_0)$ is such that \eqref{eq:notfenetreU} or \eqref{eq:notfenetreX} holds for all $t\in\es{C}$, \emph{whatever is the value of $\rho$}. Similarly, if there exist $(u,x,y)\in\fs{Q}(x_0)$ and $0\leq t_0<t_f$ such that both \eqref{eq:fenetreU} and \eqref{eq:fenetreX} hold for all $t\in]t_0,t_f[$, then $(x_0,y)$ is \sing{} for $\mathfrak S$ \emph{regardless of $\rho$}, by virtue of Th.~\ref{th:singNL}.
    %
    %
\end{rem}

%
%
%
%

\subsection{How constraints impact the kind of \singy{} ?}

Assume that the unconstrained system $\Sigma$ is \sing{} of the $k$-th kind. Given an \sing{} pair $(x_0,y)\in\fs{S}$. Then, a natural question is to ask: Does $(x_0,y)$ belong to $\fs{S}_k$ ? Saying it differently, do the constraints can modify the kind of \singy{}  of a pair ?

For $k=3$, the answer to the last question might be positive even when the constraints are linear, as shown by the next example.

\begin{exmp}\label{ex:chgkind}
    Define dynamical system $\Sigma$ as follows
    \begin{align*}
        \dot x(t) & =  \begin{bmatrix}-1&0&0\\0&-1&0\\0&0&-1\end{bmatrix}x(t) + \begin{bmatrix}	1 & 0 & 0 & 0\\ 0 & 1 & 0 & 0\\ 0 & 0 & 1 & 1\end{bmatrix} u(t) , \\
        y (t)     & = \begin{bmatrix}0&0&1\end{bmatrix}x(t)
    \end{align*}
    In view of \cite[\characIR]{Kreiss2020}, one can prove that $\Sigma$ is IR of the 3rd kind.\footnote{Indeed, it holds $\es{R}(\Sigma)=\Span{\smallmat{\I_2\\\0}}$ and $\rho=1$.} As an illustration, define $u_1=\0$, $u_2:t\mapsto [0,0,1,-1]^\intercal$ and $u_3:t\mapsto [1,0,0,0]^\intercal$. Denoting $(u_k,x_k,y_k)\in\fs{Q}(x_0=\0),(k\in\{1,2,3\})$, observe that $y_1=y_2=y_3=\0$. Since inputs $u_k$ are all distinct and $x_1=x_2\neq x_3$, pair $(x_0,y)=(\0,\0)$ is \sing{} of the 3rd kind for $\Sigma$.

    Let $\mathfrak S$ derive from $\Sigma$ by adding following linear constraints
    \begin{align*}
        \mathcal  U & = \left\{[u_a,u_b,u_c,u_d]^\intercal\in\real^4~|~ u_c-u_d=0\right\}, \\ 
        \mathcal  X & = \left\{[x_a,x_b,x_c]^\intercal\in\real^3~|~ x_b+x_c=0\right\}.     
    \end{align*}
    State $x(t)$ belongs identically in $\mathcal{X}$ iff $Ax(t)+Bu(t)\in\es{X} \Leftrightarrow -(x_b(t)+x_c(t)) + u_b(t) + u_c(t) + u_d(t)=0$ for all $t$ and for all $x(t)\in \mathcal{X}$. Together with constraint $u(t)\in \mathcal{U}$, this is equivalent to $u_c=u_d=-u_b/2$. In this case, dynamical equations reduce to
    \begin{align*}
        \dot x_a (t) & = - x_a(t) + u_a(t), \\
        \dot y (t)   & = - y(t) - u_b(t).   \\
    \end{align*}
    Clearly, any pair $(x_0,y=x_c=-x_b)$ uniquely defines $u_b$ and, in turn, $u_c$ and $u_d$. But, it let $u_a$ free. This proves that any admissible pair $(x_0,y)$ is \sing{} of the 2nd kind since distinct $u_a$ leads to distinct $x_a$. This proves that $\mathfrak S$ is IR of the 2nd kind, whereas $\Sigma$ is IR of the 3rd kind.
\end{exmp}

For $k\in\{1,2\}$, the answer to the last question is always negative, i.e. constraints cannot change \singy{} of the 1st or of the 2nd kind. To see this, let $(x_0,y)\in\fs{S}$. Assume that $\Sigma$ is IR of the 1st (resp. 2nd) kind. Then $u_1\neq u_2$ together with $y_1=y_2=y$ implies $x_1=x_2$ (resp. $x_1\neq x_2$) for all $(u_1,x_1,y_1),(u_2,x_2,y_2)\in\fs{Q}_\Sigma(x_0)$. Since $\fs{Q}(x_0)\subseteq \fs{Q}_\Sigma(x_0)$, this proves that $(x_0,y)$ belongs to $\fs{S}_1$ (resp. to $\fs{S}_2$).

\begin{lem}\label{lem:chgkind}
    Let $\Sigma$ be \sing{} of the $k$-th kind, with $k\in\{1,2\}$. If $\mathfrak{S}$ is IR, then $\mathfrak S$ is IR of the $k$-th kind, i.e. $\fs{S}=\fs{S}_k$.
\end{lem}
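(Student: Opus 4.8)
The plan is to lean on the two inclusions already available: from \eqref{eq:Qdef} one has $\fs{Q}(x_0)\subseteq\fs{Q}_\Sigma(x_0)$ for every $x_0$, hence $\fs{W}(x_0)\subseteq\fs{W}_\Sigma(x_0)$, and therefore every pair $(x_0,y)$ that is \sing{} for $\mathfrak S$ is also \sing{} for $\Sigma$ (its two witnessing inputs, members of $\fs{W}(x_0)$, lie in $\fs{W}_\Sigma(x_0)$ too). The idea is then: start from an arbitrary \sing{} pair of $\mathfrak S$, use the hypothesis on $\Sigma$ to see it is of the $k$-th kind \emph{for} $\Sigma$, and transfer this back to $\mathfrak S$ using the same inclusion $\fs{Q}(x_0)\subseteq\fs{Q}_\Sigma(x_0)$, which preserves universally quantified implications.

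Concretely, I would fix any $(x_0,y)\in\fs{S}$ --- one exists since $\mathfrak S$ is IR --- and note it is \sing{} for $\Sigma$. As $\Sigma$ is IR of the $k$-th kind, every \sing{} pair of $\Sigma$ is of the $k$-th kind, so $(x_0,y)$ is; by the definition of the $k$-th kind this says that implication \eqref{eq:kind1prop} (if $k=1$) or \eqref{eq:kind2prop} (if $k=2$) holds for all $(u_1,x_1,y_1),(u_2,x_2,y_2)\in\fs{Q}_\Sigma(x_0)$. Restricting the four triples to lie in the subset $\fs{Q}(x_0)$, the very same implication still holds for all triples of $\fs{Q}(x_0)$.

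Next I would read off the kind of $(x_0,y)$ for $\mathfrak S$. Over $\fs{Q}(x_0)$, implication \eqref{eq:kind1prop} is exactly the negation of relationship \eqref{eq:kind2} and \eqref{eq:kind2prop} is exactly the negation of \eqref{eq:kind1}; thus for $k=1$ relationship \eqref{eq:kind2} fails for $\mathfrak S$, and for $k=2$ relationship \eqref{eq:kind1} fails for $\mathfrak S$. Since $(x_0,y)$ is \sing{} for $\mathfrak S$, at least one of \eqref{eq:kind1}, \eqref{eq:kind2} holds for it (as observed just after \eqref{eq:kind2}), so \eqref{eq:kind1} holds when $k=1$ and \eqref{eq:kind2} holds when $k=2$, i.e. $(x_0,y)\in\fs{S}_1$ resp. $(x_0,y)\in\fs{S}_2$. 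Because $(x_0,y)$ was arbitrary in $\fs S$ and the $\fs S_j$ are disjoint with union $\fs S$, this gives $\fs{S}=\fs{S}_k$, and with $\fs{S}\neq\varnothing$ ($\mathfrak S$ IR) this is precisely ``$\mathfrak S$ is IR of the $k$-th kind''.

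I do not expect a genuine obstacle: the whole argument is monotonicity of a universally quantified implication under $\fs{Q}(x_0)\subseteq\fs{Q}_\Sigma(x_0)$. The single delicate point is the logical bookkeeping identifying the ``$k$-th kind'' implications \eqref{eq:kind1prop}/\eqref{eq:kind2prop} with the negations of the existential relationships \eqref{eq:kind1}/\eqref{eq:kind2} --- which is exactly what fails for $k=3$, whose defining property is that \emph{both} \eqref{eq:kind1prop} and \eqref{eq:kind2prop} fail, a property not inherited when $\fs{Q}_\Sigma(x_0)$ is replaced by its subset $\fs{Q}(x_0)$, consistent with Ex.~\ref{ex:chgkind}.
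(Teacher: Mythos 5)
Your proposal is correct and takes essentially the same route as the paper: observe that any \sing{} pair of $\mathfrak S$ is also \sing{} for $\Sigma$, invoke the hypothesis to get the universally quantified implication \eqref{eq:kind1prop} (resp.\ \eqref{eq:kind2prop}) over $\fs{Q}_\Sigma(x_0)$, and transfer it to the subset $\fs{Q}(x_0)$ by monotonicity. Your additional bookkeeping identifying \eqref{eq:kind1prop}/\eqref{eq:kind2prop} as the negations of \eqref{eq:kind2}/\eqref{eq:kind1} only makes explicit the final step the paper leaves implicit, so there is nothing to correct.
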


This lemma proves that if $\Sigma$ is \sing{} of the $k$-th kind, with $k\in\{1,2\}$, then all \sing{} pairs of $\mathfrak S$ (if any) is also of the $k$-th kind. This is in contrast with the case where $k=3$. In this situation, not only $\mathfrak{S}$ can be of the 1st or 2nd kind (see Ex.~\ref{ex:chgkind}), but also \sing{} pairs of different kinds can coexist for $\mathfrak{S}$, as shown by the following example.

\begin{exmp}\label{ex:difkind}
    Let $\Sigma$ be defined as follows:
    \begin{align*}
        \dot x(t) & = \begin{bmatrix} 1 & 0 \\ 0 & 1 \end{bmatrix} x(t) + \begin{bmatrix} 1 & 0 & 0 \\ 0 & 1 & 1 \end{bmatrix} u(t) \\
        y(t)      & = \begin{bmatrix} 0 & 1 \end{bmatrix} x(t).
    \end{align*}
    Constrained system $\mathfrak S$ derives from $\Sigma$ by imposing $\es{U}=[-1;+\infty[^3$ and $\es{X}=[0;1]\times [0;2]$. Let us show that $\mathfrak S$ admits \sing{} pairs of each kind:
    \begin{itemize}
        \item Pair $(x_{0,1},y_1)\coloneqq (\smallmat{1\\0},\0)$ is of the 1st kind. Indeed, first input necessarily equals $t\mapsto -1$ for first state not to escape from $[0;1]$. Together with $y=\0$, this implies that $x$ is constant and equals $t\mapsto x_{0,1}$. Thus, $y_1$ can be produced by any input $u\in\fs{U}_\Sigma$ satisfying $u(t)\in\{-1\}\times (\ker{[1,1]}\cap[-1;+\infty[^2)$ for all $t\in\real_{\geq 0}$.
        \item Pair $(x_{0,2},y_2)\coloneqq (\smallmat{0.5\\2},t\mapsto 2)$ is of the 2nd kind. This time, last two inputs must equal $t\mapsto -1$ for $y(t)=2$ to hold for all $t\in\real_{\geq 0}$. Thus, $y_2$ originates from any input $u\in\fs{U}_\Sigma$ such that (i) $u(t)$ belongs to $[-1;+\infty[\times\{-1\}^2$ for all $t\in\real_{\geq 0}$ and (ii) first state belongs to $[0;1]$. In particular, $\eta_\alpha:t\mapsto -\alpha e^{(1-\alpha)t}/2$ is a suitable trajectory for first input, whatever is $\alpha\in[1;2]$. To see this more easily, note that $\eta_\alpha$ can be interpreted as the feedback of gain $-\alpha$ between first state and first input. Furthermore, any of those input candidates leads to distinct state trajectories. 
        \item Pair $(x_{0,3},y_3)\coloneqq (\smallmat{0.5\\0},\0)$ is of the 3rd kind. This can be proved by constructing triples $(u_4,x_4,y_4)$, $(u_5,x_5,y_5)$, $(u_6,x_6,y_6)\in\fs{Q}(x_{0,3})$ satisfying $y_4=y_5=y_6=y_3$, $x_4=x_5\neq x_6$ and $u_k,(k\in\{4,5,6\})$, all distinct. Using distinct $\alpha_1,\alpha_2\in[1;2]$, it can be verified that $u_4=[\eta_{\alpha_1},\0,\0]^\intercal$, $u_5=(t\mapsto [\eta_{\alpha_1}(t),1,-1]^\intercal)$ and $u_6=[\eta_{\alpha_2},\0,\0]^\intercal$ comply with those constraints.
    \end{itemize}
\end{exmp}

As a last comment, note that constraints can completely destroy redundancy. A trivial two inputs example is the case where $\Sigma$ is IR of the 1st kind, $\ker{\smallmat{B\\D}}=\Span{\smallmat{0 \\ 1}}$, $\es{U}=\{\smallmat{x_a\\x_b}:x_b=0\}$ and $\es{X}=\real^n$.

\section{Linear spaces\label{sec:lincons}}

Let us turn our attention to the case where both $\mathcal U$ and $\mathcal X$ are linear. The following assumption is considered valid throughout this section.

\begin{hypo}[Linear spaces]\label{hyp:linEns}
    Sets $\mathcal U\subseteq \real^m$ and $\mathcal X\subseteq\real^n$ are linear over the field $\real$.
\end{hypo}

By specializing the results of the incremental analysis (see Appendix~\ref{ap:incAn}) supporting results exposed in the previous section, the following theorem can be obtained, see Appendix~\ref{ap:incAn} for the proof. Roughly speaking, this theorem states that ``linearity implies uniformity''.

\begin{thm}\label{th:locglob}
    Under {Hyp.}~\ref{hyp:linEns}, if $\mathfrak S$ admits an \sing{} pair $(x_0,y)\in \mathcal X \times \fs{Y}$ of the $k$-th kind, then $\mathfrak S$ is uniformly IR of the $k$-th kind, i.e. $\fs{A}=\fs{S}=\fs{S}_k\neq\varnothing$.
\end{thm}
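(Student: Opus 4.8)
The plan is to exploit the incremental viewpoint: write the second compatible triple as $(u_1+\tilde u,\, x_1+\tilde x,\, y_1)$ where $(\tilde u,\tilde x)$ solves the homogeneous system $\dot{\tilde x}=A\tilde x+B\tilde u$, $0=C\tilde x+D\tilde u$ with $\tilde x(0)=0$. Under Assumption~\ref{hyp:linEns}, $\mathcal U$ and $\mathcal X$ are linear subspaces, so the set of admissible increments $\tilde u$ (those keeping $u_1+\tilde u\in\fs U$ and $x_1+\tilde x\in\fs X$) is itself a cone that is invariant under positive scaling, and crucially does \emph{not} depend on the particular admissible base trajectory once we know that \emph{some} admissible increment exists. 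The first step is therefore to invoke the incremental analysis of Appendix~\ref{ap:incAn} to extract, from the hypothesised \sing{} pair $(x_0,y)$, a nonzero increment $\tilde u$ whose induced $\tilde x$ satisfies $\tilde u\in\mathcal U$ (as a subspace direction, i.e. $\tilde u(t)\in\mathcal U$ for all $t$, since $\mathcal U$ linear makes $\mathcal U=\mathcal U-u_1(t)$) and $\tilde x\in\mathcal X$. By linearity of the constraint sets, $\lambda\tilde u$ is also an admissible increment direction for every $\lambda\in\real$.

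Next I would take an \emph{arbitrary} admissible pair $(x_0',y')\in\fs A$, with witnessing triple $(u',x',y')\in\fs Q(x_0')$, and show $(x_0',y')\in\fs S$ by producing a second compatible input $u'+\lambda\tilde u$ for suitably small $\lambda\neq 0$. The point is that adding $\lambda\tilde u$ changes neither the output (the increment is output-annihilating by construction) nor the initial state ($\tilde x(0)=0$), so $(u'+\lambda\tilde u,\, x'+\lambda\tilde x,\, y')$ is a compatible triple for $\Sigma$; it remains to check the constraints. Here one uses that $\mathcal U$ and $\mathcal X$ are subspaces: $u'(t)+\lambda\tilde u(t)\in\mathcal U$ because $u'(t)\in\mathcal U$ and $\lambda\tilde u(t)\in\mathcal U$, and likewise $x'(t)+\lambda\tilde x(t)\in\mathcal X$ — so in fact \emph{any} $\lambda$ works and no smallness is even needed. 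Since $\tilde u\neq u'\!-\!u'$ on a positive-measure set, $u'$ and $u'+\lambda\tilde u$ are distinct, hence $(x_0',y')$ is \sing{}. This gives $\fs A\subseteq\fs S$, and combined with the standing inclusion $\fs S\subseteq\fs A$ we get $\fs A=\fs S\neq\varnothing$, i.e. uniform IR.

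Finally I must track the \emph{kind}. The increment $\tilde u$ extracted from the $k$-th-kind pair has a definite signature: for $k=1$ every output-and-initial-state-preserving increment produces $\tilde x=0$; for $k=2$ it produces $\tilde x\neq 0$; for $k=3$ both occur. Because the homogeneous increment space $\{(\tilde u,\tilde x):\dot{\tilde x}=A\tilde x+B\tilde u,\ C\tilde x+D\tilde u=0,\ \tilde x(0)=0\}$ is the \emph{same} vector space regardless of $x_0'$ (it is governed only by $(A,B,C,D)$, not by the base point), and the admissibility of increment directions is governed only by the subspaces $\mathcal U,\mathcal X$ (again base-point independent), the set of admissible increments is identical at every admissible pair. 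Hence whichever of \eqref{eq:kind1}/\eqref{eq:kind2} held at $(x_0,y)$ holds verbatim at $(x_0',y')$, so $\fs S=\fs S_k$. The main obstacle is the bookkeeping in this last step: one must argue carefully that \eqref{eq:kind1prop} or \eqref{eq:kind2prop} — which are implications quantified over \emph{all} compatible pairs, not just the constructed ones — transfer correctly, which again reduces to the observation that the difference of any two compatible triples for $\mathfrak S$ at $x_0'$ is an admissible homogeneous increment, and the admissible-increment set is base-point independent under Assumption~\ref{hyp:linEns}. Modulo that, the result is a direct consequence of Appendix~\ref{ap:incAn} applied uniformly across $\fs A$.
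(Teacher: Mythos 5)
Your proposal is correct and takes essentially the same route as the paper's own proof: you exploit the incremental analysis of Appendix~\ref{ap:incAn} together with Hyp.~\ref{hyp:linEns} to show that the set of output-annihilating, constraint-compatible increments is independent of the base pair $(x_0,u)$, which is exactly the content of the paper's Lem.~\ref{lem:locglob} and its consequence $\widetilde{\fs{U}}(x_0,u)=\widetilde{\fs{U}}$, and then you transfer both redundancy and its kind to every admissible pair exactly as the paper does via Lem.~\ref{lem:kindcarlin}. No gap to report.
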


As compared with \cite[\locglobUnc]{Kreiss2020} where $\es{U}=\real^m$ and $\es{X}=\real^n$ hold, this theorem is more general since arbitrary linear spaces are considered. It also emphasizes that $\fs{S}$ equals $\fs{S}_k$, i.e. {Hyp.}~\ref{hyp:linEns} prevents pairs of distinct kind to coexist for the same system. This is in stark contrast with the general non linear case, as shown by Ex.~\ref{ex:difkind}.

\ifext{\color{blue}
    This section heavily relies on geometric control theory whose essential aspects are summarized in Appendix~\ref{ap:geom}. Reader is also referred to \cite[Chap.~0]{Wonham1985} or \cite[Sec.~2.2]{Trentelman2001}.
    Following \cite{Trentelman2001}, and relatively to system $\Sigma$, the weakly unobservable subspace and the controllable weakly unobservable subspace are denoted by $\es{V}(\Sigma)$ and $\es{R}(\Sigma)$, respectively.

    \subsection{Description of $\fs{Q}(x_0)$ via trajectories of an unconstrained system}

}\fi

As an initial step, let us get rid of constraints associated with $\es{U}$ and $\es{X}$ by constructing a quadruple whose \emph{unconstrained} trajectories can be gathered into a set isomorphic to $\fs{Q}(x_0)$. \ifext{\color{blue}To this end, this subsection summarizes, formalizes (via Lem.~\ref{lem:Ebij}) and illustrates existing material collected in Appendix~\ref{ap:geomCont}.\newline}\fi
Let $R$ denote the insertion of $\es{U}$ in $\real^m$. Define $B_{\es{U}} \coloneqq  B\mid \mathcal U=BR$ (resp. $D_{\es{U}} \coloneqq  D\mid \mathcal U=DR$) as the domain restriction of $B$ (resp. $D$) to $\es U$. 
Let $\es{V}^*(\es{X},A,B_{\es{U}})$ (or $\es{V}^*(\es{X})$ for short), be the largest $(A,B_{\es{U}})$-controlled invariant subspace contained in $\es{X}$. Pick any friend $F$ in $\underline{F}(\es{V}^*(\es{X}))$, the set of friend of $\es{V}^*(\es{X})$. Select $L$ as any injective 
linear map such that
\begin{equation} \label{eq:Ldef}
    \im L = B_{\es{U}}^{-1} \es{V}^*(\es{X},A,B_{\es{U}}).
\end{equation}
This allows to define \emph{unconstrained} system $\Sigma_F$, characterized by quadruple  $(A_F,B_F,C_F,D_F)$ where $A_F\coloneqq (A+BRF)\mid \es{V}^*(\es{X})$, $B_F\coloneqq \es{V}^*(\es{X})\mid (BR L)$, $C_F\coloneqq (C+DRF)\mid \es{V}^*(\es{X})$ and $D_F\coloneqq DRL$.

\ifext{\color{blue}
    \begin{exmp}[Ex.~\ref{ex:chgkind} continued]\label{ex:SigtoSigF}
        Select following injective matrix $R$ satisfying $\im R=\mathcal  U$:
        \[
            R=\begin{bmatrix}
                1 & 0 & 0   \\
                0 & 1 & 0   \\
                0 & 0 & 1/2 \\
                0 & 0 & 1/2 \\
            \end{bmatrix},
        \]
        so that $B_\mathcal {U}=BR=\I_3$ and $D_\mathcal {U}=DR=\0$ hold. Since $A\mathcal  X$ equals $\mathcal  X$, we trivially have $\mathcal V^*(\mathcal  X)=\mathcal  X$ and $\0\in\underline{F}(\mathcal V^*(\mathcal  X))$. Apply input $u=R(Fx+Lw)$ with $F=\0$ and
        \[
            L=\begin{bmatrix}
                1 & 0  \\
                0 & 1  \\
                0 & -1
            \end{bmatrix},
        \]
        which satisfies \eqref{eq:Ldef}. This leads to $(A_F,B_F,C_F,D_F)=(-\I_2,\I_2,[0,-1],\0)$.
        Indeed, change of variable $\smallmat{\eta \\ \varpi}=P^{-1}x$ with
        \[
            P^{-1} = \begin{bmatrix} 1 & 0 & 0 \\ 0 & 1 & 0 \\ 0 & 1 & 1 \end{bmatrix} \Leftrightarrow P = \begin{bmatrix} 1 & 0 & 0 \\ 0 & 1 & 0 \\ 0 & -1 & 1 \end{bmatrix}
        \]
        gives rise to the following dynamics, with $w$ as new input:
        \begin{align*}\label{eq:SigmaF}
            \begin{bmatrix} \dot\eta \\ \dot\varpi \end{bmatrix} & =P^{-1}(A+B_{\mathcal U}F)P \begin{bmatrix} \eta \\ \varpi \end{bmatrix}+P^{-1}B_{\mathcal U} Lw \\
                                       & = -\I_3 \begin{bmatrix} \eta \\ \varpi \end{bmatrix} + \begin{bmatrix}
                \I_2 \\
                \0
            \end{bmatrix}w
        \end{align*}
        and
        \begin{align*}
            y & = (C+D_{\mathcal U}F)P \begin{bmatrix} \eta \\ \varpi \end{bmatrix} +D_{\mathcal U}Lw \\ & =\begin{bmatrix}
                0 & -1 & 1
            \end{bmatrix}\begin{bmatrix} \eta \\ \varpi \end{bmatrix}.
        \end{align*}
        Since first two columns of $P$ span $\mathcal V^*(\mathcal X)$, $\Sigma_{F=\0}$ is governed by following equation
        \begin{equation}\label{eq:SigmaFex}
            \dot \eta(t) = -\I_2 \eta (t)+ \I_2 w(t),\quad y (t)= \begin{bmatrix} 0 & -1 \end{bmatrix} \eta(t),
        \end{equation}
        corresponding to restriction of the dynamics to~$\es{V}^* (\es{X})$.
    \end{exmp}

    Let $x_0\in\mathcal{X}_c$. By construction, it is possible to link any triple $(u,x,y)\in\fs{Q}(x_0)$ with some element of the set $\fs{Q}_F(\eta_{0})$ which gathers all input $w$, state $\eta$ and output $\varphi$ trajectories originating from initial condition $\eta(0)=\eta_0$ and compatible with $\Sigma_F$: Such a relationship is established via the following mapping:
    \begin{align*} 
        E(F,\eta_0;\cdot):\fs{Q}_F(\eta_0) & \rightarrow  \fs{Q} (T \eta_0)      \\
        (w,\eta,\varphi)                   & \mapsto (RLw+RFT\eta,T\eta,\varphi)
    \end{align*}
    where $T$ denote insertion of $\es{V}^*(\es{X})$ in $\real^n$. This mapping is parametrized by $F$ and $\eta_0\in\real^l$ where
    \begin{equation*}
        l \coloneqq  \dim{\es{V}^*(\es{X},A,B_{\es{U}})}.
    \end{equation*}


    \begin{lem} \label{lem:Ebij}
        Assume that {Hyp.}~\ref{hyp:linEns} and $l>0$ hold. For all $\eta_0\in\real^ l$ and $F\in\underline F(\es{V}^*(\es{X},A,B_{\es{U}}) )$:
        \begin{enumerate}[(i)]
            \item mapping $E(F,\eta_0;\cdot)$ is linear and bijective, so that $\fs{Q}_F(\eta_{0}){\coloneqq E^{-1}(F,\eta_0;\fs{Q}(T\eta_0))}$ and $\fs{Q} (T \eta_0)$ are isomorphic;
            \item {for all $(w,\eta,\varphi)\in\fs{Q}_F(\eta_0)$, input $w$ is causal, piecewise continuous and exponentially bounded.}
        \end{enumerate}
    \end{lem}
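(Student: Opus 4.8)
The plan is to verify, one property at a time, that the map $E(F,\eta_0;\cdot)$, $(w,\eta,\varphi)\mapsto(RLw+RFT\eta,\,T\eta,\,\varphi)$, is a linear bijection from $\fs{Q}_F(\eta_0)$ onto $\fs{Q}(T\eta_0)$ whose inverse also preserves regularity; the identity $\fs{Q}_F(\eta_0)=E^{-1}(F,\eta_0;\fs{Q}(T\eta_0))$ and the asserted isomorphism are then immediate. \emph{Linearity} is read off the formula, which is built from the constant operators $R$, $L$, $F$, $T$. \emph{Injectivity}: from an equality of images, the second components give $T\eta_1=T\eta_2$, hence $\eta_1=\eta_2$ since $T$ is injective; the third components give $\varphi_1=\varphi_2$; and the first components then reduce to $RLw_1=RLw_2$, whence $w_1=w_2$ because $R$ and $L$ are injective.

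The core of the argument is the pair of inclusions that make $E(F,\eta_0;\cdot)$ a bijection onto $\fs{Q}(T\eta_0)$. First I would show it maps \emph{into} $\fs{Q}(T\eta_0)$: for $(w,\eta,\varphi)\in\fs{Q}_F(\eta_0)$, set $u\coloneqq RLw+RFT\eta$, $x\coloneqq T\eta$, $y\coloneqq\varphi$; then $u(t)\in\im R=\es U$ and $x(t)\in\im T=\es{V}^*(\es X)\subseteq\es X$, so the constraints hold, and the defining identities $(A+BRF)T=TA_F$, $BRL=TB_F$, $(C+DRF)T=C_F$, $DRL=D_F$ turn the equations of $\Sigma_F$ for $(w,\eta,\varphi)$ into \eqref{eq:sys} for $(u,x,y)$ with $x(0)=T\eta_0$, while $u$ and $x=\H_x[T\eta_0;u]$ inherit causality, piecewise continuity and exponential boundedness from $w$. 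Then I would prove \emph{surjectivity}: given $(u,x,y)\in\fs{Q}(T\eta_0)$, the decisive step is that $x\in\fs{X}$ together with admissibility of $u$ forces $x(t)\in\es{V}^*(\es X)$ for \emph{every} $t$ — this is the characterization of $\es{V}^*(\es X)$ as the largest subspace to which trajectories can be confined inside $\es X$ (Appendix~\ref{ap:geomCont}). One then defines $\eta$ as the unique curve with $T\eta=x$ (so $\eta(0)=\eta_0$), writes $u=Rv$ with $v$ as regular as $u$ (injectivity of $R$), and uses the friend property $(A+BRF)\es{V}^*(\es X)\subseteq\es{V}^*(\es X)$ together with $x(t),\dot x(t)\in\es{V}^*(\es X)$ to get $BR\bigl(v(t)-Fx(t)\bigr)=\dot x(t)-(A+BRF)x(t)\in\es{V}^*(\es X)$, i.e. $v(t)-Fx(t)\in B_{\es U}^{-1}\es{V}^*(\es X)=\im L$; hence $v-Fx=Lw$ for a unique $w$, and the same identities as above give $(w,\eta,y)\in\fs{Q}_F(\eta_0)$ with $E(F,\eta_0;(w,\eta,y))=(u,x,y)$. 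Since this $w$ is obtained from $u$ and $x$ through constant linear maps, it is causal, piecewise continuous and exponentially bounded, which is exactly claim (ii); and the hypothesis $l>0$ ensures the construction is nondegenerate.

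\textbf{Expected main obstacle.} The genuinely delicate points are geometric, not computational. One is invoking correctly that an admissible state trajectory remains in $\es{V}^*(\es X)$ and not merely in $\es X$, so that the coordinate change by $T$ is available at all times, not just at $t=0$. The other is justifying the pointwise membership $v(t)-Fx(t)\in\im L$: it rests on $\dot x(t)\in\es{V}^*(\es X)$, which holds only where $x$ is differentiable, i.e. off the finite set of jumps of $u$; since $v-Fx$ is piecewise continuous and $\im L$ is closed, the conclusion extends to all $t$ by continuity on each piece. Everything else — the two batches of algebraic identities linking $(A,B,C,D)$ with $(A_F,B_F,C_F,D_F)$ and the regularity bookkeeping — is routine once the geometric material of Appendix~\ref{ap:geomCont} is in hand.
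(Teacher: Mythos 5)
Your proposal is correct and follows essentially the same route as the paper: linearity is immediate, injectivity comes from injectivity of $RL$ and $T$, and the regularity of $w$ in (ii) is obtained from $RLw=u-RFT\eta$ together with the regularity of $u$ and $x=T\eta$, exactly as in the paper's proof. The only difference is one of detail: you re-derive the surjectivity step from scratch (admissible trajectories remain in $\es{V}^*(\es{X})$ at all times, and the friend property plus $\dot x(t)\in\es{V}^*(\es{X})$ a.e.\ give $v-Fx\in\im{L}$), whereas the paper takes this ``by construction'', delegating it to the material of Appendix~\ref{ap:geomCont} and \cite[Th.~4.3 and 4.5]{Trentelman2001}, of which your argument is a correct self-contained reproduction.
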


    \begin{proof}
        $E(F,\eta_0;\cdot)$ is surjective by construction. Its injectivity comes from that of $RL$. Linearity is obvious. {This proves (i). As far as (ii) is concerned, pick any $(u,x,y)\coloneqq (RLw+RFT\eta,T\eta,\varphi)$ in $\fs Q(T\eta_0)\subseteq \fs U_\Sigma \times \fs{C}(\real_{\geq 0},\real^n)\times \fs Y$. Since $u\in\fs{U}_\Sigma$ holds, $x=\H_x[T\eta_0;u]$ is causal, continuous and exponentially bounded. Matrix $T$ being injective, $\eta$ enjoys the same property. Due to equality $RLw=u-RFT\eta$ and injectivity of $RL$, $w$ inherits (a) piecewise continuity from $u$ and (b) exponential boundedness and causality from $u$ and~$\eta$.}
    \end{proof}

    Let us emphasize that any triple $(w,\eta,\varphi)$ satisfying differential-algebraic equations analogous to \eqref{eq:sys} (but associated with $\Sigma_F$) belongs to $\fs{Q}_F(x_0)$. This is in stark contrast with triple $(u,x,y)\in\fs{Q}(x_0)$ which satisfies \eqref{eq:sys} and, at the same time, belongs to $\fs{U} \times \fs{X} \times \fs{Y}$ (see \eqref{eq:Qdef}). Notwithstanding, $E(F,\eta_0;\cdot)$ maps $\fs{Q}_F(x_0)$ to $\fs{Q}(x_0)$ in a bijective way. To arrive at this result, input and state constraints associated to $\mathfrak S$ have been somehow \emph{structurally embed} into quadruple of $\Sigma_F$.

    \begin{exmp}[Ex.~\ref{ex:chgkind} continued]
        Exerting input $w:t\mapsto [1,1]^\intercal$ to $\Sigma_{F=\0}$ for $\eta_0=[1,0]^\intercal$ produces $\eta:t\mapsto[1,1-e^{-t}]$ and, in turn, $y=\varphi:t\mapsto e^{-t}-1$.
        Since $T$ can be selected as the first two columns of $P$, one can check that
        \begin{multline*}
            (u,x,y) =  E(\0,\eta_0;(w,\eta,\varphi)) \\
            =(t\mapsto \smallmat{ 1 \\ 1\\ -1/2\\-1/2 }, t\mapsto \smallmat{  1 \\ 1 -e^{-t}\\ e^{-t}-1},t\mapsto e^{-t}-1)
        \end{multline*}
        belongs to $\fs{Q}(x_0=[ 1,0,0]^\intercal)$, i.e. both \eqref{eq:sys} and $(u(t),x(t))\in\mathcal U \times\mathcal X$ hold for all $t\in\real_ {\geq 0}$.
    \end{exmp}

    \subsection{Characterizations of IR}

    In this subsection, it is shown how the results of \cite{Kreiss2020} can be extrapolated to the context of current Section~\ref{sec:lincons}.

    Assume that $l>0$ hold, and select any $\eta_0\in\real^ l$ and $F\in\underline F(\es{V}^*(\es{X},A,B_{\es{U}}) )$. Let $x_0\coloneqq T\eta_0$. For $k\in\{1,2\}$, pick any $(w_k,\eta_k,\varphi_k)\in\fs{Q}_F(\eta_{0})$ such that $\varphi_1=\varphi_2$ and $w_1\neq w_2$ hold. Define $(u_k,x_k,y_k)\coloneqq E(F,\eta_0;(w_k,\eta_k,\varphi_k))$. Then, (i) $y_1=y_2$ holds, by definition of $E(F,\eta_0;\cdot)$, and (ii) $u_1\neq u_2$ holds, since $u_1=u_2\Rightarrow x_1=x_2$ which would contradict injectivity $E(F,\eta_0;\cdot)$. This proves that if $(\eta_0,\varphi_1)$ is an IR pair for $\Sigma_F$, then $(x_0,y_1)$ is an IR pair for $\mathfrak S$. Using the same reasoning, one can prove that the opposite implication is also valid, i.e. if $(x_0,y_1)$ is an IR pair for $\mathfrak S$, then $(\eta_0,\varphi_1)$ is an IR pair for $\Sigma_F$.
    Besides, equality $\eta_1= \eta_2$ is equivalent to $x_1= x_2$, by definition of $E(F,\eta_0;\cdot)$ and from injectivity of $T$. This allows to conclude that $(\eta_0,\varphi_1)$ is an IR pair for $\Sigma_F$ of the $k$-th kind iff $(x_0,y_1)$ is an IR pair for $\mathfrak S$ of the same kind.

}\fi

\ifext{\color{blue}
    From this discussion and Th.\ref{th:locglob}, one can readily extrapolate the results of \cite{Kreiss2020} since Lem.~\ref{lem:Ebij} ensures that $w_k$ are causal, piecewise continuous and exponentially bounded. For the sake of completeness, the conclusions drawn in this way are now exposed.
}\else
Then, one can easily extrapolate the results of \cite{Kreiss2020} to this context of linear constraints, see \cite{Tregouet2023} for details.
\fi

\begin{thm}\label{th:charac}
    Assume that
    \ifext{\color{blue}$l>0$}\else $l \coloneqq  \dim{\es{V}^*(\es{X},A,B_{\es{U}})}>0$ \fi and {Hyp.}~\ref{hyp:linEns} hold. Select any friend $F$ of $\es{V}^*(\es{X},A,B_{\es{U}})$. Define integers $\rho_F$, $\nu_F$ and subspace $\es{N}_F\subseteq \es{U}$ as follows:
    \begin{align}\label{eq:rhodef}
        \rho_F   & \coloneqq \dim (\ker{B_F}\cap\ker{D_F}),                \\
        \es{N}_F & \coloneqq B_F^{-1}\mathcal{V}(\Sigma_F) \cap \ker{D_F}, \\ \label{eq:mudef}
        \nu_F    & \coloneqq \dim ( \es{N}_F / \ker{\smallmat{B_F          \\ D_F}} ) = \dim(\es{N}_F) -\rho_F.
    \end{align}
    Then, the following statements are equivalent:
    \begin{enumerate}[(i)]
        \item System $\mathfrak S$ is IR;
        \item $\nu_F>0$ or $\rho_F>0$;
        \item Transfer matrix $G_F(s)\coloneqq C_F (s\I -A_F)^{-1}B_F+D_F$ of $\Sigma_F$ is not left-invertible,
        \item System matrix $P_F(s)\coloneqq \smallmat{ s\I -A_F & -B_F \\ C_F & D_F }$ of $\Sigma_F$ is not left-invertible. 
    \end{enumerate}
    Furthermore, the kind of redundancy of $\mathfrak S$ is characterized by $\rho_F$ and $\nu_F$, as in the following table:
    \begin{center}
        \begin{tabular}{|c|c|c|} \hline
            IR       & $\rho_F$ & $\nu_F$ \\ \hline \hline
            1st kind & $>0$     & $=0$    \\ \hline
            2nd kind & $=0$     & $>0$    \\ \hline
            3rd kind & $>0$     & $>0$    \\ \hline
        \end{tabular}
    \end{center}
    Besides, it holds
    \begin{equation*}\label{eq:nudim}
        \dim\es{R}(\Sigma_F) >0 \;\Leftrightarrow\; \nu_F>0.
    \end{equation*}
\end{thm}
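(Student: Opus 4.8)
The plan is to transport the analogous characterisation of input redundancy for the \emph{unconstrained} linear system $\Sigma_F$ --- namely \cite[\characIR]{Kreiss2020} --- through the linear bijection $E(F,\eta_0;\cdot)$ of Lem.~\ref{lem:Ebij}, the only genuinely new computation being the geometric identity $\dim\es{R}(\Sigma_F)>0\Leftrightarrow\nu_F>0$. \emph{Step~1 (reduction to $\Sigma_F$).} Recall from the geometric construction (Appendix~\ref{ap:geomCont}) that $\es{X}_c=\es{V}^*(\es{X},A,B_{\es{U}})=\im{T}$, so every admissible initial condition reads $x_0=T\eta_0$ for a unique $\eta_0\in\real^l$. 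For such $x_0$, Lem.~\ref{lem:Ebij} gives the linear bijection $E(F,\eta_0;\cdot):\fs{Q}_F(\eta_0)\to\fs{Q}(T\eta_0)$, and the discussion immediately preceding the statement shows that, under it, $(T\eta_0,y)$ is an IR pair of the $k$-th kind for $\mathfrak S$ iff $(\eta_0,y)$ is an IR pair of the $k$-th kind for $\Sigma_F$: distinctness of $u_1,u_2$ matches that of $w_1,w_2$ because $E$ is an injective linear bijection and $T\eta_0$ fixes the initial condition, while $x_1=x_2$ matches $\eta_1=\eta_2$ by injectivity of $T$. Letting $x_0$ range over $\es{X}_c$ we obtain: (a) $\mathfrak S$ is IR iff $\Sigma_F$ is IR; and (b) since $\Sigma_F$ is \emph{unconstrained} and linear, \cite[\locglobUnc]{Kreiss2020} (equivalently, Th.~\ref{th:locglob}) forces all its IR pairs to be of one and the same kind, whence $\mathfrak S$ is IR of the $k$-th kind iff $\Sigma_F$ is, and then $\fs{S}=\fs{S}_k$.

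\emph{Step~2 (invoke the unconstrained characterisation).} By Lem.~\ref{lem:Ebij}(ii) the $w$-signals in play are causal, piecewise continuous and exponentially bounded, so \cite[\characIR]{Kreiss2020} applies verbatim to the quadruple $(A_F,B_F,C_F,D_F)$. It yields: $\Sigma_F$ is IR iff $\dim\es{R}(\Sigma_F)>0$ or $\rho_F>0$; iff $G_F(s)$ is not left-invertible; iff $P_F(s)$ is not left-invertible; and $\Sigma_F$ is IR of the 1st, 2nd, 3rd kind according to whether $(\rho_F>0,\ \dim\es{R}(\Sigma_F)=0)$, $(\rho_F=0,\ \dim\es{R}(\Sigma_F)>0)$, $(\rho_F>0,\ \dim\es{R}(\Sigma_F)>0)$ holds. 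Together with Step~1 this already yields statements (i)--(iv) and the kind table, provided $\dim\es{R}(\Sigma_F)>0$ is rewritten as $\nu_F>0$.

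\emph{Step~3 (the geometric identity; expected main obstacle).} It remains to prove $\dim\es{R}(\Sigma_F)>0\Leftrightarrow\nu_F>0$, which is also the final assertion of the theorem. For any friend $F'$ of $\es{V}(\Sigma_F)$, standard geometric control theory (Appendix~\ref{ap:geom}, or \cite{Trentelman2001}) gives
\[
    \es{R}(\Sigma_F)=\langle A_F+B_FF' \mid \es{V}(\Sigma_F)\cap B_F\ker{D_F}\rangle ,
\]
where $\langle M\mid\es{S}\rangle\coloneqq\es{S}+M\es{S}+\dots+M^{l-1}\es{S}$, and one checks directly that $\es{V}(\Sigma_F)\cap B_F\ker{D_F}=B_F(B_F^{-1}\es{V}(\Sigma_F)\cap\ker{D_F})=B_F\es{N}_F$. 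Since $B_F\es{N}_F\subseteq\es{R}(\Sigma_F)$ while $\es{R}(\Sigma_F)=\0$ forces $B_F\es{N}_F=\0$, we get $\dim\es{R}(\Sigma_F)>0\Leftrightarrow B_F\es{N}_F\neq\0\Leftrightarrow\es{N}_F\not\subseteq\ker{B_F}$; and because $\es{N}_F\subseteq\ker{D_F}$, the last inclusion is equivalent to $\es{N}_F\subseteq\ker{B_F}\cap\ker{D_F}$, i.e. to $\dim\es{N}_F=\rho_F$, i.e. to $\nu_F=0$. Feeding this into Step~2 completes the argument. The conceptual content is concentrated here: apart from this identity everything is bookkeeping, but getting right the \emph{direct-feedthrough} form of the formula for $\es{R}(\Sigma_F)$, together with the facts $\es{X}_c=\im{T}$ and that $E$ preserves (in)equality of signals in both directions, is where care is genuinely required.
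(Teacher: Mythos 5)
Your proposal is correct and follows essentially the same route as the paper: reduce to the unconstrained auxiliary system $\Sigma_F$ via the bijection $E(F,\eta_0;\cdot)$ of Lem.~\ref{lem:Ebij} together with $\es{X}_c=\es{V}^*(\es{X})=\im{T}$ and the kind-preserving correspondence of IR pairs, then apply the characterization of \cite{Kreiss2020} verbatim to the quadruple $(A_F,B_F,C_F,D_F)$, with Th.~\ref{th:locglob} giving uniformity of the kind. The only deviation is your Step~3, where you re-derive $\dim\es{R}(\Sigma_F)>0\Leftrightarrow\nu_F>0$ from the standard formula $\es{R}(\Sigma_F)=\langle A_F+B_FF'\mid \es{V}(\Sigma_F)\cap B_F\ker{D_F}\rangle$ and the identity $\es{V}(\Sigma_F)\cap B_F\ker{D_F}=B_F\es{N}_F$; this derivation is correct, but the paper simply inherits that equivalence (already phrased in terms of $\rho$ and $\nu$) from \cite[\characIR]{Kreiss2020}.
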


\ifext{\color{blue}
    \begin{exmp}[Ex.~\ref{ex:chgkind} continued]\label{ex:SigtoSigFdeg}
        For this example, one gets $\es{R}(\Sigma_F)=\mathcal{V}(\Sigma_F)=\es{N}_F=\im{\smallmat{1 \\ 0}}$ and $\rho_F=0$. Thus, $\mathfrak S$ is IR of the 2nd kind with degree $(0,1)$.

        Let us exemplify discussion conducted in \cite[\matview]{Kreiss2020} on $\Sigma_F$. First note that original input and state basis of $\Sigma_F$ are already adapted to $\es{N}_F$ and $\es{R}(\Sigma_F)$, respectively. Thus, $\Sigma_F$ enjoys a cascaded structure without regular feedback. Here, this cascade actually degenerates into two decoupled subsystems. Indeed, \eqref{eq:SigmaFex} can be rewritten as follows:
        \begin{equation*}
            \dot \eta_a (t)= -\eta_a (t)+ w_a(t),\quad \dot y(t) = -y(t) - w_b (t),
        \end{equation*}
        using notations $w\coloneqq [w_a,w_b]^\intercal$ and $\eta\coloneqq [\eta_a,\eta_b]^\intercal$.
        The crucial point is the following: If $w_b$ is uniquely defined by initial condition $x(0)=[\eta^\intercal (0),y(0)]^\intercal$ and output $y$, first input $w_a$ does not impact $y$ and, hence, can be arbitrarily selected. For instance, pick $w_1:t\mapsto [1,0]$ and $w_2=\0$. Those inputs produce $\eta_1:t\mapsto [1-e^{-t},0]$ and $\eta_2=\0$ from zero initial condition, respectively. By $E$, one derives corresponding triples: It holds $(u_1,x_1,y_1)=(t\mapsto [1,0,0,0]^\intercal,t\mapsto [1-e^{-t},0,0],\0)$ and $(u_2,x_2,y_2)=(\0,\0,\0)$, so that both admissible $u_1$ and $u_2$ produce admissible state trajectories and identical outputs.
    \end{exmp}
}\fi

To conclude, note that the concept of degree of IR defined in \cite[\secdeg]{Kreiss2020} can be readily generalized in this context.

\begin{defn}[Degree of IR]
    Assume that {Hyp.}~\ref{hyp:linEns} and $l>0$ hold.
    Pair $(\rho_F,\nu_F)$, defined via \eqref{eq:rhodef} and \eqref{eq:mudef}, is the \emph{degree of redundancy} of system $\mathfrak S$.
\end{defn}

\ifext{\color{blue}

    \subsection{Additional remarks}

    \begin{rem}[If $l=0$] \label{rem:l0}
        Consider the degenerate case where $\es{V}^*(\es{X},A,B_{\es{U}})$ equals $\{\0\}$, i.e. $l=0$. In such a situation, state trajectory must start from the origin and cannot escape from this point without violating input or state constraints. System $\mathfrak S$ is somehow over-constrained and state-space of $\Sigma_F$ collapses to $\{\0\}$. Therefore, set $\fs{U}_a(x_0)$ of admissible inputs for $x_0$ is non empty iff $x_0=\0$. In this case, $\fs{U}_a(x_0)$ reads $\fs{U}_\Sigma\cap \{u:\real_ {\geq 0}\rightarrow\es{U}\cap\ker B\}$. Furthermore, $\Sigma_F$ degenerates into static input-output map $w\mapsto \varphi = Mw$ where $M\coloneqq D\mid (\es{U}\cap\ker B)$. It should be clear that $\mathfrak S$ is IR iff $\dim{\es{U}\cap\ker{\smallmat{B\\ D}}}=\dim\ker M >0$. In such a case, every \sing{} pair is of the 1st kind.
    \end{rem}
}\fi

\begin{rem}[Dependency w.r.t. $F$]
    From discussion above, 
    one concludes that $T\es{V}(\Sigma_F)$ is the weakly unobservable subspace of quadruple $(A,B_{\es{U}},\smallmat{C \\ X^\perp},\smallmat{D_{\es{U}} \\ \0})$, where $X^\perp$ satisfies $\ker{X^\perp}=\es{X}$. It follows that $\es{V}(\Sigma_F)$ is independent of $F\in\es{V}^*(\es{X})$. Observe that $B_F$, $D_F$ and, in turn, $\rho_F$ do not depend on $F$ either.\footnote{Subscript $F$ of $B_F$ and $D_F$ aims distinguishing those matrices from $B$ and $D$.}
    As a result, $\es{N}_F$ and $\nu_F$ too are independent on the selection of $F$.
    In the same vein, note that friend $F\in\underline F(\es{V}^*(\es{X}) )$ can be arbitrarily selected in Th.~\ref{th:charac}. This proves that left-invertibility of $G_F(s)$ and $P_F(s)$ are independent of $F$.
\end{rem}

\section{Conclusions}

This paper investigates how input and state constraints affect IR. It is perhaps the most natural extension of results obtained in \cite{Kreiss2020}, and the most desirable from the control application point of view.

The general case where $\es{U}$ and $\es{X}$ are arbitrary is first investigated. If those sets are closed, then Cor.~\ref{cor:notFenetre} proves that admissible pairs $(x_0,y)$ losing \singy{} due to the constraints are necessarily associated with trajectories that lie on the boundary of $\es{U}$ and $\es{X}$. It is also shown that constraints might (i) change the kind of \singy{} of the system (see Ex.~\ref{ex:chgkind}) and (ii) give rise to \sing{} pairs of different kinds coexisting for the same system (see Ex.~\ref{ex:difkind}). Lem.~\ref{lem:chgkind} also proves that such a phenomena can be observed only if $\Sigma$ is IR of the 3rd kind. Otherwise, the kind of redundancy is preserved, provided that redundancy itself is not destroyed by the constraints.

Whenever both input and state spaces are linear, admissible trajectories of $\mathfrak{S}$ are uniquely associated to that of an \emph{unconstrained} auxiliary system $\Sigma_F$, by way of some bijective mapping\ifext\else, see \cite{Tregouet2023}\fi. This key feature allows to extrapolate results derived in \cite{Kreiss2020}.

In the general case, condition for admissible $(x_0,y)$ to be \sing{} is only sufficient (see Th.~\ref{th:singNL}). Refinement of this condition as well as derivation of a necessary counterpart might improve understanding of how geometry of constraints impacts IR.

Given an \sing{} pair $(x_0,y)$, so that there exist $u_{1}\neq u_{2}$ satisfying $(u_1,y),(u_2,y)\in\fs W(x_0)$. In the linear case, it is implicitly proved that $u_{1,2}$ can be selected as ``different'' as desired. In practice this feature is of major interest since it is expected that ``insufficiently different'' inputs might be useless for redundancy to be exploited. Since this property is destroyed in the non linear case, a criterion ensuring that $(x_0,y)$ is ``\sing{} enough'' would be desirable.

Another possible extension of this work is the refinement of results of Sect.~\ref{sec:NL} in the case where $\es{U}$ and $\es{X}$ enjoy specific properties like e.g. convexity. In this case, it is expected that tighter approximation of $\fs{S}$ can be derived.

\bibliographystyle{ieeetr}
\bibliography{ref}           

\begin{thebibliography}{10}

\bibitem{TREGOUET2024}
J.-F. Trégouët and J.~Kreiss, ``Input redundancy under input and state
  constraints,'' {\em Automatica}, vol.~159, p.~111344, 2024.

\bibitem{Kreiss2020}
J.~Kreiss and J.-F. Trégouët, ``Input redundancy: Definitions, taxonomy,
  characterizations and application to over-actuated systems,'' {\em Control \&
  System Letters}, vol.~158, p.~105060, 2021.

\bibitem{Tregoueet2019}
J.-F. Tr{\'e}gou{\"e}t and R.~Delpoux, ``New framework for parallel
  interconnection of buck converters: Application to optimal current-sharing
  with constraints and unknown load,'' {\em Control Engineering Practice},
  vol.~87, pp.~59--75, 2019.

\bibitem{Bouarfa2017}
A.~Bouarfa, M.~Bodson, and M.~Fadel, ``An optimization formulation of converter
  control and its general solution for the four-leg two-level inverter,'' {\em
  IEEE Transactions on Control Systems Technology}, vol.~26, no.~5,
  pp.~1901--1908, 2017.

\bibitem{Tregouet2015}
J.-F. Tr{\'e}gou{\"e}t, D.~Arzelier, D.~Peaucelle, C.~Pittet, and L.~Zaccarian,
  ``Reaction wheels desaturation using magnetorquers and static input
  allocation,'' {\em Control Systems Technology, IEEE Transactions on},
  vol.~23, no.~2, pp.~525--539, 2015.

\bibitem{Johansen2013}
T.~A. Johansen and T.~I. Fossen, ``Control allocation—a survey,'' {\em
  Automatica}, vol.~49, no.~5, pp.~1087--1103, 2013.

\bibitem{Huang2007}
Y.~Huang and C.~K. Tse, ``Circuit theoretic classification of parallel
  connected dc-dc converters,'' {\em IEEE Transactions on Circuits and Systems
  I: Regular Papers}, vol.~54, pp.~1099--1108, May 2007.

\bibitem{Durham1993}
W.~C. Durham, ``Constrained control allocation,'' {\em Journal of Guidance,
  Control, and Dynamics}, vol.~16, no.~4, pp.~717--725, 1993.

\bibitem{Enns1998}
D.~Enns, ``Control allocation approaches,'' in {\em Guidance, Navigation, and
  Control Conference and Exhibit}, pp.~98--108, 1998.

\bibitem{Harkegard2005}
O.~Harkegard and S.~T. Glad, ``Resolving actuator redundancy-optimal control
  vs. control allocation,'' {\em Automatica}, vol.~41, no.~1, pp.~137--144,
  2005.

\bibitem{Zaccarian2009}
L.~Zaccarian, ``Dynamic allocation for input redundant control systems,'' {\em
  Automatica}, vol.~45, no.~6, pp.~1431--1438, 2009.

\bibitem{Cocetti2018}
M.~Cocetti, A.~Serrani, and L.~Zaccarian, ``Linear output regulation with
  dynamic optimization for uncertain linear over-actuated systems,'' {\em
  Automatica}, vol.~97, pp.~214--225, 2018.

\bibitem{Oppenheimer2010}
M.~Oppenheimer, D.~Doman, and M.~Bolender, {\em Control allocation in The
  control handbook}, ch.~8.
\newblock 2010.

\bibitem{Fossen2006}
T.~Fossen and T.~Johansen, ``A survey of control allocation methods for ships
  and underwater vehicles,'' in {\em Control and Automation, 2006. MED '06.
  14th Mediterranean Conference on}, pp.~1--6, June 2006.

\bibitem{Kreiss2019}
J.~Kreiss, J.-F. Tr{\'e}gou{\"e}t, R.~Delpoux, J.-Y. Gauthier, and X.~Lin-Shi,
  ``A new framework for dealing with input constraints on parallel
  interconnection of buck converters,'' in {\em 18th European Control
  Conference}, pp.~429--434, IEEE, 2019.

\bibitem{Kreiss2021}
J.~Kreiss, M.~Bodson, R.~Delpoux, J.-Y. Gauthier, J.-F. Tr{\'e}gou{\"e}t, and
  X.~Lin-Shi, ``Optimal control allocation for the parallel interconnection of
  buck converters,'' {\em Control Engineering Practice}, vol.~109, p.~104727,
  2021.

\bibitem{Galeani2015}
S.~Galeani, A.~Serrani, G.~Varano, and L.~Zaccarian, ``On input
  allocation-based regulation for linear over-actuated systems,'' {\em
  Automatica}, vol.~52, pp.~346--354, 2015.

\bibitem{Serrani2012}
A.~Serrani, ``Output regulation for over-actuated linear systems via inverse
  model allocation,'' in {\em 51st IEEE Conference on Decision and Control},
  pp.~4871--4876, IEEE, 2012.

\bibitem{Johansen2004}
T.~A. Johansen, ``Optimizing nonlinear control allocation,'' in {\em Decision
  and Control, 2004. CDC. 43rd IEEE Conference on}, vol.~4, pp.~3435--3440,
  IEEE, 2004.

\bibitem{Bordignon1995}
K.~Bordignon and W.~Durham, ``Null-space augmented solutions to constrained
  control allocation problems,'' in {\em Guidance, Navigation, and Control
  Conference}, pp.~328--333, 1995.

\bibitem{Valmorbida2013}
G.~Valmorbida and S.~Galeani, ``Nonlinear output regulation for over-actuated
  linear systems,'' in {\em Decision and Control (CDC), 2013 IEEE 52nd Annual
  Conference on}, pp.~4485--4490, IEEE, 2013.

\bibitem{Galeani2013}
S.~{Galeani} and G.~{Valmórbida}, ``Nonlinear regulation for linear fat
  plants: The constant reference/disturbance case,'' in {\em 21st Mediterranean
  Conference on Control and Automation}, pp.~683--690, June 2013.

\bibitem{Zhou2016a}
J.~Zhou, M.~Canova, and A.~Serrani, ``Non-intrusive reference governors for
  over-actuated linear systems,'' {\em IEEE Transactions on Automatic Control},
  vol.~62, no.~9, pp.~4734--4740, 2016.

\bibitem{Brammer1972}
R.~F. Brammer, ``Controllability in linear autonomous systems with positive
  controllers,'' {\em SIAM Journal on Control}, vol.~10, no.~2, pp.~339--353,
  1972.

\bibitem{Tarbouriech2011}
S.~Tarbouriech, G.~Garcia, J.~M.~G. da~Silva~Jr, and I.~Queinnec, {\em
  Stability and stabilization of linear systems with saturating actuators}.
\newblock Springer Science \& Business Media, 2011.

\bibitem{Hu2001}
T.~Hu and Z.~Lin, {\em Control systems with actuator saturation: analysis and
  design}.
\newblock Springer Science \& Business Media, 2001.

\bibitem{Saberi2002}
A.~Saberi, A.~A. Stoorvogel, and P.~Sannuti, {\em Control of linear systems
  with regulation and input constraints}.
\newblock Springer Science \& Business Media, 2000.

\bibitem{Trentelman2001}
H.~L. Trentelman, A.~A. Stoorvogel, and M.~Hautus, {\em Control theory for
  linear systems}.
\newblock Springer Science \& Business Media, 2001.

\bibitem{Kreiss2018}
J.~Kreiss, J.-F. Tr{\'e}gou{\"e}t, R.~Delpoux, J.-Y. Gauthier, and X.~Lin-Shi,
  ``A geometric point of view on parallel interconnection of buck converters,''
  in {\em 2018 European Control Conference (ECC)}, pp.~70--75, IEEE, 2018.

\bibitem{Wonham1985}
W.~M. Wonham, {\em Linear multivariable control: a geometric approach}.
\newblock Springer Verlag, 3rd~ed., 1985.

\bibitem{Anderson1975}
B.~D. Anderson, ``Output-nulling invariant and controllability subspaces,''
  {\em IFAC Proceedings Volumes}, vol.~8, no.~1, Part 1, pp.~337 -- 345, 1975.
\newblock 6th IFAC World Congress (IFAC 1975) - Part 1: Theory,
  Boston/Cambridge, MA, USA, August 24-30, 1975.

\bibitem{Antsaklis2006}
P.~J. Antsaklis and A.~N. Michel, {\em Linear systems}.
\newblock Springer Science \& Business Media, 2006.

\end{thebibliography}

\appendix

\ifext{\color{blue}
\section{Background on geometric control theory\label{ap:geom}}

Let us summarized some essential aspects of geometric control theory for system $\Sigma$ governed by \eqref{eq:sys}.

\subsection{Controlled invariance}

A subspace $\mathcal W$ is said to be an $(A , B)$-\emph{controlled invariant} subspace (or simply \emph{controlled invariant} subspace) if, for any initial state $x_0\in\mathcal W$, there exists an input function $u$ such that the state trajectory generated by the system remains identically in $\mathcal W$. Subspace $\mathcal W$ enjoys this property iff there exists a matrix $F$ such that
\begin{equation}\label{eq:charFriend}
    (A+BF)\mathcal W\subseteq \mathcal W
\end{equation}
holds. Matrix $F$ is called a \emph{friend} of $\mathcal W$ and the set of such matrices is denoted by $\underline F(\mathcal W)\coloneqq \{F:\eqref{eq:charFriend}\}$.

Consider any $x_0\in\mathcal{W}$, $F\in\underline F(\mathcal W)$ and $L$ such that $\im L\in  B^{-1}\mathcal{W}$. Then, state trajectory remains in $\mathcal W$ iff there exists a function $w$ such that corresponding input $u$ reads
\begin{equation}\label{eq:ukappa}
    u:t\mapsto \kappa_w (t,x(t)),
\end{equation}
where time-varying feedback $\kappa_w$ is defined as follows
\begin{equation} \label{eq:kappadef}
    \kappa_w:(t,x)\mapsto Fx+Lw(t),
\end{equation}
and is parametrized by $w$ \cite[Th. 4.3]{Trentelman2001}. In particular, selecting $w=\0$ proves that simple feedback $u(t)=Fx(t)$ makes $\mathcal W$ controlled invariant. This input function admits implicit form $u=F\H_x[x_0;u]$ from which  closed-form expression $u:t\mapsto Fe^{t(A+BF)}x_0$ can be derived by means of \eqref{eq:sysAB}.

\subsection{Weak unobservability}

Following \cite[p.159]{Trentelman2001}, a point $x_0\in\real^n$ is called \emph{weakly unobservable} if there exists an input function $u$ such that corresponding output is identically null. The set of all weakly unobservable point is denoted by $\mathcal{V}(\Sigma)$, i.e.
\begin{equation*}
    \mathcal{V}(\Sigma) \coloneqq  \{x_0 \in \real^n \mid \exists u:\real_{\geq 0}\rightarrow \real^m : \H[x_0;u]=\N\}.
\end{equation*}
It can be proved that $\mathcal{V}(\Sigma)$ is a linear subspace of $\real^n$. Referring to the following inclusion parametrized by subspace $\mathcal W\subseteq\real^n$,
\begin{equation} \label{eq:charFriendV}
    \mathcal{W} \subseteq \ker{C+DF},
\end{equation}
set $\mathcal{V}(\Sigma)$ is the largest subspace $\mathcal W$ for which there exists $F$ such that \eqref{eq:charFriend} and \eqref{eq:charFriendV} hold \cite[Th.~7.10]{Trentelman2001}. Such a matrix $F$ is called a friend of $\mathcal V(\Sigma)$ and the set of such friends is denoted by $\underline F(\mathcal V(\Sigma))$.\footnote{As in \cite{Trentelman2001}, two definitions of $\underline F(\mathcal W)$ coexist depending if $\mathcal W$ is $\mathcal V(\Sigma)$ or any other subspace. The context should clarify which one is referred to, though.}

Consider any $x_0\in\mathcal{V}(\Sigma)$, $F\in\underline F(\mathcal V(\Sigma))$ and $L$ satisfying
\begin{equation} \label{eq:Limdef}
    \im L\in B^{-1}\mathcal V(\Sigma) \cap \ker D.
\end{equation}
Then, input $u:\real_{\geq 0}\rightarrow \real^m$ satisfies $\H[x_0;u]=\N$ iff it reads \eqref{eq:ukappa} where $\kappa_w$ is given by \eqref{eq:kappadef} and is parametrized by some function $w$ \cite[Th.~7.11]{Trentelman2001}. Selecting $w=\0$ yields $u(t)=Fx(t)$ so that output $y$ is governed by $\dot x = (A+BF)x,\;y=(C+DF)x$ (see \eqref{eq:sys}). From previous characterization of $\mathcal V(\Sigma)$ via \eqref{eq:charFriend} and \eqref{eq:charFriendV}, this input not only makes output identically zero ($\mathcal{V}(\Sigma) \subseteq \ker{C+DF}$) but also enforces state trajectory to remain in $\mathcal V(\Sigma)$  ($(A+BF)\mathcal{V}(\Sigma) \subseteq \mathcal{V}(\Sigma)$). This explains alternative naming of $\mathcal V(\Sigma)$ as the largest output nulling controlled invariant subspace \cite{Anderson1975}.

Note that if $D$ equals $\0$, then $\mathcal{V}(\Sigma)$ reduces to $\es{V}^*(\ker{C},A,B)$, the largest $(A,B)$-controlled invariant subspace contained in the kernel of $C$.


\subsection{Controllability and weak unobservability}

From \cite[p.163]{Trentelman2001}, a point $x_0\in\real^n$ is called \emph{controllable weakly unobservable} if there exist $T>0$ and $u:[0,T]\rightarrow \real^m$ such that the following conditions hold with $x_f=\0$:
\begin{subequations} \label{eq:Rstarcond}
    \begin{align}\label{eq:Rstarcondy}
        \H[x_0;u](t)   & =\0,\; \forall t\in[0,T], \\ \label{eq:Rstarcondx}
        \H_x[x_0;u](T) & =x_f.
    \end{align}
\end{subequations}
The set of all controllable weakly unobservable points is denoted by $\es{R}(\Sigma)$. As suggested by its definition, $\es{R}(\Sigma)$ is included in $\mathcal{V}(\Sigma)$. It can also be proved that $\es{R}(\Sigma)$ is a linear subspace of $\real^n$ which satisfies \eqref{eq:charFriend} and \eqref{eq:charFriendV} with $\mathcal W=\es{R}(\Sigma)$ and any $F$ in $\underline F(\mathcal V(\Sigma))$.

\subsection{A matrix view point}

Pick any $F\in\underline F(\mathcal V(\Sigma))$ and $L$ satisfying \eqref{eq:Limdef}. Choose a basis of state space $\real^n$ adapted to $\es{R}(\Sigma)$ and $\mathcal{V}(\Sigma)$, i.e. define invertible matrices $T=[T_a,T_b,T_c]$ satisfying $\im{T_a}=\es{R}(\Sigma)$ and $\im{[T_a,T_b]}=\mathcal{V}(\Sigma)$. Apply input $u$ given  by \eqref{eq:ukappa} with \eqref{eq:kappadef} to $\Sigma$. Resulting closed-loop system is such that new input $w$ drives output $y$ via quadruple $(A+BF,BL,C+DF,DL)$. From the above discussion, dynamics in the new basis reads:
\begin{subequations} \label{eq:decIntExt}
    \begin{align}
        \begin{bmatrix} \dot \phi_a \\ \dot \phi_b \\ \dot \phi_c \end{bmatrix} & =
        \begin{bmatrix} A_{11} &  A_{12} &  A_{13} \\ \0 & A_{22} &  A_{23} \\ \0 & \0 &  A_{33}  \end{bmatrix} \begin{bmatrix}  \phi_a \\  \phi_b \\  \phi_c \end{bmatrix} + \begin{bmatrix} B_{1} \\ \0 \\ \0 \end{bmatrix} w, \\
        y                          & = \begin{bmatrix} \0 & \0 & C_{3} \end{bmatrix} \begin{bmatrix}  \phi_a \\  \phi_b \\  \phi_c \end{bmatrix}.
    \end{align}
\end{subequations}
Here, $\phi_i$ denotes coordinates of $x$ in $\im{T_i}$, so that $[\phi_a^\intercal,\phi_b^\intercal,\phi_c^\intercal]^\intercal=T^{-1}x$ is the new state vector. One recovers that for any $w$ and $x_0\in\es{V}(\Sigma)$ (so that $\phi_c(0)=\0$), state trajectory remains identically in $\es{V}(\Sigma)$ (since substate $\phi_c$ equals $\0$), yielding $y=\0$.

\subsection{Facts about controllability\label{sec:contr}}

From \eqref{eq:Rstarcond}, $\es{R}(\Sigma)$ gathers initial states which are output-nulling ``controllable-to-the-origin''. It actually coincides with the set of output-nulling reachable states (reachability being ``controllability-from-the-origin'') \cite[p.170]{Trentelman2001}, i.e. $x_f$ belongs to $\es{R}(\Sigma)$ iff there exist $T>0$ and $u:[0,T]\rightarrow \real^m$ such that \eqref{eq:Rstarcondy} and \eqref{eq:Rstarcondx} hold with $x_0=\0$. Among other things, this proves that pair $( A_{11}, B_1)$ is controllable.
To sees this, let $x_0=\0$, pick any $\phi_{a,f}$ and define $x_f\coloneqq T_a\phi_{1,f}\in\es{R}(\Sigma)$. Since \eqref{eq:Rstarcondy} holds for some $T$ and $u$, so that the latter reads \eqref{eq:ukappa} with \eqref{eq:kappadef} and resulting dynamics is governed by \eqref{eq:decIntExt}. In the new coordinates, \eqref{eq:Rstarcondx} reads $\phi_{a}(T)=\phi_{a,f}$ which proves controllability of $( A_{11}, B_1)$  since $T$ and $\phi_{a,f}$ are arbitrary.

Standard result on controllability (see e.g. \cite[Cor. 2.14]{Antsaklis2006}) ensures that substate $\phi_a$ can be transferred from any $\phi_{a,0}$ to any $\phi_{a,f}$ in arbitrary finite time $T>0$ by way of input defined in \eqref{eq:wtrans}, i.e. $\phi_{a}(T)=\phi_{a,f}$ when $\phi_{a}(0)=\phi_{a,0}$ and $w$ reads as follows
\begin{equation} \label{eq:wtrans}
    w:t\mapsto B_1^\intercal e^{A_{11}^\intercal (T-t)} W_r^{-1}(T) (\phi_{a,f}-e^{A_{11}T}\phi_{a,0}),
\end{equation}
where $W_r:T\mapsto \textstyle\int_0^T e^{(T-\tau)A_{11}}B_1B_1^\intercal e^{(T-\tau)A_{11}^\intercal}\text{d}\tau$ is the reachability Gramian from $0$ to $T$. Translated in the original coordinates, this discussion ensures that $x$ can be transferred from any $x_0\in\es{R}(\Sigma)$ to any $x_f\in\es{R}(\Sigma)$ in arbitrary finite time $T>0$ and in such a way that both $x(t)\in\es{R}(\Sigma)$ and $y(t)=\0$ hold for all $t\in[0,T]$. Input $u:t\mapsto F T_a \phi_a (t)+Lw(t)$ achieves such a result. Here, $F\in\underline F(\mathcal V(\Sigma))$ holds, $\phi_a$ satisfies $\dot \phi_a = A_{11} \phi_a+B_1w$ and $w$ is given by \eqref{eq:wtrans} where $\phi_{a,f}$ and $\phi_{a,0}$ are such $x_f=T_a\phi_{1,f}$ and $x_0=T_a \phi_{a,0}$ hold. From this expression, it comes out that $u$ is continuous on $[0,T]$. Indeed, $\phi_a$ inherits continuity from $w$ and, together, they prove that $u$ enjoys this property as well.

\subsection{Linear input and state constraints \label{ap:geomCont}}

Consider now the case where following constraints is imposed on the system for all $t\in\real_{\geq 0}$ :
\begin{equation*}
    (u(t),x(t))\in\es{U}\times\es{X},
\end{equation*}
where $\es U \subseteq \real^m$ and $\es X\subseteq \real^n$ are given \emph{linear} sets.

Let $R$ denote the insertion of $\es{U}$ in $\real^m$. Define
$
    B_{\es{U}} \coloneqq  B\mid \mathcal U=BR,
$
the domain restriction of $B$ to $\es U$. 
Pick any $x_0\in\es{X}_c$ and any $(u,x,y)$ in $\fs{Q}(x_0)$. In this case, state trajectory $x$ lies in the largest set contained in $\es{X}$ which can be made invariant by way of some input $u\in\fs{U}$. This set inherits linearity property of $\es{X}$ and corresponds to $\es{V}^*(\es{X},A,B_{\es{U}})$ (or $\es{V}^*(\es{X})$ for short), the largest $(A,B_{\es{U}})$-controlled invariant subspace contained in $\es{X}$.
The set of input $u$ producing such a dynamics can be parametrized by way of any matrix $F$ in $\underline F(\es{V}^*(\es{X}))$, the set of friend of $\es{V}^*(\es{X})$.
Indeed, in this context, \cite[Th. 4.3 and Th. 4.5]{Trentelman2001} apply and ensures that $u$ produces $x$ in $\fs{X}$ iff $x(0)$ belongs to $\es{V}^*(\es{X})$ and $u$ equals $R(Fx+Lw)$ for some signal $w$ of appropriate dimension. Here, $L$ is an injective 
linear map such that
\begin{equation}
    \im L = B_{\es{U}}^{-1} \es{V}^*(\es{X},A,B_{\es{U}}).
\end{equation}
Using this expression of $u$ transforms quadruple $\Sigma$ into $(A+B_{\es{U}} F , B_{\es{U}} L , C+D_{\es{U}}F , D_{\es{U}}L)$ where $D_{\es{U}} \coloneqq  D\mid \mathcal U=DR$ is the domain restriction of $D$ to $\es U$.
Let $\Sigma_F$ be the restriction of dynamics induced by this new quadruple to $\es{V}^*(\es{X})$, i.e. quadruple  $(A_F,B_F,C_F,D_F)$ of $\Sigma_F$ satisfies $A_F\coloneqq (A+B_{\es{U}}F)\mid \es{V}^*(\es{X})$, $B_F\coloneqq \es{V}^*(\es{X})\mid (B_{\es{U}} L)$, $C_F\coloneqq (C+D_{\es{U}}F)\mid \es{V}^*(\es{X})$ and $D_F\coloneqq D_{\es{U}}L$.

\begin{exmp}[When $\es{V}^*(\es{X})$ is a proper subet of $\mathcal X$]
    Let us clarify that $\es{V}^*(\es{X})$ can be strictly included in $\mathcal X$, so that $\dim{\es{V}^*(\es{X})}<\dim{\es{X}}$. As an example, consider following dynamical system
    \begin{align*}
        \dot x = \begin{bmatrix} 0 & -1 &  0 \\ 1 & 0 & 0 \\ 0 & 0 & 1 \end{bmatrix} x + \begin{bmatrix} 0 \\ 0 \\ 1 \end{bmatrix}u
    \end{align*}
    associated with constraints $\es{U}=\real$ and $\es{X}=\{0\}\oplus\real^2$, translating the fact that first state is enforced to be zero. In this case, input cannot modify first two states, whose dynamics are coupled. As a result, second state must be identically null and $\es{V}^*(\es{X})$ reduces to $\{0\}^2\oplus\real$.
\end{exmp}

Let $T$ denote insertion of $\es{V}^*(\es{X})$ in $\real^n$. We have just proved that (i) $\fs{Q}(x_0)$ is non empty iff $x_0$ belongs to $\es{V}^*(\es{X})$, i.e. $x_0=T\eta_0$ for some $\eta_0$, and (ii) $(u,x,y)$ belongs to $\fs{Q} (T\eta_0)$ iff there exists $(w,\eta,\varphi)\in\fs{Q}_F(\eta_{0})$. Here, set $\fs{Q}_F(\eta_{0})$ gathers all input $w$, state $\eta$ and output $\varphi$ trajectories originating from initial condition $\eta(0)=\eta_0$ and compatible with system $\Sigma_F$. More precisely, if $l>0$, then the following mapping is surjective for all $\underline F(\es{V}^*(\es{X}))$ and $\eta_0$:
\begin{align*} 
    E(F,\eta_0;\cdot):\fs{Q}_F(\eta_0) & \rightarrow  \fs{Q} (T \eta_0)      \\
    (w,\eta,\varphi)                   & \mapsto (RLw+RFT\eta,T\eta,\varphi)
\end{align*}

\begin{rem}[About $\es{X}_c$ and $\fs{U}_a(x_0)$]
    Note that $\es{X}_c$ is nothing but $\es{V}^*(\es{X})$. Furthermore, if $l>0$ and $x_0\in\es{X}_c\coloneqq  \{x_0\in\real^n: \card{\fs{W} (x_0) }\geq 1\} \subseteq \mathcal X$, then $\fs{U}_a(x_0) \coloneqq  \{u\in\fs{U} \mid \exists y:(u,y)\in\fs{W}(x_0)\}$ reads
    \begin{multline*}
        \{RLw+RFT\eta \mid \\
        RLw\in\fs{U}_\Sigma,\;\dot\eta=A_F\eta+B_Fw,\;T\eta(0)=x_0\},
    \end{multline*}
    for any $F\in\underline F(\es{V}^*(\es{X}) )$.
\end{rem}

}\fi

\section{Incremental analysis\label{ap:incAn}}

This appendix aims analyzing input-to-output mapping associated with $\mathfrak{S}$ by adopting an incremental view point: Inputs pair $(u_1,u_2)$ are described via $(u_1,\tilde u)$ where $\tilde u=u_2-u_1$ so that $u_2=u_1+\tilde u$ can be expressed \emph{relatively} to $u_1$.

\ifext{\color{blue} Instrumental are the following well-known relationships which hold for all $x_0,\tilde x_0\in \real^n$, $u,\tilde u\in\fs{U}_\Sigma$ and $\alpha\in\real$, due to linearity of \eqref{eq:sys}:
    \begin{subequations}
        \begin{align}\label{eq:lin1}
            \H_x[x_{0};u]+\H_x[\tilde x_{0};\tilde u] & = \H_x[x_{0}+\tilde x_{0};u+\tilde u], \\ \label{eq:lin2}
            \H_x[\alpha x_{0};\alpha u]               & = \alpha  \H_x[x_{0};u].
        \end{align}
    \end{subequations}
    The same two properties also hold for $\H$ in place of~$\H_x$.
}\fi

\subsection{Incremental characterization of \singy{}}



Given a pair $(x_0,y)\in\es{X}\times\fs{Y}$.
\ifext{\color{blue}
    Recall that set $\mathfrak{H}^{-1}[x_0;y]$, defined in \eqref{eq:Hinv}, gathers all admissible input $u$ giving rise to output $y$ when state is initialized at $x_0$.}
\else
Let $\mathfrak{H}^{-1}[x_0;y] \coloneqq\{u\in\fs{U}_a(x_0) : y=\mathfrak{H}[x_0;u]\}$ refer to preimage of $y\in \fs{Y}$ by $\mathfrak{H}[x_0;\cdot]$.
\fi
By definition, $\mathfrak{H}^{-1}[x_0;y]$ is non empty iff $(x_0,y)\in\fs{A}$ holds.

Pair $(x_0,y)$ is \sing{} iff $\card{\mathfrak{H}^{-1}[x_0;y]}\geq 2$ holds.
In order to derive conditions under which this inequality holds, let us describe $\mathfrak{H}^{-1}[x_0;y]$ \emph{relatively} to some element $u$ of this set. Specifically, given any $u\in\mathfrak{H}^{-1}[x_0;y]$, define $\widetilde{\fs{U}}(x_0,u)$ as follows:
\begin{equation} \label{eq:eqCDinc}
    \widetilde{\fs{U}}(x_0,u) \coloneqq  \mathfrak{H}^{-1}[x_0;y] - u.
\end{equation}
Note that $\widetilde{\fs{U}}$ is parametrized by $(x_0,u)\in \mathcal X \times \fs{U}$, but not by $y$ which can be uniquely\footnote{Recall that signals in $\fs{U}\subseteq \fs{U}_\Sigma$ are piecewise continuous.} recovered from $x_0$ and $u$. Putting \eqref{eq:eqCDinc} in the other way around makes it easier to grasp: Equality $\mathfrak{H}^{-1}[x_0;y] = \widetilde{\fs{U}}(x_0,u)+ u$ simply says that $\widetilde{\fs{U}}(x_0,u)$ gathers signals $\tilde u:\real_{\geq 0}\rightarrow \real^m$  such that $u+\tilde u$ belongs to $\mathfrak{H}^{-1}[x_0;y]$ or, equivalently, satisfies $(u+\tilde u,y)\in\fs{W}(x_0)$. In this case, both inputs $u$ and $u+\tilde u$ comply with the constraints and produce identical output $y$ for the same initial state $x_0$.

Observing that $\N$ trivially belongs to $\widetilde{\fs{U}}(x_0,u)$, this discussion leads to a new characterization of \singy{}.

\begin{prop} \label{prop:singD}
    Pair $(x_0,y)\in \fs{A}$ is \sing{} iff the following condition holds
    \begin{equation} \label{eq:singD}
        \forall u\in\mathfrak{H}^{-1}[x_0;y]: \card{\widetilde{\fs{U}}(x_0,u)} \geq 2.
    \end{equation}
    Further, set $\widetilde{\fs{U}}(x_0,u)$ reads
    \begin{multline} \label{eq:Ddef}
        \widetilde{\fs{U}}(x_0,u)=\{\tilde u:\real_{\geq 0}\rightarrow \real^m :\\
        (\tilde u,\N)\in \fs{W} (x_0)-(u,\H[x_0;u]) \}.
    \end{multline}
\end{prop}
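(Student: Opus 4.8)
The plan is to obtain both assertions by unwinding the definition \eqref{eq:eqCDinc} of $\widetilde{\fs{U}}(x_0,u)$, together with the observation recalled just above the statement that $(x_0,y)$ is IR iff $\card{\mathfrak{H}^{-1}[x_0;y]}\geq 2$.

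I would first establish the formula \eqref{eq:Ddef}. Fix $(x_0,y)\in\fs{A}$ and $u\in\mathfrak{H}^{-1}[x_0;y]$, so that $y=\H[x_0;u]$ and $(u,y)\in\fs{W}(x_0)$. By \eqref{eq:eqCDinc} and the convention $S-a=\{s-a:s\in S\}$, a signal $\tilde u$ belongs to $\widetilde{\fs{U}}(x_0,u)$ iff $\tilde u=\hat u-u$ for some $\hat u\in\mathfrak{H}^{-1}[x_0;y]$, i.e. iff $(u+\tilde u,y)\in\fs{W}(x_0)$. Subtracting the pair $(u,\H[x_0;u])$ from both entries, this last membership is — again by the very definition of the difference of a set and a point — exactly $(\tilde u,\0)\in\fs{W}(x_0)-(u,\H[x_0;u])$; and any such $\tilde u$, being a difference of two $\real^m$-valued signals, maps $\real_{\geq 0}$ into $\real^m$. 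This is precisely \eqref{eq:Ddef}.

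It then remains to prove the equivalence \eqref{eq:singD}. The decisive observation is that, for every $u\in\mathfrak{H}^{-1}[x_0;y]$, the translation $\hat u\mapsto\hat u-u$ is a bijection of $\mathfrak{H}^{-1}[x_0;y]$ onto $\widetilde{\fs{U}}(x_0,u)$ which preserves the paper's function-space cardinality: two signals $\hat u_1,\hat u_2$ differ on a set of strictly positive measure iff $\hat u_1-u$ and $\hat u_2-u$ do. Hence $\card{\widetilde{\fs{U}}(x_0,u)}\geq 2$ iff $\card{\mathfrak{H}^{-1}[x_0;y]}\geq 2$, for each such $u$. Since $(x_0,y)\in\fs{A}$, the set $\mathfrak{H}^{-1}[x_0;y]$ is non-empty, so the universally quantified condition in \eqref{eq:singD} is not vacuous, and it is therefore equivalent to $\card{\mathfrak{H}^{-1}[x_0;y]}\geq 2$, i.e. to $(x_0,y)$ being IR. Equivalently, since $\0$ always lies in $\widetilde{\fs{U}}(x_0,u)$, \eqref{eq:singD} amounts to the existence, for some — hence every — $u\in\mathfrak{H}^{-1}[x_0;y]$, of a non-zero element of $\widetilde{\fs{U}}(x_0,u)$.

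There is no genuinely hard step here: the statement is essentially a reformulation of the definitions. The only points that require care are that the measure-theoretic notion of distinct signals underlying $\card{\cdot}\geq 2$ is invariant under the fixed translation by $u$, and that admissibility of $(x_0,y)$ — that is, non-emptiness of $\mathfrak{H}^{-1}[x_0;y]$ — is exactly what prevents \eqref{eq:singD} from being vacuously satisfied by a pair that is not IR.
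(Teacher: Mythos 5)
Your proof is correct and follows essentially the same route as the paper's: both rest on the fact that the translation $\hat u\mapsto\hat u-u$ identifies $\mathfrak{H}^{-1}[x_0;y]$ with $\widetilde{\fs{U}}(x_0,u)$ (preserving the measure-based cardinality), and both obtain \eqref{eq:Ddef} by rewriting the membership $(u+\tilde u,y)\in\fs{W}(x_0)$ as $(\tilde u,\0)\in\fs{W}(x_0)-(u,\H[x_0;u])$. Your added remarks on translation-invariance of the cardinality notion and on non-vacuousness via admissibility are accurate refinements of the same argument.
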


\begin{proof}
    Pick any $u$ in non empty set $\mathfrak{H}^{-1}[x_0;y]$ parametrized by $(x_0,y)\in \fs{A}$. Equality between $\card{\mathfrak{H}^{-1}[x_0;y]}$ and $\card{\widetilde{\fs{U}}(x_0,u)}$ follows from \eqref{eq:eqCDinc}, so that \singy{} of $(x_0,y)$ is equivalent to $\card{\widetilde{\fs{U}}(x_0,u)} \geq 2$. Input $u$ being  arbitrary, this proves~\eqref{eq:singD}.


    To prove \eqref{eq:Ddef}, 	observe that $(u+\tilde u,y)=(u,y)+(\tilde u,\N)$ belongs to $\fs{W}(x_0)$ for all $\tilde u\in\widetilde{\fs{U}}(x_0,u)$, which is equivalent to saying that $\tilde u$ satisfies $(\tilde u,\N)\in\fs{W}(x_0)-(u,y)$. Since $y=\H[x_0;u]$, this proves \eqref{eq:Ddef}.
\end{proof}

Consider the following technical result.

\begin{lem}\label{lem:Qtildeexpr}
    Given arbitrary $x_{0,1},x_{0,2}\in\es X_c$ and $(u,x,y)\in\fs Q(x_{0,2})$, set $\fs{Q}(x_{0,1})-(u,x,y)$ reads
    \begin{multline} \label{eq:Qtildeexpr}
        \{(\tilde{u},\tilde x, \tilde{y})\in\fs{U}\times\fs{X}\times\fs{Y}-(u,x,y):\\
        \H_x[x_{0,1}-x_{0,2};\tilde{u}]=\tilde x,\H[x_{0,1}-x_{0,2};\tilde{u}]=\tilde y\}.
    \end{multline}
\end{lem}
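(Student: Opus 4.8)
The plan is to peel the constraints apart from the dynamics by invoking the very definition \eqref{eq:Qdef}, namely $\fs Q(x_{0,1}) = \fs Q_\Sigma(x_{0,1}) \cap (\fs U \times \fs X \times \fs Y)$, and then to push the translation by $(u,x,y)$ through the intersection:
\[
\fs Q(x_{0,1}) - (u,x,y) = \bigl(\fs Q_\Sigma(x_{0,1}) - (u,x,y)\bigr) \;\cap\; \bigl(\fs U \times \fs X \times \fs Y - (u,x,y)\bigr).
\]
The second factor is literally the membership condition $(\tilde u, \tilde x, \tilde y) \in \fs U \times \fs X \times \fs Y - (u,x,y)$ appearing in \eqref{eq:Qtildeexpr}. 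So the whole proof reduces to showing that, among such triples, belonging to $\fs Q_\Sigma(x_{0,1}) - (u,x,y)$ is equivalent to the pair of relations $\H_x[x_{0,1}-x_{0,2};\tilde u] = \tilde x$ and $\H[x_{0,1}-x_{0,2};\tilde u] = \tilde y$.

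For this I would argue as follows. First, the second factor forces $\tilde u + u \in \fs U \subseteq \fs U_\Sigma$, and since $u \in \fs U_\Sigma$ and $\fs U_\Sigma$ is a linear space, $\tilde u = (\tilde u + u) - u \in \fs U_\Sigma$; this legitimizes writing $\H_x[\cdot\,;\tilde u]$ and $\H[\cdot\,;\tilde u]$ and applying the linearity identities \eqref{eq:lin1}--\eqref{eq:lin2}. Now $(\tilde u, \tilde x, \tilde y) \in \fs Q_\Sigma(x_{0,1}) - (u,x,y)$ means $(u+\tilde u, x+\tilde x, y+\tilde y) \in \fs Q_\Sigma(x_{0,1})$, i.e. $\H_x[x_{0,1}; u + \tilde u] = x + \tilde x$ and $\H[x_{0,1}; u + \tilde u] = y + \tilde y$. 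Using $(u,x,y) \in \fs Q(x_{0,2}) \subseteq \fs Q_\Sigma(x_{0,2})$, hence $\H_x[x_{0,2};u] = x$ and $\H[x_{0,2};u] = y$, and applying \eqref{eq:lin1} with initial conditions $x_{0,2}$ and $x_{0,1}-x_{0,2}$ and inputs $u$ and $\tilde u$, one gets $\H_x[x_{0,1}; u + \tilde u] = x + \H_x[x_{0,1}-x_{0,2}; \tilde u]$, so the state relation is equivalent to $\H_x[x_{0,1}-x_{0,2};\tilde u] = \tilde x$; the identical computation for $\H$ yields the output relation. Intersecting with the second factor gives exactly the right-hand side of \eqref{eq:Qtildeexpr}, and since every step is an equivalence the two sets coincide.

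I do not anticipate a genuine obstacle here: the statement is an affine (incremental) bookkeeping identity, proved by direct computation resting on linearity of \eqref{eq:sys}. The only point deserving a line of care is the domain/codomain bookkeeping --- checking that $\tilde u$ indeed lies in $\fs U_\Sigma$ (so that the unconstrained maps $\H_x[\cdot\,;\cdot]$, $\H[\cdot\,;\cdot]$ and the identities \eqref{eq:lin1}--\eqref{eq:lin2} may legitimately be invoked on it) and that the translated triples remain within $\fs U_\Sigma \times \fs X_\Sigma \times \fs Y$ --- which, as noted above, is automatic once the constraint factor $\fs U \times \fs X \times \fs Y - (u,x,y)$ is imposed.
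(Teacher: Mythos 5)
Your proof is correct and follows essentially the same route as the paper: both rest on the definition \eqref{eq:Qdef} and the linearity identities \eqref{eq:lin1}--\eqref{eq:lin2}, the only difference being that you organize the argument as a single chain of equivalences after distributing the translation over the intersection, whereas the paper verifies the two inclusions separately. Your explicit remark that $\tilde u\in\fs{U}_\Sigma$ (so the unconstrained maps may be applied to it) is a point the paper leaves implicit, but it is the same underlying computation.
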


\begin{proof}
    Let us first demonstrate that $\fs{Q}(x_{0,1})-(u,x,y)$ is included in the set defined by \eqref{eq:Qtildeexpr}. Pick arbitrary $x_{0,1},x_{0,2}\in\mathcal X_c$, $(u,x,y)\in\fs{Q}(x_{0,2})$ and $(\tilde u,\tilde x,\tilde y)\in\fs{Q}(x_{0,1})-(u,x,y)$, i.e. $(\tilde u,\tilde x,\tilde y)=(\bar u,\bar x,\bar y)-(u,x,y)$ for some $(\bar u,\bar x,\bar y)\in\fs{Q}(x_{0,1})$. Observe that
    \begin{align*}
        \H_x[x_{0,1}-x_{0,2};\tilde{u}] & =\H_x[x_{0,1}-x_{0,2};\bar u-u]                                                                    \\
                                        & \ifext{\color{blue}\overset{\eqref{eq:lin1}}{=}}\else = \fi \H_x[x_{0,1};\bar u]+\H_x[-x_{0,2};-u] \\
                                        & \ifext{\color{blue}\overset{\eqref{eq:lin2}}{=}}\else = \fi \H_x[x_{0,1};\bar u] - \H_x[x_{0,2};u] \\
                                        & = \bar x - x = \tilde x
    \end{align*}
    holds and, similarly, $\H[x_{0,1}-x_{0,2};\tilde{u}]$ equals $\tilde y$. Then, note that $\fs{Q}(x_{0,1})-(u,x,y)\subseteq \fs{U}\times\fs{X}\times\fs{Y}-(u,x,y)$ follows from \eqref{eq:Qdef}.

    Conversely, consider arbitrary $x_{0,1},x_{0,2}\in\es{X}_c$ and triple $(u,x,y)\in\fs{Q}(x_{0,2})$ and pick any $(\tilde u,\tilde x,\tilde y)$ in the set defined in \eqref{eq:Qtildeexpr}. Then, let us prove that $(\tilde u,\tilde x,\tilde y)$ also belongs to $\fs{Q}(x_{0,1})-(u,x,y)$. Define $(\bar u,\bar x,\bar y)\coloneqq (\tilde u,\tilde x,\tilde y)+(u,x,y)$ so that $(\bar u,\bar x,\bar y)\in\fs{U}\times\fs{X}\times\fs{Y}$ holds. Then, observe that $\H_x[x_{0,1};\bar u]=\H_x[x_{0,2};u]+\H_x[x_{0,1}-x_{0,2};\tilde{u}]=x+\tilde x=\bar x$ and, in an analogous way, $\H[x_{0,1};\bar u]=\bar y$. This proves that $(\bar u,\bar x,\bar y)\in\fs{Q}(x_{0,1})$ so that $(\tilde u,\tilde x,\tilde y)\in \fs{Q}(x_{0,1})-(u,x,y)$.
    %
\end{proof}

Particularizing this result for \mbox{$x_0=x_{0,1}=x_{0,2}$} allows to conclude that $\fs{W}(x_{0})-(u,y)$ reads
\begin{multline*}
    \{(\tilde{u},\tilde{y})\in\fs{U}\times\fs{Y}-(u,y):\\
    \H_x[\0;\tilde{u}]\in\fs{X}-\H_x[x_0;{u}],\;\H[\0;\tilde{u}]=\tilde y\},
\end{multline*}
for any $(u,y)\in\fs{W}(x_{0})$. This leads to a more explicit expression of $\widetilde{\fs{U}}(x_0,u)$ than \eqref{eq:Ddef}.

\begin{cor}\label{cor:Wtilde}
    Given any $x_0\in\es{X}_c$ and $u\in\fs{U}_a(x_0)$. Set $\widetilde{\fs{U}}(x_0,u)$ reads
    \begin{equation*}
        \{\tilde{u}\in\fs{U}-u:
        \H_x[\0;\tilde{u}]\in\fs{X}-\H_x[x_0;{u}],\; \H[\0;\tilde{u}]=\N\}.
    \end{equation*}
\end{cor}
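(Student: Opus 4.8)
The plan is to read off Cor.~\ref{cor:Wtilde} from the incremental characterization \eqref{eq:Ddef} of Prop.~\ref{prop:singD} together with the diagonal specialization of Lem.~\ref{lem:Qtildeexpr}. First I would fix $x_0\in\es{X}_c$ and $u\in\fs{U}_a(x_0)$, and set $y\coloneqq\H[x_0;u]$ and $x\coloneqq\H_x[x_0;u]$, so that $(u,x,y)\in\fs{Q}(x_0)$ and $(u,y)\in\fs{W}(x_0)$. By \eqref{eq:Ddef}, a signal $\tilde u$ lies in $\widetilde{\fs{U}}(x_0,u)$ iff $(\tilde u,\N)\in\fs{W}(x_0)-(u,y)$; hence it suffices to produce an explicit description of $\fs{W}(x_0)-(u,y)$ and then restrict to the slice $\tilde y=\N$.

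For that description I would invoke Lem.~\ref{lem:Qtildeexpr} with $x_{0,1}=x_{0,2}=x_0$: since $x_{0,1}-x_{0,2}=\N$, the set $\fs{Q}(x_0)-(u,x,y)$ consists exactly of the triples $(\tilde u,\tilde x,\tilde y)\in\fs{U}\times\fs{X}\times\fs{Y}-(u,x,y)$ with $\H_x[\N;\tilde u]=\tilde x$ and $\H[\N;\tilde u]=\tilde y$. Projecting onto the $(\tilde u,\tilde y)$-pair --- which is precisely how $\fs{W}(x_0)$ is obtained from $\fs{Q}(x_0)$ --- eliminates $\tilde x$, its only residual trace being the inclusion $\H_x[\N;\tilde u]=\tilde x\in\fs{X}-x$ coming from $\tilde x\in\fs{X}-x$. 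This reproduces the formula displayed just after Lem.~\ref{lem:Qtildeexpr}, namely $\fs{W}(x_0)-(u,y)=\{(\tilde u,\tilde y)\in\fs{U}\times\fs{Y}-(u,y):\H_x[\N;\tilde u]\in\fs{X}-\H_x[x_0;u],\ \H[\N;\tilde u]=\tilde y\}$.

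It then remains to substitute $\tilde y=\N$. The condition $\tilde u\in\fs{U}-u$ persists; the membership $\N\in\fs{Y}-y$ holds automatically since $\fs{Y}$ is a linear space containing $\N$; the relation $\H[\N;\tilde u]=\tilde y$ becomes $\H[\N;\tilde u]=\N$; and the state inclusion $\H_x[\N;\tilde u]\in\fs{X}-\H_x[x_0;u]$ is carried over verbatim. Collecting these yields exactly the announced expression for $\widetilde{\fs{U}}(x_0,u)$. I do not anticipate any genuine obstacle: the statement is essentially bookkeeping on top of Lem.~\ref{lem:Qtildeexpr} and \eqref{eq:Ddef}. The one point deserving care is the passage from triples in $\fs{Q}(x_0)$ to pairs in $\fs{W}(x_0)$, i.e. checking that the $\tilde x$-component is entirely determined by $\tilde u$ via $\H_x[\N;\cdot]$ and therefore leaves behind only the inclusion $\H_x[\N;\tilde u]\in\fs{X}-\H_x[x_0;u]$; for a consistency check one may also note $\N\in\widetilde{\fs{U}}(x_0,u)$, which is compatible with the formula since $\H_x[x_0;u]\in\fs{X}$ and $\H[\N;\N]=\N$.
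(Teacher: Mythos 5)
Your proposal is correct and follows essentially the same route as the paper: specialize Lem.~\ref{lem:Qtildeexpr} to $x_{0,1}=x_{0,2}=x_0$, project from triples to pairs to obtain the displayed description of $\fs{W}(x_0)-(u,y)$, then restrict to $\tilde y=\N$ via \eqref{eq:Ddef}. The bookkeeping points you flag (the $\tilde x$-component being determined by $\H_x[\0;\tilde u]$, and $\fs{Y}-y=\fs{Y}$) are exactly the implicit steps the paper relies on.
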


Let us summarize results of this section. Given $x_0\in\es{X}_c$ and $(u,x,y)\in\fs{Q}(x_0)$, \emph{incremental} input $u+\tilde u$ is admissible for $x_0$ and leads to the same output $y$ iff $\tilde u\in\widetilde{\fs{U}}(x_0,u)$. From Cor.~\ref{cor:Wtilde}, this means that (i) $\tilde u\in\fs{U}_\Sigma$, (ii) both $\tilde u(t)\in\es{U}-u(t)$ and $\tilde x(t)\coloneqq\H_x[\0;\tilde u](t)\in\es{X}-x(t)$ hold for all $t\in\real_{\geq 0}$ and (iii) $\tilde y\coloneqq\H[\0;\tilde u]=\N$. An essential point here is that $\tilde x$ and $\tilde y$ derive from $\tilde{u}$ and \emph{zero initial condition}, i.e. $(\tilde{u},\tilde{x},\tilde{y})\in\fs{Q}_\Sigma(\0)$.

\ifext{\color{blue}
    \begin{rem}
        Just like $v_1,v_2\in\es{A}$ does not imply $v_1-v_2\in\es{A}$ for arbitrary (non linear) set $\es{A}$, let us stress that $\tilde u\in\widetilde{\fs{U}}(x_0,u)$ does not imply that $\tilde u$ is an admissible input for $x_0$ in general. 
    \end{rem}
}\fi

\subsection{Incremental characterization of the proposed taxonomy}

By making use of set $\widetilde{\fs{U}}(x_0,u)$, next lemma offers a new characterization of the taxonomy.

\begin{lem}\label{lem:kindchar}
    Pair $(x_0,y)\in \fs{S}$ is \sing{} of
    \begin{itemize}
        \item the \emph{1st kind} iff it holds
              \begin{multline} \label{eq:kind1car}
                  \forall u\in\mathfrak{H}^{-1}[x_0;y],\; \forall \tilde u\in \widetilde{\fs{U}}(x_0,u) \setminus \{\N \} : \\
                  \H_x[\0;\tilde u] = \N
              \end{multline}
        \item the \emph{2nd kind} iff it holds
              \begin{multline} \label{eq:kind2car}
                  \forall u\in\mathfrak{H}^{-1}[x_0;y],\; \forall \tilde u\in \widetilde{\fs{U}}(x_0,u) \setminus \{\N \}: \\
                  \H_x[\0;\tilde u] \neq \N
              \end{multline}
        \item the \emph{3rd kind} iff neither \eqref{eq:kind1car} nor \eqref{eq:kind2car} is valid.
    \end{itemize}
\end{lem}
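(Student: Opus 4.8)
\emph{Proof strategy.} The plan is to translate the trichotomy defining \sing{} pairs of the $k$-th kind through the incremental description of $\mathfrak{H}^{-1}[x_0;y]$ developed in this appendix. I would fix a \sing{} pair $(x_0,y)\in\fs{S}$ and first record how the definition reads for it: the statements ``\eqref{eq:kind1prop} holds over all triples of $\fs{Q}(x_0)$'' and ``\eqref{eq:kind2prop} holds over all triples of $\fs{Q}(x_0)$'' are exactly the negations of \eqref{eq:kind2} and of \eqref{eq:kind1}, respectively, and since $(x_0,y)$ is \sing{} at least one of \eqref{eq:kind1}, \eqref{eq:kind2} holds; hence $(x_0,y)$ is of the 1st kind iff \eqref{eq:kind1prop} holds over all triples of $\fs{Q}(x_0)$, of the 2nd kind iff \eqref{eq:kind2prop} does, and of the 3rd kind iff neither does. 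Moreover both implications are vacuously true on pairs of triples with $u_1=u_2$, with $y_1\neq y$, or with $y_2\neq y$, so it suffices to control them over pairs $(u_1,\H_x[x_0;u_1],y),(u_2,\H_x[x_0;u_2],y)\in\fs{Q}(x_0)$ with $u_1\neq u_2$, equivalently over pairs of distinct elements $u_1,u_2$ of $\mathfrak{H}^{-1}[x_0;y]$.

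The core step is to build a dictionary between such pairs $(u_1,u_2)$ and pairs $(u,\tilde u)$ with $u\in\mathfrak{H}^{-1}[x_0;y]$ and $\tilde u\in\widetilde{\fs{U}}(x_0,u)\setminus\{\0\}$. In one direction, from distinct $u_1,u_2\in\mathfrak{H}^{-1}[x_0;y]$ I would set $u\coloneqq u_1$, $\tilde u\coloneqq u_2-u_1$, so that $\tilde u\in\mathfrak{H}^{-1}[x_0;y]-u_1=\widetilde{\fs{U}}(x_0,u_1)$ by \eqref{eq:eqCDinc} and $\tilde u\neq\0$. In the other, from $u\in\mathfrak{H}^{-1}[x_0;y]$ and $\tilde u\in\widetilde{\fs{U}}(x_0,u)\setminus\{\0\}$ the signal $u+\tilde u$ again lies in $\mathfrak{H}^{-1}[x_0;y]$ and is distinct from $u$. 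In either passage, writing $x_i\coloneqq\H_x[x_0;u_i]$, linearity of the input-to-state map (relation \eqref{eq:lin1} with a null added initial condition) gives $x_2=\H_x[x_0;u_1+\tilde u]=\H_x[x_0;u_1]+\H_x[\0;\tilde u]=x_1+\H_x[\0;\tilde u]$, whence $x_1=x_2$ if and only if $\H_x[\0;\tilde u]=\0$. I would also note that, $(x_0,y)$ being \sing{}, Prop.~\ref{prop:singD} forces $\card{\widetilde{\fs{U}}(x_0,u)}\geq 2$ for every $u\in\mathfrak{H}^{-1}[x_0;y]$, so that $\widetilde{\fs{U}}(x_0,u)\setminus\{\0\}$ is never empty and \eqref{eq:kind1car}, \eqref{eq:kind2car} are genuine rather than vacuous conditions.

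Putting these together finishes the proof. Through the dictionary, ``\eqref{eq:kind1prop} holds over all triples of $\fs{Q}(x_0)$'' is equivalent to ``$\H_x[\0;\tilde u]=\0$ for all $u\in\mathfrak{H}^{-1}[x_0;y]$ and all $\tilde u\in\widetilde{\fs{U}}(x_0,u)\setminus\{\0\}$'', that is, to \eqref{eq:kind1car}; symmetrically \eqref{eq:kind2prop} is equivalent to \eqref{eq:kind2car}; and ``neither \eqref{eq:kind1prop} nor \eqref{eq:kind2prop}'' is equivalent to ``neither \eqref{eq:kind1car} nor \eqref{eq:kind2car}''. Combined with the reading of the definition recalled in the first paragraph, this yields the three equivalences. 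I expect the only delicate point --- and the main, rather mild, obstacle --- to be the quantifier bookkeeping in the dictionary: one has to use that a triple of $\fs{Q}(x_0)$ is determined by its input (because $x=\H_x[x_0;u]$), so that ranging over ordered pairs of distinct compatible triples producing $y$ coincides with ranging over ordered pairs of distinct members of $\mathfrak{H}^{-1}[x_0;y]$, and that $\widetilde{\fs{U}}(x_0,u)+u=\mathfrak{H}^{-1}[x_0;y]$ makes both passages of the dictionary surjective onto the relevant index set.
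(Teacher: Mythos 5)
Your proposal is correct and follows essentially the same route as the paper: it rewrites the definition of the $k$-th kind in incremental form via the bijection $u_2\leftrightarrow u+\tilde u$ given by \eqref{eq:eqCDinc}, and uses linearity of the unconstrained map to identify $x_2-x_1$ with $\H_x[\0;\tilde u]$. The only cosmetic difference is that you invoke linearity directly where the paper packages the same fact as Lem.~\ref{lem:Qtildeexpr} (the incremental triple having zero initial state), so no substantive gap remains.
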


\begin{proof}
    By definition, $(x_0,y)$ is \sing{} of the 1st kind iff $u\neq u_2$ and $y = y_2$ imply $x= x_2$ for all $(u,x,y),(u_2,x_2,y_2)\in\fs{Q}(x_0)$. Defining $(\tilde u,\tilde x,\tilde y)\coloneqq (u_2,x_2,y_2)-(u,x,y)$, this is equivalent to saying that $\tilde u \neq \0$ and $\tilde y = \0$ imply $\tilde x= \0$ for all $(u,x,y)\in\fs{Q}(x_0)$ and for all $(\tilde u,\tilde x,\tilde y)\in\fs{Q}(x_0)-(u,x,y)$. Dropping state-trajectories and observing that $(\tilde u,\tilde x,\tilde y)$ are associated with zero initial state (see Lem.~\ref{lem:Qtildeexpr}), this is equivalent to next condition
    \begin{multline} \label{eq:coq}
        \forall (u,y)\in\fs{W}(x_0),\; \forall (\tilde u,\tilde y)\in\fs{W}(x_0)-(u,y):\\
        \left. \begin{array}{r} \tilde u \neq \N \\ \tilde y = \N \end{array} \right\}
        \Rightarrow \H_x[\0;\tilde u] = \N.
    \end{multline}
    Eq. \eqref{eq:Ddef} and the fact that $(u,y)\in\fs{W}(x_0)$ is equivalent to $u\in\mathfrak{H}^{-1}[x_0;y]$ prove that \eqref{eq:coq} is equivalent to \eqref{eq:kind1car}. The fact that \eqref{eq:kind2car} characterizes \singy{} of the 2nd kind can be proved in a similar way. \Singy{} of the third refers to \singy{} which is neither of the first nor of the 2nd kind.
\end{proof}

\subsection{Proof of Th.~\ref{th:singNL}\label{ap:singNL}}

\ifext{
    \begin{figure}[t]
        \centering
        \includegraphics[width=.8\columnwidth]{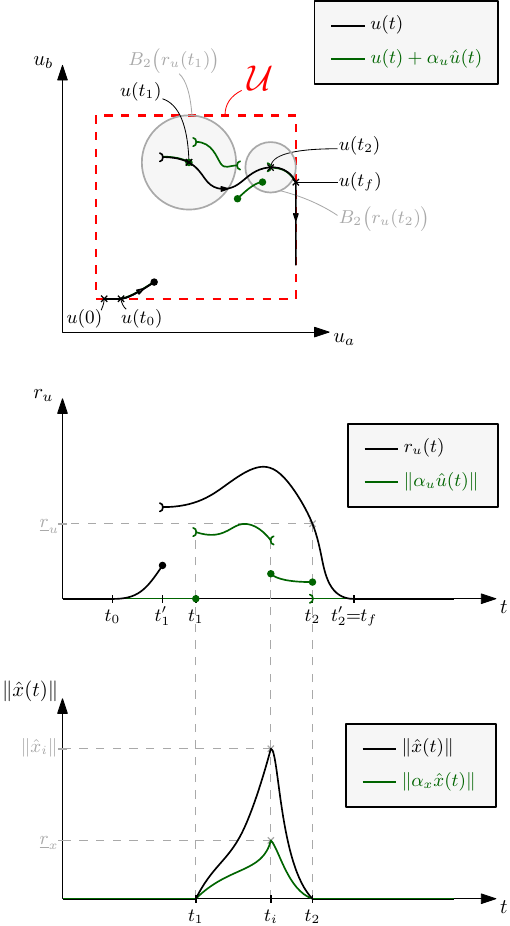}
        \caption{\color{blue}Graphical illustration of the proof of Th.~\ref{th:singNL} for $m=2$ and $\rho=0$. Components of $u$ are denoted by $u_a$ and $u_b$.} \label{fig:singNL}
    \end{figure}
}\fi

Given $x_0\in\es{X}_c$ and $(u,x,y)\in\fs{Q} (x_0)$.
By Prop.~\ref{prop:singD}, $(x_0,y)$ is \sing{} if $\card{\widetilde{\fs{U}}(x_0,u)} \geq 2$. From Cor.~\ref{cor:Wtilde}, recall that $\widetilde{\fs{U}}(x_0,u)$ gathers signals $\tilde u\in\fs{U}-u$ satisfying $\H[\0;\tilde u]=\0$ and $\H_x[\0;\tilde u]\in\fs{X}-x$. Due to the fact that $\0\in\widetilde{\fs{U}}(x_0,u)$ trivially holds, let us prove \singy{} by constructing any non zero measure signal in $\widetilde{\fs{U}}(x_0,u)$. 

\ifext{\color{blue}
    Fig.~\ref{fig:singNL} is a sketch the reader can refer to in order to ease figuring out forthcoming technical developments.
}\fi

Define closed hyper-ball $B_k(r)$ as follows:
\begin{equation*}
    B_k(r)\coloneqq \{\eta\in\real^k: \|\eta \|\leq r\}.
\end{equation*}
Eq. \eqref{eq:fenetreU} (resp. \eqref{eq:fenetreX}) is equivalent to the existence of  $r_u (t)>0$ (resp. $r_x(t)>0$) such that $u(t)+B_m(r_u (t))\subseteq \mathcal U$ (resp. $x(t)+B_n(r_x (t))\subseteq \mathcal X$). As a preliminary step, let us prove that $t\mapsto r_u(t)$ (resp. $t\mapsto r_x(t)$) admits a strictly positive lower bound $\underline r_u$ (resp. $\underline r_x$) on some interval $[t_1,t_2]\subseteq]t_0,t_f[$ satisfying $t_1<t_2$. First note that $r_u$ inherits piecewise continuity from $u$. Thus, there exists $]t_1',t_2'[\subseteq]t_0,t_f[$, with $t_1'<t_2'$ on which $r_u$ is continuous. Therefore, on any closed interval $[t_1,t_2]\subseteq ]t_1',t_2'[$ with $t_1<t_2$, function $r_u$ admits a minimum $\underline r_u$ (necessarily strictly positive) by virtue of extreme value theorem. Since $x$ is continuous on $]t_0,t_f[$, the same argument can be invoked to prove that $\underline r_x\coloneqq \textstyle\min_{t\in[t_1,t_2]}r_x (t)$ exists and is strictly positive.

Unconstrained system $\Sigma$ is IR by assumption. This implies that there exists a non zero measure input $\hat u\in\fs{U}_\Sigma$ (i) producing output $\0$ from initial state $\0$ when exerted to $\Sigma$ (i.e. $(\hat u,\hat x, \0)\in\fs{Q}_\Sigma(\0)$ for some $\hat x$) and such that (ii) $\hat u(t)=\0$ holds for all $t\in\real_{\geq 0}\setminus [t_1,t_2]$. Indeed, \cite[\characIR]{Kreiss2020} applies for $\Sigma$ and proves that $\rho=\dim{\ker{\smallmat{B \\ D}}}$ or $\dim{\es{R}(\Sigma)}$ is strictly positive. If $\rho>0$, then $\hat u$ can be selected as a continuous signal which is not identically zero and satisfies $\hat u(t)\in\ker{\smallmat{B \\ D}}$ if $t\in[t_1,t_2]$. Clearly, this signal $\hat  u$ belongs to $\fs{U}_\Sigma$ and corresponding state trajectory $\hat x$ equals $\0$. If $\rho=0$, so that $\dim{\es{R}(\Sigma)}>0$ holds, pick arbitrary $\hat x_i\in\es{R} (\Sigma)\setminus\{\0\}$ and any $t_i$ such that $t_1<t_i<t_2$. Then, there exists a piecewise continuous input transferring state from $\0$ to $\hat x_i$ and then back to the origin while maintaining output identically null \ifext{\color{blue}(see Subsection~\ref{sec:contr})}\fi. Since this motion can be performed arbitrarily fast, $\hat u$ can be selected in such a way that $\hat x(t_1)=\hat x(t_2)=\0$ and $\hat x(t_i)=\hat x_i$ hold. Note that $\hat x(t)=\0$ holds for all $t\in\real_{\geq 0}\setminus[t_1,t_2]$ because $\hat u(t)$ equals zero on the same interval. Also remark that $\hat x_i\neq \0$ implies that input $\hat u$ is not identically zero. Once again, $\hat u\in\fs{U}_\Sigma$ holds in this case.

Let us show that there exists $\alpha>0$ such that $\alpha\hat u\eqqcolon \tilde u\in\widetilde{\fs{U}}(x_0,u)$, which proves \singy{} since $\hat u$ has non zero measure. Define $\alpha_u\coloneqq  \underline r_u / b_u^+$ where $b_u^+$ is an upper bound of $t\mapsto\|\hat u(t)\|$ on compact interval $[t_1,t_2]$, which necessarily exists since $\hat u$ is piecewise continuous. Then, it holds $\alpha_u \leq r_u(t) /\|\hat u (t)\|$ and, in turn, $\alpha_u\hat u(t) \in B_m(r_u (t))$ for all $t\in[t_1,t_2]$. Bearing in mind that $B_m(r_u (t))\subseteq \mathcal U-u(t)$ is valid for all $t\in[t_1,t_2]$ and that $\hat u(t)=\0\in\es{U}-u(t)$ holds for all $t\in\real_{\geq 0}\setminus[t_1,t_2]$, this proves that $\alpha \hat u \in \fs{U}-u$  holds for all $0<\alpha\leq \alpha_u$. If $\rho=0$, then $\hat x=\0$ so that $\alpha\hat x\in\fs{X}-x$ trivially holds for all $0<\alpha\leq \alpha_u$. If $\rho>0$, continuity of $\hat x$ is ensured for any $\hat u\in\fs{U}_\Sigma$ which proves existence of strictly positive $\alpha_x\coloneqq  \underline r_x / \textstyle\max_{t\in[t_1,t_2]} \|\hat x(t) \|$. Then, similar arguments as before ensure that $\alpha_x \hat x \in \fs{X}-x$ holds for all  $0<\alpha\leq \alpha_x$. If $\rho=0$ (resp. $\rho>0$), define $\alpha\coloneqq \alpha_u>0$ (resp. $\alpha\coloneqq \min\{\alpha_u,\alpha_x\}>0$). In both cases ($\rho=0$ or $\rho>0$), observe that (i) $\H_x[\0;\alpha\tilde u]=\alpha \hat x$ holds\ifext{\color{blue} by virtue of \eqref{eq:lin2}}\fi, (ii) $\H[\0;\alpha\hat u]=\0$ is valid for similar reasons and (iii) $(\alpha\hat u,\alpha\hat x)\in(\fs{U}-u)\times(\fs{X}-x)$ holds. We have just proved that non zero measure input $\tilde u=\alpha\hat u$ belongs to $\widetilde{\fs{U}}(x_0,u)$, so that $(x_0,y)$ is \sing{}.

\subsection{Proof of Th.~\ref{th:locglob}\label{ap:locglob}}

Let us consider that {Hyp.}~\ref{hyp:linEns} is valid throughout this subsection. Th.~\ref{th:locglob} is proved by specializing previous results in this context.

Clearly, function spaces $\fs{U}$ and $\fs{X}$ inherit linearity property from $\mathcal U$ and $\mathcal X$. 
An essential consequence of this feature is given by the following lemma.
\begin{lem} \label{lem:locglob}
    Assume that {Hyp.}~\ref{hyp:linEns} holds and consider any $x_{0,1},x_{0,2}\in \mathcal X_c$. For any $(u,y)\in\fs{W}(x_{0,2})$, it holds
    \begin{equation}
        \fs{W} (x_{0,1})-(u,y) =\fs{W}(x_{0,1}-x_{0,2}) 
    \end{equation}
\end{lem}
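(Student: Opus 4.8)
The plan is to reduce the claim to the triple-level identity of Lemma~\ref{lem:Qtildeexpr} and then project onto the input--output coordinates, the whole reduction resting on the fact that, under Hyp.~\ref{hyp:linEns}, the signal spaces $\fs{U}$, $\fs{X}$ and $\fs{Y}$ are linear subspaces of their ambient function spaces.

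First I would fix the data. Since $(u,y)\in\fs{W}(x_{0,2})$, there is an $x$ with $(u,x,y)\in\fs{Q}(x_{0,2})$, hence in particular $(u,x,y)\in\fs{U}\times\fs{X}\times\fs{Y}$ by \eqref{eq:Qdef}. Because $\mathcal U$ and $\mathcal X$ are linear and $\fs{U}_\Sigma$, $\fs{X}_\Sigma$, $\fs{Y}$ are linear, the product $\fs{U}\times\fs{X}\times\fs{Y}$ is a linear space, so $\fs{U}\times\fs{X}\times\fs{Y}-(u,x,y)=\fs{U}\times\fs{X}\times\fs{Y}$. Substituting this into Lemma~\ref{lem:Qtildeexpr}, applied with the present $x_{0,1},x_{0,2}\in\mathcal X_c$ and the triple $(u,x,y)$, collapses its right-hand side to
\[
\{(\tilde u,\tilde x,\tilde y)\in\fs{U}\times\fs{X}\times\fs{Y}:\ \H_x[x_{0,1}-x_{0,2};\tilde u]=\tilde x,\ \H[x_{0,1}-x_{0,2};\tilde u]=\tilde y\},
\]
which, again by \eqref{eq:Qdef}, is exactly $\fs{Q}(x_{0,1}-x_{0,2})$. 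Hence $\fs{Q}(x_{0,1})-(u,x,y)=\fs{Q}(x_{0,1}-x_{0,2})$.

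Next I would pass from triples to pairs. Let $\pi\colon(u,x,y)\mapsto(u,y)$ denote the (linear) projection; then $\fs{W}(x_0)=\pi(\fs{Q}(x_0))$ for every $x_0$. Linearity of $\pi$ gives $\pi(\fs{Q}(x_{0,1})-(u,x,y))=\pi(\fs{Q}(x_{0,1}))-\pi(u,x,y)=\fs{W}(x_{0,1})-(u,y)$, so, combining with the previous step, $\fs{W}(x_{0,1})-(u,y)=\pi(\fs{Q}(x_{0,1}-x_{0,2}))=\fs{W}(x_{0,1}-x_{0,2})$, which is the assertion.

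The argument is essentially bookkeeping once Lemma~\ref{lem:Qtildeexpr} is in hand; I do not anticipate a genuine obstacle. The single step that must be flagged as using Hyp.~\ref{hyp:linEns} in an essential way is the translation invariance $\fs{U}\times\fs{X}\times\fs{Y}-(u,x,y)=\fs{U}\times\fs{X}\times\fs{Y}$: for arbitrary $\mathcal U,\mathcal X$ this fails, which is precisely why this lemma --- and the uniformity result (Th.~\ref{th:locglob}) it feeds into --- live only in the linear setting. One also obtains for free that $x_{0,1}-x_{0,2}\in\mathcal X_c$, since the left-hand side is nonempty (as $x_{0,1}\in\mathcal X_c$), hence so is $\fs{W}(x_{0,1}-x_{0,2})$.
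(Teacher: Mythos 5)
Your proof is correct and follows essentially the same route as the paper's: both rest on the translation invariance $\fs{U}\times\fs{X}\times\fs{Y}-(u,x,y)=\fs{U}\times\fs{X}\times\fs{Y}$ granted by Hyp.~\ref{hyp:linEns}, then apply Lemma~\ref{lem:Qtildeexpr} and pass from triples to input--output pairs. The only (harmless) difference is presentational: you make the intermediate identity $\fs{Q}(x_{0,1})-(u,x,y)=\fs{Q}(x_{0,1}-x_{0,2})$ and the projection $\pi$ explicit, whereas the paper reads the pair-level identity off \eqref{eq:Qtildeexpr} directly.
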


\begin{proof}
    Let $x$ be such that $(u,x,y)\in\fs{Q}(x_{0,2})$. Since $\fs{Q}(x_{0,2})\subset\fs{U}\times\fs{X}\times\fs{Y}$, it holds  $\fs{U}\times\fs{X}\times\fs{Y}-(u,x,y)=\fs{U}\times\fs{X}\times\fs{Y}$ under {Hyp.}~\ref{hyp:linEns}. Hence, $\fs{W} (x_{0,1})-(u,y)$ reads $\{(\tilde{u}, \tilde{y})\in\fs{U}\times\fs{Y}:\H_x[x_{0,1}-x_{0,2};\tilde{u}]\in\fs{X},\;\H[x_{0,1}-x_{0,2};\tilde u]=\tilde y\}$ (see \eqref{eq:Qtildeexpr}) which is nothing but $\fs{W}(x_{0,1}-x_{0,2})$.
\end{proof}

As a result, $\widetilde{\fs{U}}(x_0,u)$ reduces to
\begin{equation} \label{eq:Utildelin}
    \widetilde{\fs{U}}\coloneqq \{\tilde u:\real_{\geq 0}\rightarrow \real^m \mid (\tilde u,\N)\in \fs{W} (\0) \}
\end{equation}
whenever {Hyp.}~\ref{hyp:linEns} is valid (see \eqref{eq:Ddef}). This set gathers inputs which produce an identically zero output from zero initial state and comply with the constraints. As suggested by the notation, note that $\widetilde{\fs{U}}$ depends neither on $x_0$ nor on $u$ anymore.

Let us particularized Prop.~\ref{prop:singD} and Lem.~\ref{lem:kindchar} in this context.

\begin{lem} \label{lem:kindcarlin}
    Assume that {Hyp.}~\ref{hyp:linEns} holds. Pair $(x_0,y)\in\fs{A}$ is \sing{} iff it holds:
    \begin{equation} \label{eq:carlin}
        \card{\widetilde{\fs{U}} \setminus \{ \N \}}\geq 1.
    \end{equation}
    Furthermore, $(x_0,y)$ is of:
    \begin{itemize}
        \item the \emph{1st kind} iff it holds
              \begin{equation} \label{eq:kind1carlin}
                  \forall \tilde u\in \widetilde{\fs{U}}\setminus \{\N \}: \H_x[\0;\tilde u] = \N
              \end{equation}
        \item the \emph{2nd kind} iff it holds
              \begin{equation} \label{eq:kind2carlin}
                  \forall \tilde u\in \widetilde{\fs{U}} \setminus \{\N \}: \H_x[\0;\tilde u] \neq \N
              \end{equation}
        \item the \emph{3rd kind} iff neither \eqref{eq:kind1carlin} nor \eqref{eq:kind2carlin} is valid.
    \end{itemize}
\end{lem}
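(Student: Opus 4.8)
The plan is to obtain Lem.~\ref{lem:kindcarlin} as an immediate specialization of Prop.~\ref{prop:singD} and Lem.~\ref{lem:kindchar}, the sole new ingredient being that, under {Hyp.}~\ref{hyp:linEns}, the parametrized set $\widetilde{\fs{U}}(x_0,u)$ collapses onto the fixed set $\widetilde{\fs{U}}$ of \eqref{eq:Utildelin}. First I would fix an admissible pair $(x_0,y)\in\fs{A}$ together with any $u\in\mathfrak{H}^{-1}[x_0;y]$ (such $u$ exists precisely because $(x_0,y)\in\fs{A}$), and set $y=\H[x_0;u]$. Applying Lem.~\ref{lem:locglob} with $x_{0,1}=x_{0,2}=x_0$ gives $\fs{W}(x_0)-(u,y)=\fs{W}(\0)$; inserting this into \eqref{eq:Ddef} yields $\widetilde{\fs{U}}(x_0,u)=\{\tilde u:\real_{\geq 0}\rightarrow\real^m\mid(\tilde u,\N)\in\fs{W}(\0)\}=\widetilde{\fs{U}}$, a set depending neither on $x_0$ nor on $u$. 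I would also recall the trivial membership $\N\in\widetilde{\fs{U}}$.

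For the characterization of \singy{}, I would invoke Prop.~\ref{prop:singD}: $(x_0,y)\in\fs{A}$ is \sing{} iff $\card{\widetilde{\fs{U}}(x_0,u)}\geq 2$ for every $u\in\mathfrak{H}^{-1}[x_0;y]$. Since this set equals $\widetilde{\fs{U}}$ regardless of $u$ and $\mathfrak{H}^{-1}[x_0;y]\neq\varnothing$, the condition reduces to $\card{\widetilde{\fs{U}}}\geq 2$, and because $\N$ always lies in $\widetilde{\fs{U}}$, this is in turn equivalent to the existence of some $\tilde u\in\widetilde{\fs{U}}$ with $\tilde u\neq\N$, that is, to \eqref{eq:carlin}.

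For the taxonomy, I would substitute $\widetilde{\fs{U}}(x_0,u)=\widetilde{\fs{U}}$ into Lem.~\ref{lem:kindchar}. Condition \eqref{eq:kind1car} then reads ``for all $u\in\mathfrak{H}^{-1}[x_0;y]$ and all $\tilde u\in\widetilde{\fs{U}}\setminus\{\N\}$, $\H_x[\0;\tilde u]=\N$''; as the inner statement no longer involves $u$ and $\mathfrak{H}^{-1}[x_0;y]\neq\varnothing$, the outer quantifier is vacuous, leaving exactly \eqref{eq:kind1carlin}. The identical reduction turns \eqref{eq:kind2car} into \eqref{eq:kind2carlin}, and the 3rd kind is, by definition, the case in which neither \eqref{eq:kind1carlin} nor \eqref{eq:kind2carlin} holds, matching Lem.~\ref{lem:kindchar}. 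The hard part here is essentially nonexistent: the real content is packaged in Lem.~\ref{lem:locglob}, and the only points requiring care are to perform the reduction $\widetilde{\fs{U}}(x_0,u)=\widetilde{\fs{U}}$ \emph{before} dropping the ``$\forall u$'' quantifiers and not to overlook $\N\in\widetilde{\fs{U}}$ when passing from $\card{\widetilde{\fs{U}}}\geq 2$ to \eqref{eq:carlin}.
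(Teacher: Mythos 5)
Your proposal is correct and follows exactly the paper's route: the paper also obtains Lem.~\ref{lem:kindcarlin} by using Lem.~\ref{lem:locglob} (with $x_{0,1}=x_{0,2}=x_0$) and \eqref{eq:Ddef} to reduce $\widetilde{\fs{U}}(x_0,u)$ to the $(x_0,u)$-independent set $\widetilde{\fs{U}}$ of \eqref{eq:Utildelin}, and then specializes Prop.~\ref{prop:singD} and Lem.~\ref{lem:kindchar}. Your extra care with the nonemptiness of $\mathfrak{H}^{-1}[x_0;y]$ before dropping the ``$\forall u$'' quantifier and with the membership $\N\in\widetilde{\fs{U}}$ only makes explicit what the paper leaves implicit.
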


Since \eqref{eq:carlin}, \eqref{eq:kind1carlin} and \eqref{eq:kind2carlin} depend neither on $x_0$ nor on $y$, previous properties are independent of the choice of $(x_0,y)$, provided that $y$ is admissible for $x_0$ so that $\mathfrak{H}^{-1}[x_0;y]$ is non empty. Statement of Th.~\ref{th:locglob} follows immediately.

\ifext{\color{blue}
    \begin{rem}[$\fs{W} (x_0)$ and $\mathfrak{H}^{-1}{[}x_0;y{]}$ are affine]
        Under {Hyp.}~\ref{hyp:linEns}, it can easily be checked that $\fs{Q} (\0)$ and, in turn, $\fs{W} (\0)$ are linear spaces. In view of Lem.~\ref{lem:locglob}, this means that $\fs{Q} (x_0)$ and, in turn, $\fs{W} (x_0)$ are affine spaces, for all $x_0\in\es{X}$. Indeed, set $\fs{Q} (x_0)$ can be written as $\fs{Q} (\0)+(u,x,y)$ for any $(u,x,y)\in\fs{Q} (x_0)$. It also follows that $\widetilde{\fs{U}}$ is linear, so that $\mathfrak{H}^{-1}[x_0;y]$ is affine since this last set equals $\widetilde{\fs{U}} + u$ for all $u\in\mathfrak{H}^{-1}[x_0;y]$ (see \eqref{eq:eqCDinc}).
    \end{rem}

    \begin{rem}
        Note that Lem.~\ref{lem:Ebij} allows to construct an \emph{admissible} input $\tilde u$ with $\mathfrak S$ enjoying properties analoguous to that of $w$: Denoting $E(F,\0;(w,\eta,\0))$ by $(\tilde u,\tilde x,\tilde y)$, it follows that $\tilde u=RLw+RFT\eta$ is admissible and produces $\tilde y=\0$ for $x_0=\0$, i.e. $\tilde u\in\widetilde{\fs{U}}$ (see \eqref{eq:Utildelin}). Existence of input $\tilde u$ has crucial consequence since it proves IR. 
    \end{rem}

    \begin{rem}
        IR of $\mathfrak S$ can be alternatively qualified by means of sets $\fs{G} \coloneqq \fs{Q} (\0) \cap (\fs{U}\times \fs{X}\times \{\N\} )$ and $\fs{G}_0 \coloneqq \fs{Q} (\0) \cap (\fs{U} \times \{\N\} \times \{\N\} )$. Indeed, Lem.~\ref{lem:kindcarlin} proves that $\mathfrak S$ is IR (i) of the 1st kind if $\fs{G}  \subseteq \fs{G}_0$, (ii) of the 2nd kind if $\fs{G} \cap \fs{G}_0 = \{\N\}$ and (iii) of the 3rd kind if none of those two relationships is valid.
    \end{rem}
}\fi

\section{Proof of Cor.~\ref{cor:notFenetre}\label{ap:notFenetre}}

As a preliminary step, consider following lemma.
\begin{lem}\label{lem:Czint}
    Given open set $\es{Z}$ and signal $z:\real_{\geq 0}\rightarrow \es{Z}$. Let $\es{C}_z\neq\varnothing$ be the largest open subset of $\real_{\geq 0}$ where $z$ is continuous. Then, $\mathcal Q_z\coloneqq\{t\in\es{C}_z:z(t)\in\es{Z}\}$ is open.
\end{lem}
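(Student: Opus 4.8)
The plan is to prove that $\mathcal Q_z$ is open by verifying it is a (relatively) open neighbourhood of each of its points, relying only on the \emph{local} continuity of $z$ on $\es{C}_z$ and on the openness of $\es{Z}$. Here ``open'' is understood with respect to the subspace topology of $\real_{\geq 0}$, which is the convention already in force for the set $\es{C}$ appearing in Cor.~\ref{cor:notFenetre}; making this explicit is the first thing I would do.

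Then I would fix an arbitrary $t^*\in\mathcal Q_z$, so that $t^*\in\es{C}_z$ and $z(t^*)\in\es{Z}$. Since $\es{C}_z$ is open there is $\varepsilon_1>0$ with $I_1\coloneqq\,]t^*-\varepsilon_1,\,t^*+\varepsilon_1[\,\cap\,\real_{\geq 0}\subseteq\es{C}_z$; in particular $z$ is continuous at $t^*$. Since $\es{Z}$ is open there is $r>0$ with $\{v:\|v-z(t^*)\|<r\}\subseteq\es{Z}$, and continuity of $z$ at $t^*$ then yields $\varepsilon_2>0$ such that $t\in I_2\coloneqq\,]t^*-\varepsilon_2,\,t^*+\varepsilon_2[\,\cap\,\real_{\geq 0}$ implies $\|z(t)-z(t^*)\|<r$, hence $z(t)\in\es{Z}$. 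Then every $t\in I_1\cap I_2$ satisfies both $t\in\es{C}_z$ and $z(t)\in\es{Z}$, i.e. $I_1\cap I_2\subseteq\mathcal Q_z$, and $I_1\cap I_2$ is a relatively open neighbourhood of $t^*$. As $t^*$ was arbitrary, $\mathcal Q_z$ is open.

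The statement carries no deep obstacle: it is essentially ``the preimage of an open set under a locally continuous map is locally open, intersected with the open set $\es{C}_z$''. The only points needing care are bookkeeping ones. First, the boundary case $t^*=0$: there $I_1\cap I_2$ is a half-open interval $[0,\varepsilon[$ with $\varepsilon\coloneqq\min\{\varepsilon_1,\varepsilon_2\}$, which is nonetheless open in $\real_{\geq 0}$; and when this lemma is fed into the proof of Cor.~\ref{cor:notFenetre} one should note that such a neighbourhood still contains a genuine interval $]t_0,t_f[\subseteq\real_{\geq 0}$ with $t_0<t_f$ — take $]0,\varepsilon[$ — which is the form required by Th.~\ref{th:singNL}. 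Second, a sanity remark: if $\es{Z}$ is read literally as the codomain of $z$, then $z(t)\in\es{Z}$ holds for every $t$ and $\mathcal Q_z=\es{C}_z$ is already open; the argument above is the one that matters in the intended application, where $z$ is a constrained trajectory ($u$ or $x$) and $\es{Z}$ is a proper open subregion of the constraint set, such as $\mathrm{int}(\es U)$ or $\mathrm{int}(\es X)$, so that membership $z(t)\in\es{Z}$ is exactly condition \eqref{eq:fenetreU} or \eqref{eq:fenetreX}.
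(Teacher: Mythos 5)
Your proof is correct and follows essentially the same route as the paper's: fix a point of $\mathcal Q_z$, use openness of $\es{Z}$ to get a ball around its image, and use continuity on the open set $\es{C}_z$ to shrink to an open time-neighbourhood mapped into that ball, hence contained in $\mathcal Q_z$. Your side remarks (relative topology of $\real_{\geq 0}$ at $t=0$, and the observation that the intended $\es{Z}$ in the application is $\mathrm{int}(\es U)$ or $\mathrm{int}(\es X)$ rather than the literal codomain) are accurate but do not change the argument.
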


\begin{proof}
    Pick any $t_0\in\mathcal Q_z$. Since $\es{Z}$ is open, there exists an open hyper-ball $B$ centered on $z(t_0)$ such that $B\subseteq \es{Z}$. From $t_0\in\es{C}_z$, there exists a connected open set $\mathcal T\subseteq \es{C}_z$ on which $z$ is continuous and satisfying $t_0\in\es{T}$. Observe that if $z(\mathcal T)\subset B$, then $\mathcal T\subseteq \mathcal Q_z$. Otherwise, continuity of $z$ on $\mathcal T$ ensures existence of open set $\hat{\mathcal T}\subseteq\mathcal T$ satisfying $t_0\in\hat{\es{T}}$ and such that $z(\hat{\mathcal T})\subset B$ so that $\hat{\mathcal T}\subseteq \mathcal Q_z$.
\end{proof}

Define the following sets:
\begin{align*}
    \es{T}_u & \coloneqq \{t\in\real_{\geq 0}:u(t)\in\es{U}\setminus\text{int}(\es{U})\}, \\
    \es{T}_x & \coloneqq \{t\in\real_{\geq 0}:x(t)\in\es{X}\setminus\text{int}(\es{X})\}.
\end{align*}
From discussion above Cor.~\ref{cor:notFenetre}, $\es{T}_u$ (resp. $\es{T}_u\cup\es{T}_x$) is dense in $\real_{\geq 0}$ if $\rho>0$ (resp. $\rho=0$).

Assume that $\rho>0$ holds. Then, $\es{T}_u\cap \es{C}$ is also dense in $\es{C}$, so that closure of $\es{T}_u\cap \es{C}$, denoted by $\overline{\es{T}_u\cap \es{C}}$, equals $\es{C}$ holds.
\ifext{\color{blue}
    Indeed, for all open set $\es{O}\subseteq \real_{\geq 0}$, it holds $\es{O}\cap \es{T}_u \neq \varnothing$, so that for all open set $\es{O}\subseteq \es{C}\subseteq\real_{\geq 0}$, set $\es{O}\cap \es{T}_u=(\es{O}\cap \es{T}_u) \cap \es{C}=\es{O}\cap (\es{T}_u \cap \es{C})$ is distinct from $\varnothing$.
}\fi
Furthermore, observe that $\es{C}\setminus \es{T}_u$ equals $\{t\in\es{C}:u(t)\in\text{int}(\es{U})\}$ 
which is open, by virtue of Lem.~\ref{lem:Czint}. This, in turn, proves that $\es{T}_u\cap \es{C}$ is a closed subset \emph{of} $\es{C}$. As a result, $\es{T}_u\cap \es{C}=\overline{\es{T}_u\cap \es{C}}=\es{C}$, so that \eqref{eq:notfenetreU} holds for all $t$ where $u$ is continuous.

The case $\rho=0$ can be handled similarly. It holds $\overline{(\es{T}_u\cup \es{T}_x)\cap \es{C}}=\es{C}$. Set $\es{C}\setminus (\es{T}_u\cup \es{T}_x)$ equals
\begin{multline*}
    \{t\in\es{C}:u(t)\in\text{int}(\es{U}) \wedge x(t)\in\text{int}(\es{X}) \}= \\
    \{t\in\es{C}:u(t)\in\text{int}(\es{U})\} \cap \{t\in\es{C}:x(t)\in\text{int}(\es{X})\}
\end{multline*}
which can be proved to be open by Lem.~\ref{lem:Czint} since $x$ in continuous on $\es{C}$. This yields $(\es{T}_u\cup \es{T}_x)\cap \es{C}=\overline{(\es{T}_u\cup \es{T}_x)\cap \es{C}}=\es{C}$, so that either \eqref{eq:notfenetreU} or \eqref{eq:notfenetreX} holds for all $t\in\es C$.

\end{document}